\lstdefinelanguage{pseudo}{
	morekeywords={if, then, else, while, assume, assert},
	sensitive=false,
	morecomment=[l]{//},
	morecomment=[s]{/*}{*/},
	morestring=[b]",
}
\colorlet{myred}{red!70!black}
\colorlet{myblue}{blue!70!black}
\colorlet{mywhite}{white}
\newcommand{\linkcolor}{myblue}
\newcommand{\mynewtheorem}[2]{
	\newaliascnt{#1}{dummy}
	\newtheorem{#1}[#1]{#2}
	\aliascntresetthe{#1}
	\expandafter\def\csname #1autorefname\endcsname{#2}
}
\patchcmd{\l@chapter}{1.0em}{0.7em}{}{} 
\theoremstyle{plain}
\theoremstyle{definition}
\renewcommand{\emptyset}{\text{\Large \o}}
\newcommand{\stemt}{{\scriptstyle\mathrm{STEM}}} 
\newcommand{\loopt}{{\scriptstyle\mathrm{LOOP}}} 
\newcommand{\T}{{\scriptstyle\mathrm{T}}} 
\newcommand{\prog}{\mathbf{P}} 
\newcommand{\abovebelow}[2]{(^{#1}_{#2})} 
\newcommand{\tr}[1]{#1^T} 
\newcommand{\cwhite}{\tikz {
  \draw[fill=mywhite] (0,0) circle (0.1);
}}
\newcommand{\cred}{\tikz {
  \draw[fill=myred] (0,0) circle (0.1);
}}
\newcommand{\cblue}{\tikz {
  \draw[fill=myblue] (0,0) circle (0.1);
}}
\newcounter{TransformationStep}
\newcommand{\transformationStep}[1]{
	\stepcounter{TransformationStep}
	~

	\noindent
	{\bf Transformation \arabic{TransformationStep}: {#1}}
}
\newcommand{\pdftitle}{Ranking Function Synthesis for Linear Lasso Programs}
\newcommand{\pdfauthor}{Jan Leike}
\newcommand{\pdfkeywords}{Lasso Programs, Termination, Termination
  Condition, Multiphase Ranking Function, Lexicographic Ranking
  Function, Non-linear Constraints, Farkas' Lemma}
\newcommand{\pdfsubject}{Static Analysis}
\newcommand{\pdflang}{en-US}
\title{\pdftitle}
\author{\pdfauthor}
\date{\today}
\begin{document}


\begin{titlepage}
\begin{center}

\begin{spacing}{1.5}
\textsc{\Large University of Freiburg} \\
\textsc{\large Department of Computer Science} \\
\textsc{\large Chair of Software Engineering} \\[5cm]
\end{spacing}

\textsc{\Huge
\begin{spacing}{1.1}
Ranking Function Synthesis
for \\
Linear Lasso Programs
\end{spacing}
}

\vspace{1cm}

\textsc{\Large Master's Thesis\footnote{This is an error-corrected version of the original thesis, updated last on \today.}} \\[12mm]
\textsc{\Large \pdfauthor} \\[5mm]
June 30, 2013

\vfill

\emph{Supervisor:} \\[0.2cm]
\large Prof. Dr. Andreas Podelski

\end{center}
\end{titlepage}


\clearpage
\belowpdfbookmark{Abstract}{abstract}
\abstract{
The scope of this work is the constraint-based synthesis of
termination arguments for the restricted class of programs called
\emph{linear lasso programs}.  A termination argument consists of
a ranking function as well as a set of supporting invariants.

We extend existing methods in several ways.  First, we use Motzkin's
Transposition Theorem instead of Farkas' Lemma.  This allows us to
consider linear lasso programs that can additionally contain strict
inequalities.  Existing methods are restricted to non-strict
inequalities and equalities.

Second, we consider several kinds of ranking functions: affine-linear,
piecewise and lexicographic ranking functions.  Moreover, we present a
novel kind of ranking function called \emph{multiphase ranking
function} which proceeds through a fixed number of phases such that
for each phase, there is an affine-linear ranking function.  As an
abstraction to the synthesis of specific ranking functions, we
introduce the notion \emph{ranking function template}.  This enables
us to handle all ranking functions in a unified way.

Our method relies on non-linear algebraic constraint solving as a
subroutine which is known to scale poorly to large problems.  As a
mitigation we formalize an assessment of the difficulty of our
constraints and present an argument why they are of an easier kind
than general non-linear constraints.

We prove our method to be complete: if there is a termination argument
of the form specified by the given ranking function template with a
fixed number of affine-linear supporting invariants, then our method
will find a termination argument.

To our knowledge, the approach we propose is the most powerful
technique of synthesis-based discovery of termination arguments for
linear lasso programs and encompasses and enhances several methods
having been proposed thus far~\cite{BMS05linrank,HHLP13,PR04}.
}

\clearpage
\belowpdfbookmark{Table of Contents}{toc}
\tableofcontents


\clearpage
\chapter{Introduction}
\label{ch-introduction}

Software verification as a branch of computer science studies the
automatic derivation of correctness properties of computer programs.
Termination is the property that no infinite program execution is
possible.  In this work we focus on the automatic discovery of a
termination argument given a program of a specific form.  Whether a
given program terminates is undecidable according to the Halting
Problem.  Hence there is no algorithm that finds a termination
argument for \emph{every} terminating program.  Because of this, we
content ourselves with considering
\emph{linear lasso programs}.  Linear lasso programs consist of
a \emph{stem} followed by a \emph{loop}.  Stem and loop each are
boolean combinations of affine-linear constraints.  For an example,
see \autoref{fig-lasso-introduction}.

\begin{figure}[ht]
\begin{center}
\begin{minipage}{40mm}
\begin{lstlisting}
assume($y > 1$);
while ($q \geq 0$):
    $q$ := $q - y$;
    $y$ := $y + 1$;
\end{lstlisting}
\end{minipage}
\begin{minipage}{60mm}
\begin{align*}
\stemt(q, y) \equiv y > 1 \;&~ \\
\loopt(q, y, q', y') \equiv q \geq 0 \;&\land\; q' = q - y \\
&\land\; y' = y + 1
\end{align*}
\end{minipage}
\end{center}
\caption{
A linear lasso program given as program code (left) and its
translation as stem and loop transition in linear arithmetic (right).
}
\label{fig-lasso-introduction}
\end{figure}

Lasso programs usually do not occur as stand-alone programs; rather,
they are encountered when a finite representation of an infinite path
in a control flow graph is needed.  For example, in (potentially
spurious) counter-examples in termination
analysis~\cite{CPR06,HLNR10,KSTTW08,KSTW10}, non-termination
analysis~\cite{GHMRX08}, stability analysis~\cite{CFKP11,PW07}, or
cost analysis~\cite{AAGP11,GZ10}.

In this work we build constraints from the given program code, such
that a termination argument for this program can be computed via
constraint solving.  The method we propose is more powerful than any
other constraint-based synthesis of termination arguments for linear
lasso programs proposed thus far (see \autoref{sec-related-work} for
an assessment).

First, by using \hyperref[thm-motzkin]{Motzkin's Transposition
Theorem} instead of \hyperref[lem-farkas-affine]{Farkas' Lemma}, we
are able to handle lasso programs that contain both strict and
non-strict inequalities. (For example, the program
in \autoref{fig-lasso-introduction} contains the strict inequality $y
> 1$ in the stem and only non-strict inequalities in the loop
transition.)  Existing methods disallowed strict inequalities or
sometimes resorted to the workaround of replacing a strict inequality
$a > b$ by $a \geq b + 1$, which only works for integer domains.

Second, instead of focusing on one type of ranking function, we use
several templates for ranking functions (e.g., affine-linear or
lexicographic ranking functions).  For this, we introduce the notion
of a \emph{ranking function template} that enables formalization of
ranking functions of various kinds, including the aforementioned ones.
When given a linear lasso program, we can prove its termination by
trying many different kinds of templates available by repeating our
method for each of them.

Furthermore, we present a novel ranking function that we
call \emph{multiphase ranking function}.  This ranking function
proceeds through a fixed finite number of phases, before terminating.
Each phase is ranked by an affine-linear function; when this function
becomes non-positive, we transition to the next phase.  These
multiphase ranking functions can be seen to be orthogonal to
lexicographic ranking functions: if a program has a lexicographic
ranking function, it generally does not have a multiphase ranking
function, or vice versa.  We give various examples of programs that
have a multiphase ranking function.

\begin{figure}[ht]
\begin{center}
\begin{minipage}{4cm}
\begin{lstlisting}
while ($q \geq 0$):
    $q$ := $q - y$;
    $y$ := $y + 1$;
\end{lstlisting}
\end{minipage}
\begin{minipage}{3cm}
\vspace{-5mm}
\begin{align*}
f_1(q, y) &= 1 - y \\
f_2(q, y) &= q + 1
\end{align*}
\end{minipage}
\end{center}
\caption{
An execution of this linear lasso program can be split into two
phases: first $y$ increases until it is positive, then $q$ decreases
until the loop condition $q \geq 0$ is violated.  We can discover the
affine-linear functions $f_1$ and $f_2$.  Together, they form a
multiphase ranking function where $f_1$ corresponds to phase one and
$f_2$ corresponds to phase two.
}
\label{fig-multiphase-introduction}
\end{figure}

Our constraint-based synthesis method can be summarized as follows.
The input is a linear lasso program as well as a linear ranking
function template.  The template yields a formula, which we augment by
adding constraints for affine-linear inductive supporting invariants.
These invariants contain information from the program stem that may be
indispensable to the program's termination proof.  Next, five
equivalence transformations are applied to the constraints, the last
of which is given by Motzkin's Theorem.  The last transformation
removes any universal quantifiers.  The resulting constraints are then
passed to an SMT solver which checks them for satisfiability; a
positive result implies that the program terminates.  Furthermore, a
satisfying assignment will yield the supporting invariants and a
ranking function.  These form a termination argument for the given
linear lasso program and thus can be used by another
tool~\cite{AAGP11,CFKP11,GZ10,CPR06,GHMRX08,HLNR10,KSTTW08,KSTW10,PW07}.

In addition to being sound, our method is complete in the following
sense.  If there is a termination argument in form of a fixed number
of affine-linear supporting invariants and a ranking function of the
form specified by the given ranking function template, then our method
will discover a termination argument.  In other words, the existence
of a solution is never lost in the process of transforming the
constraints.

Our method applies to linear lasso programs of rational and real
variable domains.  While it is feasible to use it for integer domains,
it is not complete for integers.  The main reason is that Motzkin's
Theorem does not hold over the integers.  In fact, the discovery of
affine-linear ranking functions for lasso programs without stem is
already co-NP-complete~\cite{BG13}.

In contrast to some related methods~\cite{HHLP13,PR04}, which we
extend in this work, the constraints we generate are not linear, but
rather non-linear algebraic constraints.  Solving these constraints is
decidable, but requires exponential time and space~\cite{GV88}.  Much
progress on non-linear SMT solvers has been made and present-day
algorithms routinely solve non-linear constraints of various
sizes~\cite{JM12}.  \emph{Cylindrical algebraic decomposition} (CAD)
seems to be the most successful practical method in these endeavors.

We will argue that the constraints we generate generally are not as
wicked as non-linear constraints can possibly be.  We assess the
number of variables that need to be assigned to make the constraints
linear.  For this we introduce the notion of \emph{suitable colorings}
for ranking function templates.  This is a criterion that states which
of the Motzkin coefficients that occur in non-linear operations we can
eliminate from the final constraints.  Moreover, we provide several
other optimizations that reduce the number of these variables.
Additionally, for the CAD algorithm we exemplarily discuss why in
practical cases, we can find assignments for invariants in polynomial
time.

The contributions of this work can be summarized as follows.
\begin{itemize}
\item The use of Motzkin's Theorem instead of Farkas' Lemma enables
  strict inequalities in linear lasso programs.
\item The novel multiphase ranking function is presented.
\item We handle synthesis of different types of ranking functions in a
  unified way using our notion of ranking function templates.
\item The number of variables occurring in non-linear operations in
  the generated constraints is assessed for every ranking function
  template we present.
\item We argue why solving our non-linear constraints is not terribly
  difficult.
\end{itemize}

\section{Related Work}
\label{sec-related-work}

Tiwari showed that termination is decidable for deterministic
stem-free linear lasso programs of the form
\begin{align*}
\texttt{while(} Bx > b \texttt{) } x \texttt{:=} Ax + c \texttt{;}
\end{align*}
where $Bx > b$ is a conjunction of affine-linear constraints and $Ax +
c$ is an affine-linear transition function~\cite{Tiwari04}.  This
result is based on eigenvector analysis of the involved matrix $A$.
Braverman extends this result and proves the decidability of lasso
programs of the following form~\cite{Braverman06}:
\begin{align*}
\texttt{while(} B_s x > b_s \land B_w x \geq b_w \texttt{) }
  x \texttt{:=} Ax + c \texttt{;}
\end{align*}
where the matrices and vectors are rational and variables have
rational or real domain.  Moreover, this class of lasso programs also
admits decidable termination analysis over integer domain for the
homogeneous case where $b_s, b_w, c = 0$.

Ben-Amram et al.\ show that linear lasso programs with integer domain
have undecidable termination if the loop's coefficients are from
$\mathbb{Z} \,\cup\, \{ r \}$ for an arbitrary irrational number
$r$~\cite{BGM12}.

For constraint-based synthesis of termination arguments for various
classes of linear lasso programs, Farkas' Lemma has been extensively
used~\cite{BMS05linrank,BMS05polyrank,CSS03,HHLP13,PR04,Rybalchenko10,SSM04},
although always in its affine form.

The first complete method of ranking function synthesis for linear
lasso programs through constraint solving was due Podelski and
Rybalchenko~\cite{PR04}.  Their approach only considers lasso programs
without stem and termination arguments in form of an affine-linear
ranking function and requires only linear constraint solving.

The idea of generating affine-linear inductive invariants via Farkas'
Lemma-transformed constraints is first presented by Colón et
al.~\cite{CSS03}.  We will take the same approach when generating
inductive supporting invariants.  This method relies on non-linear
constraint solving and some of the same authors explore an
under-approximation technique for solving these~\cite{SSM04}.

Bradley, Manna and Sipma propose a similar approach for linear lasso
programs~\cite{BMS05linrank}.  They introduce affine-linear inductive
supporting invariants to handle the stem.  Their termination argument
is a lexicographic ranking function with each component corresponding to
one loop disjunct.  This not only requires non-linear constraint
solving, but also an ordering on the loop disjuncts.  The authors
extend this approach in \cite{BMS05polyrank} by the use
of \emph{template trees}.  These trees allow each lexicographical
component to have a ranking function that decreases not necessarily in
every step, but \emph{eventually}.  This bears some resemblance to
multiphase ranking functions.

Heizmann et al.\ extend the method of Podelski and
Rybalchenko~\cite{HHLP13}.  They are the first to introduce the notion
of lasso programs. Utilizing supporting invariants analogously to
Bradley et al., they synthesize affine-linear ranking functions.  Due
to their restriction to non-decreasing invariants, the generated
constraints are linear.

A collection of example-based explanations of constraint-based
verification techniques can be found in \cite{Rybalchenko10}.  This
includes the generation of ranking functions, interpolants,
invariants, resource bounds and recurrence sets.

In \cite{BG13} Ben-Amram and Genaim discuss the synthesis of linear
ranking functions for integer lasso programs without stem.  They prove
that this problem is generally co-NP-complete and continue considering
several special cases which admit a polynomial time complexity.

\section{Structure}
\label{sec-structure}

This work is divided into eight chapters.  After this introductory
chapter, in \autoref{ch-preliminaries} we recapitulate the
mathematical foundations for ordinal numbers, formal logic and linear
arithmetic including \hyperref[thm-motzkin]{Motzkin's Transposition
Theorem}.  Following this, we formally define linear lasso programs,
invariants and notions related to termination in \autoref{ch-lassos}.
This chapter concludes with a proof that termination of linear lasso
programs is undecidable.

In \autoref{ch-templates} we introduce the notion of linear ranking
function templates and formalize a way for turning synthesized
affine-linear functions into ranking functions.  We discuss the
multiphase ranking function template and three other relevant
templates and their properties.

Given a linear lasso program and a linear ranking function template,
we describe in \autoref{ch-constraints} how to build the constraints
whose solutions are the termination argument.  We analyze the
difficulty of the generated constraints---the \emph{non-linear
dimension} in \autoref{ch-nl-constraints}.  We will prove a criterion
that enables us to reduce the number of variables that occur in
non-linear operations in the constraints.  In \autoref{ch-solving} we
discuss some methods for solving the constraints and their
computational complexity.  We motivate with help of the cylindrical
algebraic decomposition of parts of the constraints that solving them
is not necessarily difficult in practice, despite the poor worst-case
time complexity of non-linear SMT solvers.  Finally, our results are
summarized in \autoref{ch-conclusion}.

This work is meant to be read in a linear fashion, each chapter
building on the results of the previous ones.  Our most important
results are stated in
chapters \ref{ch-lassos}, \ref{ch-templates}, \ref{ch-constraints} and
\ref{ch-nl-constraints}.


\clearpage
\chapter{Preliminaries}
\label{ch-preliminaries}

In this chapter we introduce the required mathematical concepts from
\hyperref[sec-ordinals]{set theory}, \hyperref[sec-logic]{formal
logic} and selected results
from \hyperref[sec-linear-arithmetic]{linear programming}.  We also
dedicate \autoref{sec-motzkin} to Motzkin's Transposition Theorem.

\section{Well-orderings and Ordinal Numbers}
\label{sec-ordinals}

The definitions and results of this section are standard knowledge in
the mathematical branch of set theory~\cite{Kunen80}.

\begin{definition}[Well-ordered set]\label{def-well-ordering}
A strict linear ordering $<$ on a set $X$ is a \emph{well-ordering}
iff every non-empty subset of $X$ has a $<$-minimal element.
\end{definition}

\begin{definition}[Ordinals]\label{def-ordinal}
A set $\alpha$ is called \emph{ordinal number} or \emph{ordinal} iff
\begin{itemize}
\item $\beta \in \alpha$ implies $\beta \subset \alpha$, and
\item $\in$ (set membership) is a well-ordering on $\alpha$.
\end{itemize}
We denote the collection of all ordinals with $\mathbf{On}$.
\end{definition}

Ordinal numbers are a method of counting indefinitely.  The first
ordinal is $\emptyset$ and for every ordinal $\alpha$ the successor is
$\{ \alpha \} \cup \alpha$.  Furthermore, the union of a collection of
ordinals is again an ordinal, therefore we can take the supremum of a
collection of ordinals via set union.

Ordinals that are not successors are called \emph{limit ordinals}.
The first limit ordinal is $\omega$.  We can define addition,
multiplication and exponentiation for ordinals coinciding with these
operations on the natural numbers (however, in general addition and
multiplication are not commutative).  This yields
\begin{align*}
\omega + \omega = \omega \cdot 2, &&
\sup\{ \omega \cdot k \mid k \in \omega \} = \omega^2.
\end{align*}
We get the sequence
\begin{align*}
0, 1, 2, \ldots, \omega, \omega + 1, \ldots, \omega \cdot
2, \omega \cdot 2 + 1, \ldots, \omega^2, \omega^2 + 1, \ldots.
\end{align*}
Note that there is such a vast number of ordinals that $\mathbf{On}$
cannot be a set\footnote{If $\mathbf{On}$ was a set, it would be an
ordinal according to \autoref{def-ordinal} and hence contain itself.
This is a contradiction to the well-foundedness of set theory.}.
Nevertheless, for purposes of computer science, we are content with
the set of countable ordinals.  This justifies the usage of
$\mathbf{On}$ like a set, for example as a codomain of functions.

Every well-ordered set is isomorphic to an ordinal number.  In this
sense the ordinals are the `mothers of all well-orderings'.  This
motivates why we may consider ordinals instead of arbitrary
well-ordered sets.

\begin{lemma}[Well-orderings and ordinals]\label{lem-well-ordering}
For every well-ordered set $(X, <)$ there is a unique ordinal $\alpha$
and a bijection $f: X \to \alpha$ such that $x < y$ iff $f(x) \in
f(y)$ for every $x, y \in X$.
\end{lemma}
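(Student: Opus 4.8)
The plan is to prove both existence and uniqueness of the ordinal $\alpha$ together with the order-isomorphism $f$, and I expect the existence part to be the main work.

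For \emph{existence}, I would proceed by transfinite recursion on $X$, defining $f$ so that each element of $X$ is sent to the ordinal consisting of the images of all its predecessors. Concretely, for $x \in X$ set $f(x) = \{ f(y) \mid y < x \}$. The recursion is justified because $<$ is a well-ordering, so every element has a well-defined initial segment $\{ y \mid y < x \}$ on which $f$ is already determined. First I would verify by $<$-induction that each $f(x)$ is indeed an ordinal: its elements are images $f(y)$ with $y < x$, these are transitive by the induction hypothesis, and $\in$ well-orders them because $<$ well-orders the corresponding predecessors. Then I would let $\alpha = \{ f(x) \mid x \in X \}$ and check that $\alpha$ is an ordinal as well. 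The key structural fact to establish is the order-preservation claim, namely that $x < y$ iff $f(x) \in f(y)$: the forward direction is immediate from the definition of $f(y)$, and the reverse direction follows because $f$ is injective (which itself is proved by induction, using that $y < x$ implies $f(y) \in f(x)$ and hence $f(y) \neq f(x)$ by well-foundedness, i.e.\ no ordinal is a member of itself).

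For \emph{uniqueness}, suppose $\alpha$ and $\beta$ are two ordinals each admitting such an order-isomorphism with $(X, <)$; composing one isomorphism with the inverse of the other yields an order-isomorphism $g : \alpha \to \beta$ of ordinals. I would then show by $\in$-induction on $\alpha$ that any such $g$ must be the identity, forcing $\alpha = \beta$. The inductive step rests on the standard observation that an order-isomorphism between ordinals maps each element to the ordinal of its predecessors, which pins down $g(\gamma) = \gamma$ for all $\gamma \in \alpha$. Uniqueness of the isomorphism $f$ for the fixed target $\alpha$ follows by the same argument.

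The main obstacle is setting up the transfinite recursion rigorously and confirming that the resulting images $f(x)$ satisfy both defining clauses of \autoref{def-ordinal} (transitivity and well-ordering by $\in$) at every stage. These verifications are routine but must be carried out by induction along $<$ rather than assumed, and the whole construction depends essentially on $<$ being a well-ordering so that $<$-induction and the recursion theorem are available. Everything else—order-preservation, bijectivity, and uniqueness—then follows from the well-foundedness of set membership and the irreflexivity of $\in$ on ordinals.
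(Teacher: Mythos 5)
Your proposal is correct. Note, however, that the paper does not actually prove this lemma: its ``proof'' is a one-line citation to the set theory literature (Kunen), so there is no argument in the paper to compare against step by step. What you have written out is precisely the standard argument that such a citation points to: the Mostowski-collapse-style transfinite recursion $f(x) = \{ f(y) \mid y < x \}$, the $<$-induction showing each $f(x)$ is an ordinal, injectivity via irreflexivity of $\in$, and uniqueness via the fact that the only order-isomorphism between two ordinals is the identity (proved by $\in$-induction, using that every element of an ordinal equals the set of its own predecessors). All the key verifications are flagged correctly, including the one genuinely delicate point --- that the recursion and the inductions are licensed by well-foundedness of $<$ --- and the order-preservation argument (forward direction by construction, reverse direction from injectivity) is sound. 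In short, you supplied the self-contained proof that the paper outsources; what the paper's approach buys is brevity appropriate to a preliminaries chapter, while yours buys rigor and independence from external references.
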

\begin{proof}
See the literature on set theory, e.g.~\cite{Kunen80}.
\end{proof}

\section{First-order Logic}
\label{sec-logic}

We present a short introduction to first-order logic~\cite{EFT84} and
the notation we use in this work.
Given a set $S$ containing constants, function and relation symbols,
we define \emph{terms} and \emph{formulae} of first order logic for
$S$ recursively.  Every variable and constant is an $S$-term, and so
is the application of an $n$-ary function symbol to a sequence of $n$
terms.  The application of an $n$-ary relation symbol to $n$ $S$-terms
constitute \emph{atomary} $S$-formulae (atoms).  Formulae can be
joined together using boolean connectives
$\neg, \land, \lor, \rightarrow$, and quantified using universal
($\forall$) and existential ($\exists$) quantifiers followed by the
quantified variable.  Variables that are not bound by quantifiers in a
formula $\varphi$ are called \emph{free variables of $\varphi$}.  We
use the convention that quantifiers bind weakly (until the end of the
line), and $\land$ and $\lor$ have precedence over $\rightarrow$;
negation ($\neg$) is the strongest connective.

An $S$-structure $\mathfrak{A} = (A, (Z^\mathfrak{A})_{Z \in S})$
consists of a set $A$ called the \emph{universe of $\mathfrak{A}$},
and an interpretation $Z^\mathfrak{A}$ of every symbol $Z$ in $S$.  If
an $S$-formula $\varphi$ holds in a $S$-model $\mathfrak{A}$, we
say \emph{$\mathfrak{A}$ models $\varphi$} and write
$\mathfrak{A} \models \varphi$.  An $S$-formula $\varphi$
is \emph{satisfiable} iff an $S$-structure exists that models
$\varphi$.  If $\varphi$ is modeled by all $S$-structures, we call
$\varphi$ \emph{valid} and write $\models \varphi$.  Given a set of
$S$-formulae $T$, we write $T \models \varphi$ iff
$\mathfrak{A} \models \varphi$ for every $S$-structure $\mathfrak{A}$
that models each $\psi \in T$.
If two $S$-structures $\mathfrak{A}$ and $\mathfrak{B}$ model the same
$S$-formulae, we call $\mathfrak{A}$ and
$\mathfrak{B}$ \emph{elementarily equivalent}.

In this work we entertain a special interest in \emph{linear
arithmetic} $S_\mathrm{linear} = \{ 0, 1, +, -, \leq, = \}$
and \emph{non-linear arithmetic} $S_\mathrm{non-linear} = \{ 0, 1, +,
-, \cdot, \leq, = \}$, where $0$ and $1$ are constants, $+$, $-$,
$\cdot$ are binary function symbols and $\leq$, $=$ are binary
relations.  The usual axioms concerning ordered rings apply.
Structures we consider are the rationals $\mathbb{Q}$ and the reals
$\mathbb{R}$ together with the usual interpretations of $0$, $1$, $+$,
$-$, $\cdot$, $\leq$ and $=$ as well as their elementary equivalents.
Structures elementarily equivalent to the reals are called \emph{real
closed fields}; an example of a real closed field are the real
algebraic numbers (the field of roots of rational polynomials).

Given an $S$-formulae $\varphi$, an \emph{satisfiability modulo theory
solver} (SMT solver) is a software tool that determines whether
$\varphi$ holds in an specific $S$-structure (e.g. $\mathbb{Q}$ or
$\mathbb{R}$).  If it does, the solver outputs a valuation to the free
variables of $\varphi$.  We call the input $\varphi$
the \emph{constraint} to the solution.

For notational simplicity, and if the structure $\mathfrak{A}$ is
clear from context, we identify formulae with the sets they generate.
A formula $\varphi$ containing $n$ free variables $x_1, \ldots, x_n$,
is identified with the set
\begin{align*}
\{ (x_1, \ldots, x_n) \in A^n
  \mid \mathfrak{A} \models \varphi(x_1, \ldots, x_n) \}.
\end{align*}

For later use, we state the Compactness Theorem for first order
logic~\cite{EFT84}.

\begin{theorem}[Compactness]\label{thm-compactness}
A set of formulae $T$ is satisfiable if and only if every finite
subset of $T$ is satisfiable.
\end{theorem}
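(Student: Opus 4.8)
The plan is to prove the two directions separately, with essentially all the work concentrated in the nontrivial implication. The forward direction is immediate: if $T$ is satisfiable, there is an $S$-structure $\mathfrak{A}$ with $\mathfrak{A} \models \psi$ for every $\psi \in T$, and this same $\mathfrak{A}$ models every formula of any finite subset $T_0 \subseteq T$, so each such $T_0$ is satisfiable.

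For the converse I would argue by contraposition: assuming $T$ is unsatisfiable, I would produce a finite unsatisfiable subset. The route I favour passes through the G\"odel Completeness Theorem, which I take as given from the cited literature. Completeness together with soundness states that a set of formulae is satisfiable precisely when it is syntactically consistent, that is, when no formal derivation of a contradiction from its members exists. If $T$ is unsatisfiable, it is therefore inconsistent, so there is a derivation of a contradiction whose premises are drawn from $T$.

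The crucial observation is that a formal proof is a finite object: it is a finite sequence (or tree) of formulae and hence invokes only finitely many of the hypotheses in $T$. Collecting exactly those premises yields a finite subset $T_0 \subseteq T$ that already derives a contradiction, so $T_0$ is inconsistent and, by soundness, unsatisfiable. This contradicts the assumption that every finite subset of $T$ is satisfiable, completing the contrapositive.

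The main obstacle is that all the depth is hidden inside the Completeness Theorem; once it is available, the argument reduces to a one-line appeal to the finiteness of derivations. If one preferred to avoid completeness, I would instead build a model of $T$ directly by an ultraproduct: index the finite subsets of $T$ by a set $I$, pick a model $\mathfrak{A}_i$ of each finite subset $i$, observe that the sets $\{ i \in I : \psi \in i \}$ for $\psi \in T$ enjoy the finite intersection property, extend them to an ultrafilter $U$ on $I$ via the ultrafilter lemma, and apply {\L}o\'{s}'s theorem to $\prod_{i \in I} \mathfrak{A}_i / U$ to conclude that this ultraproduct models every $\psi \in T$; there the hard part is establishing {\L}o\'{s}'s theorem and invoking the required choice principle.
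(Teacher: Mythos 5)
Your proof is correct. The paper does not prove this theorem at all---it simply states it and defers to the cited logic literature~\cite{EFT84}---and your argument (trivial forward direction; for the converse, G\"odel completeness plus soundness to identify satisfiability with consistency, then finiteness of formal derivations to extract a finite inconsistent subset) is exactly the standard route taken in that reference, with the ultraproduct construction as a well-known alternative.
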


\section{Linear Arithmetic}
\label{sec-linear-arithmetic}

For the remainder of this work, fix $\mathbb{K}$ to be the field of
rational numbers $\mathbb{Q}$ or any real closed field, such as the
real numbers $\mathbb{R}$.
We use the vector $x$ to denote the variables $x_1, \ldots, x_n$.  By
convention, all vectors are column vectors.  For a vector $v$,
the \emph{transpose} will be denoted as $\tr{v}$.  A function
$f: \mathbb{K}^n \rightarrow \mathbb{K}$ is
called \emph{affine-linear} (or simply \emph{affine}) iff $f(x)
= \tr{c} x + d$ for some vector $c \in \mathbb{K}^n$ and some number
$d \in \mathbb{K}$.
We call inequalities of the form $a < b$ \emph{strict inequalities}
and inequalities of the form $a \leq b$ \emph{non-strict
inequalities}.  When either comparison operator could apply to an
equation, we use the symbol $\lhd$.

Given a matrix $A \in \mathbb{K}^{m \times n}$ and a vector
$b \in \mathbb{K}^m$, the inequality $Ax \leq b$ denotes the
conjunction of the linear inequalities
\[
\bigwedge_{i=1}^m \sum_{j=1}^n a_{i,j} x_j \leq b_i
\]
where $a_{i,j}$ denotes the entry of the matrix $A$ in row $i$ and
column $j$.  If we understand these constraints as the set of vectors
$\{ x \in \mathbb{K}^n \mid Ax \leq b \}$, they form a convex subset
of $\mathbb{K}^n$ called a \emph{polyhedron}.

\section{Motzkin's Transposition Theorem}
\label{sec-motzkin}

Intuitively, \hyperref[thm-motzkin]{Motzkin's Transposition Theorem}
states that a given system of linear inequalities has no solution if
and only if a contradiction can be derived via a positive linear
combination of the equations.

\hyperref[thm-motzkin]{Motzkin's Theorem} will be used in this work to
equivalently transform universally quantified formulae into existential
ones.  Additionally, as a side effect, the number of non-linear
multiplications (multiplications of two variables) will be greatly
reduced.

\begin{theorem}[Motzkin's Transposition Theorem~\cite{Schrijver99}]
\label{thm-motzkin}
Let $A \in \mathbb{K}^{m \times n}$, $B \in \mathbb{K}^{\ell \times
n}$, $b \in \mathbb{K}^m$, and $d \in \mathbb{K}^\ell$.
\eqref{eq-motzkin1} and \eqref{eq-motzkin2} are equivalent.
\begin{align}
&\hspace{14.7mm} \forall x \in \mathbb{K}^n.\;
  \neg (Ax \leq b \;\land\; Bx < d)
\tag{M1}\label{eq-motzkin1} \\[2mm]
&\begin{aligned}
\exists \lambda \in \mathbb{K}^m \;\exists \mu \in \mathbb{K}^\ell.\;
  &\lambda \geq 0 \;\land\; \mu \geq 0 \\
  \land\; &\tr{\lambda} A + \tr{\mu} B = 0
  \;\land\;  \tr{\lambda} b + \tr{\mu} d \leq 0 \\
  \land\; &( \tr{\lambda} b < 0 \;\lor\; \mu \neq 0 )
\end{aligned}\tag{M2}\label{eq-motzkin2}
\end{align}
\end{theorem}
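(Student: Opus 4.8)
The plan is to prove the two implications of the theorem separately. The direction \eqref{eq-motzkin2} $\Rightarrow$ \eqref{eq-motzkin1} is elementary and I would establish it directly; the converse \eqref{eq-motzkin1} $\Rightarrow$ \eqref{eq-motzkin2} is the substantial part, which I would reduce to the affine version of \hyperref[lem-farkas-affine]{Farkas' Lemma} by means of a single-slack-variable reformulation of the strict inequalities.

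For \eqref{eq-motzkin2} $\Rightarrow$ \eqref{eq-motzkin1}, I would assume $\lambda, \mu$ as in \eqref{eq-motzkin2} and argue by contradiction: suppose some $x$ satisfies $Ax \leq b \land Bx < d$. Since $\lambda \geq 0$ and $\mu \geq 0$, multiplying the inequalities through by these multipliers and summing yields $(\tr{\lambda} A + \tr{\mu} B)\, x \leq \tr{\lambda} b + \tr{\mu} d$, so $0 \leq \tr{\lambda} b + \tr{\mu} d$ after substituting $\tr{\lambda} A + \tr{\mu} B = 0$. Together with $\tr{\lambda} b + \tr{\mu} d \leq 0$ this forces equality. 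The disjunction then closes the argument: if $\mu \neq 0$, then the strict inequalities $Bx < d$ contribute a strict inequality and give $0 < \tr{\lambda} b + \tr{\mu} d = 0$; if instead $\tr{\lambda} b < 0$ and $\mu = 0$, then $\tr{\lambda} A = 0$ yields $0 = \tr{\lambda} A x \leq \tr{\lambda} b < 0$. Either way we obtain $0 < 0$, a contradiction, so no such $x$ exists and \eqref{eq-motzkin1} holds.

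For \eqref{eq-motzkin1} $\Rightarrow$ \eqref{eq-motzkin2}, I would first split on whether $Ax \leq b$ is feasible. If it is infeasible, \hyperref[lem-farkas-affine]{Farkas' Lemma} already supplies $\lambda \geq 0$ with $\tr{\lambda} A = 0$ and $\tr{\lambda} b < 0$, and setting $\mu = 0$ verifies \eqref{eq-motzkin2}. Otherwise $P := \{ x \mid Ax \leq b \} \neq \emptyset$, and I would encode the strict inequalities with a single scalar slack variable $t$ and the all-ones vector $\mathbf{1}$: the strict system $Ax \leq b \land Bx < d$ is unsatisfiable if and only if the implication $Ax \leq b \land Bx + t\,\mathbf{1} \leq d \Rightarrow t \leq 0$ holds, because a solution of the strict system corresponds exactly to a feasible point of the relaxed system with $t > 0$ (take $t = \min_i (d_i - (Bx)_i)$). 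Since $P \neq \emptyset$, the relaxed polyhedron in the variables $(x, t)$ is nonempty, so the affine \hyperref[lem-farkas-affine]{Farkas' Lemma} applies to this implication and yields multipliers $\lambda \geq 0$ for the rows of $A$ and $\mu \geq 0$ for the rows of $B$ with $\tr{\lambda} A + \tr{\mu} B = 0$, the slack balance $\tr{\mu}\,\mathbf{1} = 1$, and $\tr{\lambda} b + \tr{\mu} d \leq 0$. The balance equation forces $\mu \neq 0$, so these multipliers satisfy every conjunct of \eqref{eq-motzkin2}.

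I expect the main obstacle to be the bookkeeping around the affine \hyperref[lem-farkas-affine]{Farkas' Lemma} rather than any deep idea. The affine form requires the underlying polyhedron to be nonempty, which is precisely why the feasibility case split is needed, and one must check that the objective coefficient of $t$ (namely $1$), together with the vanishing $t$-coefficients in the $A$-rows, produces exactly the balance $\tr{\mu}\,\mathbf{1} = 1$ that certifies $\mu \neq 0$. The second point to verify carefully is the equivalence between unsatisfiability of the strict system and the bound $t \leq 0$ on the relaxed system; this hinges on the fact that a finite collection of strict inequalities can be met with a single uniform positive margin, which is where the finiteness of the system and the use of one common slack $t$ enter.
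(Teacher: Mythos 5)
Your proposal is correct, and both directions go through as you describe, but your reduction for the substantial direction \eqref{eq-motzkin1} $\Rightarrow$ \eqref{eq-motzkin2} is genuinely different from the paper's. The paper makes the same initial case split on feasibility of $Ax \leq b$ (using \autoref{lem-farkas-classic} for the infeasible case), but in the feasible case it stays in the original variable space: it selects a maximal subset $S$ of the strict inequalities that is still satisfiable together with $Ax \leq b$, isolates a critical strict inequality $\tr{b_{i_0}}x < d_{i_0}$ whose addition destroys satisfiability, rewrites \eqref{eq-motzkin1} as the implication $Ax \leq b \land B'x < d' \rightarrow -\tr{b_{i_0}}x \leq -d_{i_0}$, and then needs a topological closure argument (\autoref{lem-polyhedron-closure}) to replace $B'x < d'$ by $B'x \leq d'$ before \autoref{lem-farkas-affine} can be applied; the resulting certificate has $\mu$ supported on the single index $i_0$. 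Your slack-variable reformulation instead lifts to dimension $n+1$, turns all strict inequalities at once into the non-strict system $Bx + t\,\mathbf{1} \leq d$ with conclusion $t \leq 0$, and reads off $\mu \neq 0$ from the balance equation $\tr{\mu}\mathbf{1} = 1$. This buys you a shorter argument that needs neither the maximal-subset selection nor the closure lemma (which in the paper exists only to serve this proof), at the cost of working in an augmented space and producing a less structured certificate; both routes ultimately rest on the affine Farkas' Lemma, i.e., on strong duality. Two small points: in your infeasible case you hyperlink \autoref{lem-farkas-affine}, but the fact you invoke ($\lambda \geq 0$, $\tr{\lambda}A = 0$, $\tr{\lambda}b < 0$ from infeasibility) is \autoref{lem-farkas-classic} --- the affine version is inapplicable there for exactly the nonemptiness reason you yourself point out; and your treatment of the converse direction \eqref{eq-motzkin2} $\Rightarrow$ \eqref{eq-motzkin1} is slightly more complete than the paper's, which spells out only the non-classical case and leaves the classical one to the Farkas equivalence.
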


Note that the formula \eqref{eq-motzkin2} contains the disjunction
$(\tr{\lambda} b < 0 \;\lor\; \mu \neq 0)$.  We call the case where
$\mu = 0$ and $\tr{\lambda} b < 0$ the \emph{classical case}, because
the formula coincides with the one of the classical version of Farkas'
Lemma (\autoref{lem-farkas-classic}).  The other case will be called
the \emph{non-classical case}.  Note that the equation $\mu \neq 0$
can equivalently be written as $\sum_i \mu_i > 0$ since $\mu$ is
already constraint to non-negative entries.

The remainder of this section is dedicated to the proof
of \hyperref[thm-motzkin]{Motzkin's Theorem}.  We motivate this proof
with two versions of Farkas' Lemma which easily follow from the strong
duality theorem of linear programming~\cite{Schrijver99}: an affine
version (\autoref{lem-farkas-affine}) and a classical version
(\autoref{lem-farkas-classic}).  However, note that the literature
typically takes the opposite route and uses Farkas' Lemma to prove the
duality theorem~\cite{Schrijver99}.

\begin{theorem}[Strong duality theorem]\label{thm-strong-duality}
Let $A \in \mathbb{K}^{m \times n}$, $b \in \mathbb{K}^m$, and
$c \in \mathbb{K}^n$.  Define the linear programming problem $P
= \{ \tr{c} x \mid A x \leq b \}$ and its dual $D = \{ \tr{b}
y \mid \tr{A} y = c, y \geq 0 \}$.  If either of $P$ or $D$ is
non-empty, then $\sup P = \inf D$.
\end{theorem}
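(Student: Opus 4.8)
The plan is to prove the strong duality theorem by establishing a chain of inequalities that pin down $\sup P = \inf D$. The key observation is \emph{weak duality}: for any feasible $x$ of the primal (i.e.\ $Ax \leq b$) and any feasible $y$ of the dual (i.e.\ $\tr{A} y = c$ with $y \geq 0$), we have $\tr{c} x = \tr{(\tr{A} y)} x = \tr{y} A x \leq \tr{y} b = \tr{b} y$, where the inequality uses $y \geq 0$ together with $Ax \leq b$. This immediately gives $\sup P \leq \inf D$, so that whenever the two extrema are finite the hard direction is to produce matching feasible points, and in the degenerate cases (one extremum $\pm\infty$) weak duality forces the corresponding behavior of the other.

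Since the problem statement presupposes standard linear programming machinery, I would prove the reverse inequality $\sup P \geq \inf D$ by appealing to the existence of an optimal vertex solution or, more self-containedly, by a separating-hyperplane argument. Concretely, suppose the primal $P$ is feasible with value bounded above by some $\gamma$; I would consider the convex set of achievable pairs and argue that the optimum is attained at a point where enough constraints $a_i^T x \leq b_i$ are tight. The active rows at an optimal $x^*$ determine a cone, and I would exhibit nonnegative multipliers $y \geq 0$ supported on those active rows satisfying $\tr{A} y = c$; this $y$ is dual-feasible and, by complementary slackness ($y_i > 0 \Rightarrow a_i^T x^* = b_i$), achieves $\tr{b} y = \tr{c} x^*$, collapsing the weak-duality gap to zero.

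The plan requires some care about which of $P$, $D$ is assumed nonempty, since the theorem hypothesis is a disjunction. I would split into cases: if $P$ is feasible and bounded above, the argument above yields a dual optimum with $\inf D = \sup P$; if $P$ is feasible but unbounded, weak duality shows $D$ must be infeasible, and one checks $\inf D = +\infty = \sup P$ under the usual convention that an infimum over the empty set is $+\infty$. The symmetric situation where only $D$ is assumed nonempty is handled by the analogous argument on the dual, or by observing that the dual of the dual recovers the primal.

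The hard part will be justifying the existence of the nonnegative multipliers $y$ certifying optimality—this is exactly the content that separates weak from strong duality and cannot be obtained by elementary manipulation alone. Since the excerpt explicitly notes that the literature ``typically takes the opposite route and uses Farkas' Lemma to prove the duality theorem,'' the cleanest honest route here is to invoke a standard result: either cite the polyhedral/separating-hyperplane theorem directly or defer to the cited reference~\cite{Schrijver99}, thereby treating strong duality as a known cornerstone of linear programming rather than rederiving it from scratch.
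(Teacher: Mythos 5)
Your proposal is correct and ends up exactly where the paper does: the paper gives no actual proof of this theorem, deferring entirely to the linear-programming literature (e.g.\ \cite{Schrijver99}), and your proposal---after a sound weak-duality computation and a standard sketch of the vertex/complementary-slackness route---likewise concludes that the hard direction should be obtained by citing that same reference. The two approaches therefore coincide in substance, with your weak-duality argument being a correct but supplementary addition to what the paper records.
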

\begin{proof}
See the literature on linear programming, e.g.\ \cite{Schrijver99}.
\end{proof}

The duality theorem holds over the theory of the reals as well as the
rationals.  This is because a polyhedron defined by inequalities
involving only rational coefficients has only rational vertices.  If a
linear programming problem (or its dual respectively) has an optimal
solution, it always has a vertex as an optimal solution; hence there
is a rational optimum~\cite{Schrijver99}.

\hyperref[thm-motzkin]{Motzkin's Theorem} states that a
given system of linear inequalities has no
solution \eqref{eq-motzkin1} if and only if a contradiction can be
derived via a positive linear combination of the
equations \eqref{eq-motzkin2}.  The two cases distinguished in the
disjunction \eqref{eq-motzkin2} correspond to a contradiction derived
using only non-strict inequalities (classical case) and a
contradiction derived using at least one strict inequality
(non-classical case).

The following affine version of Farkas' Lemma is usually applied
instead of \hyperref[thm-motzkin]{Motzkin's Theorem} in the context of
lasso
programs~\cite{BMS05linrank,BMS05polyrank,CSS03,HHLP13,PR04,Rybalchenko10,SSM04}.
\hyperref[thm-motzkin]{Motzkin's Theorem} can be seen as an adaption
of \hyperref[lem-farkas-affine]{Farkas' Lemma} to allow for strict
inequalities.  Conversely, the \hyperref[lem-farkas-classic]{classic
Farkas' Lemma} is a included in \hyperref[thm-motzkin]{Motzkin's
Theorem} as the special case where $B = 0$ and $d = 0$.

\begin{lemma}[Affine Farkas' Lemma]\label{lem-farkas-affine}
Let $A \in \mathbb{K}^{m \times n}$, $b \in \mathbb{K}^m$,
$c \in \mathbb{K}^n$, and $\delta \in \mathbb{K}$ such that $Ax \leq
b$ has a solution.  Then the following two formulae are equivalent.
\begin{align*}
\forall x \in \mathbb{K}^n.&\; Ax \leq b
  \rightarrow \tr{c} x \leq \delta \\
\exists \lambda \in \mathbb{K}^m.&\; \lambda \geq 0
  \;\land\; \tr{\lambda} A = \tr{c}
  \;\land\; \tr{\lambda} b \leq \delta
\end{align*}
\end{lemma}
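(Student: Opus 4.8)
The plan is to derive this affine version directly from the \hyperref[thm-strong-duality]{strong duality theorem}, by reading the quantified implication as a statement about the optimal value of a linear program and its dual.

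The implication from the existential formula to the universal one is the easy direction and needs no duality. Assuming some $\lambda \geq 0$ with $\tr{\lambda} A = \tr{c}$ and $\tr{\lambda} b \leq \delta$, I would take an arbitrary $x$ satisfying $Ax \leq b$ and chain $\tr{c} x = \tr{\lambda} A x \leq \tr{\lambda} b \leq \delta$; the middle inequality is valid because $\lambda \geq 0$ lets us multiply the componentwise inequality $Ax \leq b$ by $\lambda$ without reversing it. Notably this direction holds even without the feasibility assumption.

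For the converse, which is the substantive direction, the idea is to interpret the universal premise $\forall x.\; Ax \leq b \rightarrow \tr{c} x \leq \delta$ as saying that $\delta$ is an upper bound for the primal program $P = \{ \tr{c} x \mid Ax \leq b \}$, that is, $\sup P \leq \delta$. Since $Ax \leq b$ has a solution by hypothesis, $P$ is non-empty, so I may invoke \autoref{thm-strong-duality} to conclude $\sup P = \inf D$ for the dual $D = \{ \tr{b} y \mid \tr{A} y = c,\ y \geq 0 \}$. Because $\sup P \leq \delta < \infty$, the dual value is finite; this forces $D$ to be non-empty (otherwise $\inf D = +\infty$) and, being feasible and bounded below by $\sup P > -\infty$, its infimum is attained at some feasible $y$. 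Setting $\lambda := y$ then yields $\lambda \geq 0$, $\tr{A}\lambda = c$ (equivalently $\tr{\lambda} A = \tr{c}$), and $\tr{\lambda} b = \inf D = \sup P \leq \delta$, which is exactly the existential formula.

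The delicate point, which I expect to be the main obstacle, is the claim that the dual optimum is genuinely attained rather than merely approached in a limit, since the stated form of \autoref{thm-strong-duality} asserts only the equality $\sup P = \inf D$. I would resolve this by appealing to the standard LP fact underlying that theorem: a feasible linear program whose objective is bounded attains its optimum at a vertex. Here primal feasibility is given and $\sup P \leq \delta < \infty$ rules out an empty dual, so both problems are feasible and their common optimal value is realized by an actual vector $\lambda$. A minor case worth recording is $c = 0$, where the premise forces $\delta \geq 0$ and the choice $\lambda = 0$ satisfies all three conjuncts directly.
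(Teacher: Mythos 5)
Your proof is correct and takes essentially the same route as the paper's: both read the universal formula as $\sup P \leq \delta$ for the primal $P = \{ \tr{c} x \mid Ax \leq b \}$ and the existential formula as dual feasibility with value at most $\delta$, and then invoke \autoref{thm-strong-duality}. The only differences are refinements of the same argument: you handle the easy direction by a direct weak-duality computation instead of citing duality again, and you explicitly justify that the dual optimum is attained (via vertex attainment), a point the paper's proof passes over silently.
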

\begin{proof}
We reformulate \autoref{lem-farkas-affine} in terms of linear
programming.  Let $P$ and $D$ be as in \autoref{thm-strong-duality}.
\begin{center}
Let $P \neq \emptyset$. Then
  $\sup P \leq \delta$ iff $\inf D \leq \delta$.
\end{center}
If $P$ is bounded, then $D$ is feasible and by
the \hyperref[thm-strong-duality]{strong duality theorem} their
solutions are equal.  Conversely, if $\inf D \leq \delta$, then $D$ is
feasible and the strong duality theorem asserts that $\sup
P \leq \delta$.
\end{proof}

\begin{lemma}[Classic Farkas' Lemma]\label{lem-farkas-classic}
For all $A \in \mathbb{K}^{m \times n}$ and $b \in \mathbb{K}^m$ the
following two formulae are equivalent.
\begin{align*}
\forall x \in \mathbb{K}^n.&\; \neg Ax \leq b \\
\exists \lambda \in \mathbb{K}^m.&\;
  \lambda \geq 0 \;\land\; \tr{\lambda} A = 0
  \;\land\; \tr{\lambda} b < 0
\end{align*}
\end{lemma}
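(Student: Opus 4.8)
The plan is to prove the two implications separately, leaning on the \hyperref[thm-strong-duality]{strong duality theorem} for the substantial direction and dispatching the other by a one-line contradiction.

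First I would handle the easy implication, from the existence of $\lambda$ to the unsolvability of $Ax \leq b$. Suppose some $\lambda \geq 0$ satisfies $\tr{\lambda} A = 0$ and $\tr{\lambda} b < 0$, and assume for contradiction that $Ax \leq b$ had a solution $x$. Multiplying $Ax \leq b$ on the left by the non-negative row vector $\tr{\lambda}$ preserves the inequality, giving $\tr{\lambda} A x \leq \tr{\lambda} b$. But $\tr{\lambda} A = 0$ collapses the left-hand side to $0$, so $0 \leq \tr{\lambda} b < 0$, which is absurd. Hence no such $x$ exists.

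For the converse I would instantiate \autoref{thm-strong-duality} with the objective $c = 0$. Then $P = \{ \tr{c} x \mid Ax \leq b \}$ has an identically zero objective and its dual is $D = \{ \tr{b} y \mid \tr{A} y = 0,\ y \geq 0 \}$. The key observation is that $y = 0$ is always dual-feasible, so $0 \in D$ and in particular $D \neq \emptyset$; this is exactly the hypothesis that licenses strong duality, which then yields $\sup P = \inf D$. Since $Ax \leq b$ is assumed to have no solution, $P = \emptyset$, so $\sup P = -\infty$, and therefore $\inf D = -\infty$. Consequently $D$ contains arbitrarily negative values; picking any member $\tr{b} y$ of $D$ below $0$ produces an honest vector $y \geq 0$ with $\tr{A} y = 0$ and $\tr{b} y < 0$. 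Setting $\lambda := y$ and rereading $\tr{A} y = 0$ as $\tr{\lambda} A = 0$ and $\tr{b} y$ as $\tr{\lambda} b$ delivers the required witness.

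The hard part will be the forward direction, and specifically the correct handling of the extended-real conventions in \autoref{thm-strong-duality}: I must justify that unsolvability of $Ax \leq b$ forces $\sup P = -\infty$ (via $\sup \emptyset = -\infty$) and that strong duality then forces $\inf D = -\infty$, from which $\inf D < 0$ immediately gives an actual member of $D$ that is negative. The main subtlety is ensuring the hypothesis of \autoref{thm-strong-duality} genuinely holds before invoking it, which is why I verify $D \neq \emptyset$ through the trivial dual point $y = 0$ rather than attempting to route through \autoref{lem-farkas-affine}, whose standing assumption that $Ax \leq b$ be solvable is precisely the statement being negated here.
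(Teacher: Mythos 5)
Your proposal is correct and takes essentially the same route as the paper: the paper also derives the hard direction from the \hyperref[thm-strong-duality]{strong duality theorem} with the zero objective, so that infeasibility of the primal $P$ forces the dual $D$ to be unbounded below and hence to attain a negative value. You merely make explicit what the paper leaves implicit --- the choice $c = 0$, the dual feasibility of $y = 0$ licensing the invocation of strong duality, and the elementary converse direction via multiplication by $\tr{\lambda} \geq 0$.
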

\begin{proof}
We proceed analogously to the proof of \autoref{lem-farkas-affine}.
Here $P$ is infeasible, consequently its dual $D$ is unbounded and
thus attains some negative value.
\end{proof}

The following Lemma is of technical nature.  We require it for the
proof of \autoref{thm-motzkin}.

\begin{lemma}[Closure of polyhedra]\label{lem-polyhedron-closure}
Let $X = \{ x \in \mathbb{K}^n \mid Ax \leq b, Bx <
d \} \neq \emptyset$.  The smallest closed set containing $X$ is $Y
= \{ x \in \mathbb{K}^n \mid Ax \leq b, Bx \leq d \}$.
\end{lemma}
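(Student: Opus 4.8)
The plan is to show two set inclusions, $Y \supseteq \overline{X}$ and $Y \subseteq \overline{X}$, where $\overline{X}$ denotes the topological closure of $X$ (the smallest closed set containing $X$). Since $Y$ is defined by non-strict inequalities only, it is closed, so establishing $Y \supseteq \overline{X}$ will follow quickly once I argue $Y \supseteq X$; the real content lies in the reverse direction.

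First I would verify that $Y$ is closed. Each constraint $\tr{a_i} x \leq b_i$ and $\tr{b_j} x \leq d_j$ defines a closed half-space (the preimage of the closed set $(-\infty, b_i]$ under a continuous affine map), and $Y$ is a finite intersection of such half-spaces, hence closed. Next, $X \subseteq Y$ is immediate: any $x$ satisfying $Bx < d$ in particular satisfies $Bx \leq d$, while the constraints $Ax \leq b$ are shared. Because $\overline{X}$ is by definition the smallest closed set containing $X$, and $Y$ is a closed set containing $X$, we get $\overline{X} \subseteq Y$.

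For the harder inclusion $Y \subseteq \overline{X}$, the idea is to approximate any point of $Y$ by points of $X$. Fix some interior witness $x_0 \in X$, which exists because $X \neq \emptyset$ (this nonemptiness hypothesis is exactly why it is stated in the lemma). For an arbitrary $y \in Y$, I would consider the segment points $x_t = (1-t) y + t\, x_0$ for small $t \in (0,1]$ and show $x_t \in X$. For the non-strict constraints, convexity gives $A x_t \leq (1-t) b + t b = b$. For the strict constraints, I would compute $B x_t = (1-t) B y + t\, B x_0$; since $B y \leq d$ and $B x_0 < d$ strictly, any positive weight $t$ on $x_0$ pulls the combination strictly below $d$ componentwise, so $B x_t < d$ and thus $x_t \in X$. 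Letting $t \to 0^+$ yields $x_t \to y$, so $y \in \overline{X}$.

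The main obstacle to watch is the strict-inequality verification: I must be careful that a strict inequality is preserved under the convex combination even when $y$ only satisfies the \emph{non-strict} version $B y \leq d$. The point is that mixing in \emph{any} strictly feasible point $x_0$ with positive weight forces strictness, which is precisely where the assumption $X \neq \emptyset$ is used—without a strictly feasible witness the claim can fail (for instance if the strict region is empty, $X = \emptyset$ and $Y$ need not be its closure). A minor subtlety is the field $\mathbb{K}$: the argument uses only the order and field axioms together with the fact that convex combinations with rational (indeed, $\mathbb{K}$-valued) coefficients stay in $\mathbb{K}^n$, so the topological closure is taken in the order topology of $\mathbb{K}^n$ and the segment construction remains valid over any real closed field.
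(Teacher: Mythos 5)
Your proposal is correct and follows essentially the same route as the paper: observe that $Y$ is closed as a finite intersection of closed half-spaces and contains $X$, then show every $y \in Y$ is a limit of the segment points $t\,x_0 + (1-t)\,y \in X$ as $t \to 0^+$, using the strictly feasible witness $x_0 \in X$ to preserve the strict inequalities. Your extra remarks (explicitly recording both inclusions and the validity of the argument over any real closed field $\mathbb{K}$) are fine refinements of the same argument, not a different method.
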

\begin{proof}
$Y$ contains $X$ and is the finite intersection of closed half-spaces
and therefore closed.  We need to show that every point in
$Y \setminus X$ is the limit of a sequence of points in $X$.
Let $y \in Y \setminus X$ and since $X$ is not empty, we can pick an
$x \in X$.  For $0 < t \leq 1$,
\begin{align*}
A(tx + (1-t)y) &= tAx + (1-t)Ay \leq tb + (1-t)b = b, \\
B(tx + (1-t)y) &= tBx + (1-t)By < td + (1-t)d = d.
\end{align*}
We conclude that $tx + (1-t)y \in X$ for all $0 < t \leq 1$.  But
\begin{align*}
\lim_{t \rightarrow 0} \big( tx + (1-t)y \big) = y,
\end{align*}
therefore $y$ is in the closure of $X$.
\end{proof}

\begin{proof}[Proof of \autoref{thm-motzkin}]
Either $Ax \leq b$ is inconsistent, then \autoref{lem-farkas-classic}
states the equivalence of \eqref{eq-motzkin1} to the classical case
(first disjunct) in \eqref{eq-motzkin2}.  Otherwise write
\begin{align*}
Bx \leq d \equiv \bigwedge_{i=1}^\ell \tr{b_i} x \leq d_i.
\end{align*}
There is a subset $S \subseteq \{ \tr{b_i} x < d_i \mid 1 \leq
i \leq \ell \}$ such that $S \cup \{ Ax \leq b \}$ is satisfiable, but
$S \cup \{ Ax \leq b \} \cup \{ \tr{b_{i_0}} x < d \}$ is not for some
$i_0$.  Write $S$ as $B'x < d'$ for a submatrix $B'$ of $B$ and a
subvector $d'$ of $d$.
\eqref{eq-motzkin1} is then equivalent to
\begin{align}
\forall x.\; Ax \leq b \;\land\; B'x < d'
  \rightarrow -\tr{b_{i_0}} x \leq -d_{i_0}.
\label{eq-motzkin-proof1}
\end{align}
This can be formulated as
\begin{align*}
X := \{ x \mid Ax \leq b, B'x < d' \} \subseteq
\{ x \mid -\tr{b_{i_0}} x \leq -d_{i_0} \} =: Z.
\end{align*}
$Z$ is a closed set, hence $X$ is contained in $Z$ iff the closure of
$X$ is.  By \autoref{lem-polyhedron-closure}, the closure of $X$ is
$\{ x \mid Ax \leq b, B'x \leq d' \}$, we can therefore replace $B'x <
d'$ with $B'x \leq d'$ in \eqref{eq-motzkin-proof1}.  $Ax \leq b \land
B'x \leq d'$ is satisfiable by assumption, hence
by \autoref{lem-farkas-affine}, we get equivalently
\begin{align*}
\exists \lambda, \mu \geq 0.\; \tr{\lambda} A + \tr{\mu} B' =
-\tr{b_{i_0}} \;\land\; \tr{\lambda} b + \tr{\mu} d' \leq -d_{i_0}.
\end{align*}
This yields an assignment for the non-classical case (second disjunct)
in \eqref{eq-motzkin2}.

Conversely, a contradiction derived from the inequalities makes
$Ax \leq b \;\land\; Bx < d$ unsatisfiable.  Assume the non-classical
case of \eqref{eq-motzkin2} holds and we have an
$x^\ast \in \mathbb{K}^n$ such that $Ax^\ast \leq b$ and $Bx^\ast <
d$.  Then
\begin{align*}
\tr{\lambda} Ax^\ast \leq \tr{\lambda} b, &&
\tr{\mu} Bx^\ast < \tr{\mu} d
\end{align*}
since $\lambda$ and $\mu$ have only non-negative entries.  This yields
the following contradiction.
\begin{align*}
0 \cdot x^\ast
  = (\tr{\lambda} A + \tr{\mu} B) x^\ast
  = \tr{\lambda} Ax^\ast + \tr{\mu} Bx^\ast
  < \tr{\lambda} b + \tr{\mu} d
  \leq 0
\quad\quad\quad\qedhere
\end{align*}
\end{proof}


\clearpage
\chapter{Lasso Programs}
\label{ch-lassos}

In \autoref{sec-lassos-def} we introduce the notion of lasso programs
and, more relevant to this work, linear lasso programs.  Invariants
and inductive invariants are presented in \autoref{sec-invariants}, as
well as the motivation to stick to the latter when building the
constraints.  Finally, in \autoref{sec-termination} we define
termination and ranking functions and conclude this chapter with a
related undecidability result.

\section{Definition}
\label{sec-lassos-def}

\begin{definition}[Lasso program~\cite{HHLP13}]\label{def-lasso}
A \emph{lasso program} $\prog = (\stemt, \loopt)$ over
the \emph{domain $\Sigma$} consists of a set of initial states
$\stemt \subseteq \Sigma$ and a binary relation
$\loopt \subseteq \Sigma \times \Sigma$.
\end{definition}

\begin{definition}[Semantics of lasso programs]\label{def-lasso-semantics}
Let $\prog = (\stemt, \loopt)$ be a lasso program over the domain
$\Sigma$.  A \emph{state} of $\prog$ is an element
$\sigma \in \Sigma$.  An \emph{execution} of $\prog$ is a (possibly
infinite) sequence of states $\sigma_0 \sigma_1 \ldots$ such that
$\sigma_0 \in \stemt$ and $(\sigma_i, \sigma_{i+1}) \in \loopt$ for
all $i \geq 0$.
\end{definition}

\begin{figure}[ht]
\centering
\begin{tikzpicture}
\node (1) at (0,0) {};
\node[state] (2) at (2.5,0) {};
\draw[arrows={-triangle 45}] (1) to node[above] {$\stemt$} (2);
\draw[arrows={-triangle 45}, loop right,in=30,out=-30,looseness=10]
  (2) to node {$\loopt$} (2);
\end{tikzpicture}
\caption{The name `lasso program' is motivated by the shape of their
  transition graph.}
\label{fig-lasso}
\end{figure}
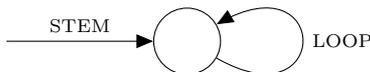

In this work, we consider the following special case of lasso
programs, namely those that have a linear specification for their stem
and loop transitions, as in the following example.

\begin{example}\label{ex-running-1}
Consider the following lasso program $\prog_{y \geq 1}$.
\begin{center}
\begin{minipage}{35mm}
\begin{lstlisting}
assume($y = 1$);
while($q \geq 0$):
    $q$ := $q - y$;
    $y$ := $y + 1$;
\end{lstlisting}
\end{minipage}
\end{center}
We can represent the stem and loop transition of $\prog_{y \geq 1}$
with the following formulae.
\begin{align*}
\stemt(q, y) &\equiv y = 1 \\
\loopt(q, y, q', y') &\equiv q \geq 0 \;\land\; q' = q - y
  \;\land\; y' = y + 1
\end{align*}
An execution of $\prog_{y \geq 1}$ is $\sigma_0 \sigma_1 \sigma_2$
where
\begin{align*}
\sigma_0&: y \mapsto 1, q \mapsto 2, \\
\sigma_1&: y \mapsto 2, q \mapsto 1, \text{ and} \\
\sigma_2&: y \mapsto 3, q \mapsto -1.
\end{align*}
Since $q$ is negative in $\sigma_2$, there is no possible successor
state to $\sigma_2$.
\end{example}

\begin{definition}[Linear lasso program]\label{def-linear-lassos}
A \emph{linear lasso program} is a lasso program $\prog =
(\stemt, \loopt)$ such that $\stemt$ and $\loopt$ are defined by
quantifier-free formulae of linear arithmetic.  A linear lasso program
is called \emph{conjunctive}, iff $\stemt$ and $\loopt$ contain no
disjunctions and negations occur only before atoms.
\end{definition}

\begin{lemma}[Linear lasso program normal form]\label{lem-lasso-nf}
For all \emph{linear lasso programs} $\prog = (\stemt, \loopt)$, the
formulae $\stemt$ and $\loopt$ can be written in the following normal
form.
\begin{align*}
\stemt(x)    &\equiv \bigvee_{n \in N}
  \big( B_n x \leq b_n
  \;\land\; B_n' x < b_n' \big) \\
\loopt(x,x') &\equiv \bigvee_{m \in M}
  \big( A_m \abovebelow{x}{x'} \leq c_m
  \;\land\; A_m' \abovebelow{x}{x'} < c_m' \big)
\end{align*}
$B_n$, $B_n'$, $A_m$, and $A_m'$ are matrices, $b_n$, $b_n'$, $c_m$,
and $c_m'$ are vectors, and $N$ and $M$ suitable finite index sets.
The program $\prog$ is conjunctive if and only if it has a normal form
with $\#N = \#M = 1$.
\end{lemma}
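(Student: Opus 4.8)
The plan is to prove the lemma by putting the quantifier-free formulae $\stemt$ and $\loopt$ into disjunctive normal form and then reading off the matrices. Recall that in linear arithmetic the atoms have the form $t_1 \leq t_2$ or $t_1 = t_2$, where $t_1, t_2$ are affine terms; after moving all summands to one side, each such atom rearranges into a single inequality $\tr{c} x \leq \delta$ (respectively an equation $\tr{c} x = \delta$), which I will later record as a row of one of the matrices. For the loop the free variables are $x$ and $x'$, and every atom rearranges analogously into a constraint on the stacked vector $\abovebelow{x}{x'}$.

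First I would normalize the boolean structure. I eliminate implications via $\varphi \rightarrow \psi \equiv \neg \varphi \lor \psi$ and then drive all negations inward with De Morgan's laws until they stand only in front of atoms. Each remaining literal is then rewritten in terms of strict and non-strict inequalities: a non-strict inequality $\tr{c} x \leq \delta$ is kept; a negated non-strict inequality $\neg(\tr{c} x \leq \delta)$ becomes the strict inequality $-\tr{c} x < -\delta$; an equation $\tr{c} x = \delta$ is replaced by $\tr{c} x \leq \delta \land -\tr{c} x \leq -\delta$; and a negated equation $\neg(\tr{c} x = \delta)$ becomes the disjunction $\tr{c} x < \delta \lor -\tr{c} x < -\delta$. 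Distributing $\land$ over $\lor$ now yields a disjunction of conjunctions of strict and non-strict inequalities. In each disjunct I collect the non-strict inequalities as the rows of a matrix $B_n$ (respectively $A_m$) with right-hand side $b_n$ (respectively $c_m$) and the strict inequalities as the rows of $B_n'$ (respectively $A_m'$) with right-hand side $b_n'$ (respectively $c_m'$). This produces exactly the claimed normal form, with $N$ and $M$ indexing the disjuncts.

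For the characterization, the direction ($\Leftarrow$) is immediate: a normal form with $\#N = \#M = 1$ is literally a single conjunction of non-strict and strict inequalities, and since $\tr{c} x < \delta$ is the negated atom $\neg(-\tr{c} x \leq -\delta)$, both formulae are conjunctions of atoms and negated atoms without disjunctions, so $\prog$ is conjunctive in the sense of \autoref{def-linear-lassos}. For ($\Rightarrow$), a conjunctive program has $\stemt$ and $\loopt$ built as conjunctions of literals with negations only before atoms; applying the literal rewriting above to each conjunct and collecting all non-strict inequalities into $B_1$ (resp. $A_1$) and all strict inequalities into $B_1'$ (resp. $A_1'$) gives a normal form with a single disjunct.

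The main obstacle is the negated-equation literal. Unlike a negated non-strict inequality, $\neg(\tr{c} x = \delta)$ is genuinely non-convex and unavoidably introduces a disjunction, so a conjunctive program must not contain it if the forward direction is to yield $\#M = 1$. I would resolve this by reading $\neq$ as an abbreviation for the disjunction $(\tr{c} x < \delta) \lor (-\tr{c} x < -\delta)$, so that the requirement ``contains no disjunctions'' in the definition of conjunctive already excludes disequations; the only negated atoms that survive in a conjunctive program are then negated non-strict inequalities, each contributing a single strict inequality. With disequations thus ruled out, every literal of a conjunctive program is a convex constraint and the collection step produces exactly one disjunct, closing the equivalence.
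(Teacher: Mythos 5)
Your proposal is correct and follows essentially the same route as the paper's proof: transform to negation normal form, rewrite the negated atoms via $\neg(a \leq b) \equiv -b < -a$ and $a \neq b \equiv a < b \;\lor\; a > b$ (and equations as pairs of non-strict inequalities), then convert to disjunctive normal form and read off the rows of the matrices. Your closing discussion of disequations is an addition rather than a divergence: the paper's own proof relies on exactly the same reading --- a disequation expands into a disjunction and is therefore excluded from conjunctive programs --- but states only the transformation and leaves the ``if and only if'' characterization implicit, so your explicit treatment of both directions fills in a step the paper glosses over.
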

\begin{proof}
We transform the formulae $\stemt$ and $\loopt$ in negation normal
form such that negations occur only before atoms.  Then we rewrite
negated atoms using the following identities.
\begin{align*}
\neg a \leq b \equiv -b < -a && \neg a < b \equiv -b \leq -a && a \neq
b \equiv a < b \;\lor\; a > b
\end{align*}
Additionally, \emph{true} can be rewritten as $0 \leq 0$
and \emph{false} as $0 \leq -1$.  Finally, we transform obtained the
formulae in disjunctive normal form.
\end{proof}
According to \autoref{lem-lasso-nf}, $\stemt$ and $\loopt$ correspond
geometrically to a union of convex polyhedra
(see \autoref{fig-convex-loop}).

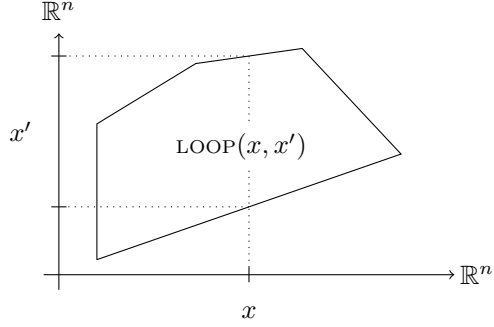
\begin{figure}[ht]
\begin{center}
\begin{tikzpicture}
\draw[->] (-0.2, 0) -- (5.2, 0);
\node at (5.5, 0) {$\mathbb{R}^n$};
\draw[->] (0, -0.2) -- (0, 3.2);
\node at (0, 3.5) {$\mathbb{R}^n$};

\draw (0.5, 0.2) -- (4.5, 1.6) -- (3.2, 3.0) -- (1.8, 2.8)
  -- (0.5, 2.0) -- (0.5, 0.2);

\draw (2.5, -0.1) -- (2.5, 0.1);
\node at (2.5, -0.5) {$x$};
\draw[dotted] (2.5, 0.1) -- (2.5, 2.9) -- (0.1, 2.9);
\draw[dotted] (2.5, 0.9) -- (0.1, 0.9);
\draw (-0.1, 2.9) -- (0.1, 2.9);
\draw (-0.1, 0.9) -- (0.1, 0.9);
\node at (-0.5, 1.9) {$x'$};

\node[fill=white] at (2.4, 1.7) {$\loopt(x, x')$};
\end{tikzpicture}
\end{center}
\caption{The loop transition of a conjunctive linear lasso program
geometrically corresponds to a polyhedron.  The $n$-dimensional state
spaces $\mathbb{R}^n$ of $x$ and $x'$ are shown compactly as either
axis.  The successor state $x'$ to a state $x$ is chosen
non-deterministically from the possible pairs $(x, x')
\in \loopt$.}
\label{fig-convex-loop}
\end{figure}

\begin{example}\label{ex-running-2}
The program $\prog_{y \geq 1}$ from \autoref{ex-running-1} is a
conjunctive linear lasso program.  Its normal form is
\begin{align*}
\stemt(q, y) \equiv\; &y \leq 1 \;\land\; -y \leq -1, \\
\loopt(q, y, q', y') \equiv\; &-q \leq 0 \;\land\; q' - q + y \leq 0
  \;\land\; -q' + q - y \leq 0 \\
  &\;\land\; y' - y - 1 \leq 0 \;\land\; -y' + y + 1 \leq 0.
\end{align*}
\end{example}

\section{Invariants}
\label{sec-invariants}

Informally, an invariant is a property that always holds during
program execution.  Although our primary goal is to prove termination,
the inference of invariants can uncover information critical to this
goal.  In \autoref{sec-augmentation} we will discuss how we involve
invariants in the ranking function discovery process.

\begin{definition}[Invariant]\label{def-invariant}
A state $\sigma \in \Sigma$ of a lasso program $\prog$
is \emph{reachable} iff there is an execution of $\prog$ containing
$\sigma$.  A formula $\psi$ is called an \emph{invariant} of $\prog$
iff $\models \psi(\sigma)$ for all reachable states $\sigma$ of
$\prog$.
\end{definition}

\begin{definition}[Affine-linear invariant]\label{def-al-invariant}
An invariant $\psi(x)$ is an \emph{affine-linear invariant} if it is
of the form
\begin{align*}
\psi(x) \equiv \tr{s}x + t \rhd 0
\end{align*}
for some vector $s \in \mathbb{K}^n$, some value $t \in \mathbb{K}$,
and $\rhd \in \{ >, \geq \}$.  If $\rhd =\; >$, the invariant $\psi(x)$
is called \emph{strict} invariant; if $\rhd =\; \geq$, the invariant
$\psi(x)$ is called \emph{non-strict} invariant.
\end{definition}

\begin{definition}[Inductive invariant]\label{def-inductive-invariant}
A formula $\psi$ is called an \emph{inductive invariant} for a linear
lasso program $\prog$ iff the following two formulae hold.
\begin{align}
\forall \sigma \in \Sigma.&\;
\stemt(\sigma) \rightarrow \psi(\sigma) \label{eq-ii}\tag{II} \\
\forall \sigma, \sigma' \in \Sigma.&\;
\psi(\sigma) \land \loopt(\sigma, \sigma') \rightarrow \psi(\sigma')
\label{eq-ic}\tag{IC}
\end{align}
\end{definition}

\begin{example}\label{ex-running-3}
$\prog_{y \geq 1}$ from \autoref{ex-running-1} has the affine-linear
inductive invariant $y \geq 1$, since it is implied by the stem and
\begin{align*}
\mathbb{K} \models \forall q, y, q', y'.\;
  y \geq 1 \;\land\; (q \geq 0 \;\land\; q' = q - y
  \;\land\; y' = y + 1) \rightarrow y' \geq 1.
\end{align*}
\end{example}

\begin{remark}\label{rem-inductive}
Every inductive invariant is an invariant.
\end{remark}
\begin{proof}
By induction using \eqref{eq-ii} and \eqref{eq-ic}.
\end{proof}

\noindent
Invariants, that are not inductive invariants are
called \emph{non-inductive} invariants.

\begin{example}\label{ex-inductive-invariant}
The converse to \autoref{rem-inductive} does not hold: non-inductive
invariants exist.  Consider the program $\prog_\mathrm{inv}$:
\begin{align*}
\stemt(y, z) &\equiv y \geq 0 \;\land\; z \geq 0 \\
\loopt(y, z, y', z') &\equiv y' = z \;\land\; z' = y
\end{align*}
$y \geq 0$ and $z \geq 0$ are invariants of $\prog_\mathrm{inv}$:
initially $y$ and $z$ are non-negative and their values do not
decrease in the loop transition.  However, neither of the two
invariants is inductive since
\begin{align*}
y \geq 0 \;\land\; y' = z \;\land\; z' = y \rightarrow y' \geq 0
\end{align*}
is false for $y = z' = 1$, and $z = y' = -1$ (and analogously for
$z \geq 0$).  Intuitively, the conclusion $y' \geq 0$ depends on the
information that $z \geq 0$ and vice versa, so neither invariant can
be proven inductively on their own.
\end{example}

\stepcounter{theorem}

\section{Termination and Ranking Functions}
\label{sec-termination}

\begin{definition}[Termination]\label{def-termination}
A lasso program \emph{terminates} iff it has no execution of infinite
length.
\end{definition}

\begin{definition}[Ranking function]\label{def-rf}
Let $\alpha$ be a set with well-ordering relation $<_\alpha$.
A \emph{ranking function} $f$ for a lasso program $\prog =
(\stemt, \loopt)$ on domain $\Sigma$ is a function
$f: \Sigma \to \alpha$ such that for all reachable states $\sigma$ and $\sigma'$
\begin{align}
\loopt(\sigma, \sigma') \rightarrow f(\sigma') <_\alpha f(\sigma).
\label{eq-rf}\tag{RF}
\end{align}
\end{definition}

If $\Sigma$ is countable, we can always make $\alpha$ countable by
choosing the image of $f$ together with the induced well-ordering on
this subset of $\alpha$.  By \autoref{lem-well-ordering} there is
always an ordinal $\beta$ and an isomorphism
$h: \alpha \rightarrow \beta$ such that $h \circ f$ is a ranking
function on $\beta$.  Without loss of generality we can therefore
assume that we are ranking over ordinals.

\begin{example}\label{ex-running-4}
The linear lasso program $\prog_{y \geq 1}$
from \autoref{ex-running-1} has the ranking function $f(q, y) = q + 1$
mapping all but the last state of every execution to a non-negative
integer.  From \autoref{ex-running-3} we know that $y \geq 1$ is an
invariant of $\prog_{y \geq 1}$, hence we can conclude that $f(q, y)$
is well-defined and decreases for every loop transition.  The ordinal
isomorphic to the non-negative integers is $\omega$, the first
infinite ordinal.
\end{example}

The next lemma illuminates the relationship between termination and
ranking functions and justifies our search for the latter for the goal
of proving termination.

\begin{lemma}\label{lem-rf-termination}
A lasso program $\prog$ has a ranking function if and only if it terminates.
\end{lemma}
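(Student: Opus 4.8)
The plan is to prove both directions of the biconditional separately.

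For the easier direction, I would assume $\prog$ has a ranking function $f : \Sigma \to \alpha$ and show termination. Suppose for contradiction that $\prog$ does not terminate, so by \autoref{def-termination} it has an infinite execution $\sigma_0 \sigma_1 \sigma_2 \ldots$. By \autoref{def-lasso-semantics}, each consecutive pair satisfies $\loopt(\sigma_i, \sigma_{i+1})$, and every $\sigma_i$ is reachable (it occurs in an execution, which is exactly the condition in \autoref{def-invariant}). Applying the defining property \eqref{eq-rf} at each step yields the strictly descending chain
\begin{align*}
f(\sigma_0) >_\alpha f(\sigma_1) >_\alpha f(\sigma_2) >_\alpha \cdots.
\end{align*}
The set $\{ f(\sigma_i) \mid i \geq 0 \}$ is a non-empty subset of $\alpha$, so by \autoref{def-well-ordering} it has a $<_\alpha$-minimal element, say $f(\sigma_k)$. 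But then $f(\sigma_{k+1}) <_\alpha f(\sigma_k)$ contradicts minimality. Hence no infinite execution exists and $\prog$ terminates.

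For the converse, I would assume $\prog$ terminates and construct a ranking function. The natural candidate is to rank each state by the supremum of the lengths of executions continuing from it, or more robustly, to use a rank-function built by well-founded recursion on the transition relation restricted to reachable states. Concretely, I would let $R$ be the set of reachable states and consider the relation on $R$ given by $\sigma' \prec \sigma$ iff $\loopt(\sigma, \sigma')$. Termination means $\prec$ admits no infinite descending chain on reachable states, i.e.\ it is well-founded; I would then assign to each reachable state its rank in this well-founded relation (an ordinal), and extend $f$ arbitrarily to non-reachable states. By construction $\loopt(\sigma, \sigma')$ with both reachable forces $f(\sigma') <_\alpha f(\sigma)$, which is exactly \eqref{eq-rf}, and by \autoref{lem-well-ordering} the image sits inside an ordinal so the codomain is well-ordered as required by \autoref{def-rf}.

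The main obstacle is the converse direction: I must argue that termination (no infinite execution) genuinely yields well-foundedness of $\prec$ on reachable states, and that the ordinal-rank assignment is well-defined. The subtlety is that a reachable state could in principle have successors via $\loopt$ that are themselves unreachable, or the branching could be infinite, so I need to confine attention to reachable states and invoke that every candidate infinite descending chain through reachable states would assemble into an infinite execution (prepending a stem-prefix that witnesses reachability of the chain's first state), contradicting termination. Once well-foundedness is secured, the rank function exists by the standard recursion theorem for well-founded relations, and composing with the isomorphism from \autoref{lem-well-ordering} discharges the well-ordered-codomain requirement.
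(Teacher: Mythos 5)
Your proposal is correct and follows essentially the same route as the paper: the forward direction by observing that an infinite execution would yield an infinite strictly $<_\alpha$-descending sequence, contradicting the well-ordering, and the converse by restricting to reachable states and assigning ordinal ranks via well-founded recursion on the (reversed) loop relation, exactly as the paper does with $f(\sigma) = \sup \{ f(\sigma') \mid (\sigma, \sigma') \in \loopt \} + 1$. You are in fact somewhat more careful than the paper, which merely calls the reachable-state graph ``acyclic'' where your explicit argument --- assembling any infinite descending chain of reachable states into an infinite execution by prepending a stem prefix --- is the justification actually needed for the recursion to be well-defined.
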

\begin{proof}
The image of the states in every execution of $\prog$ under $f$ is a
strictly decreasing sequence in $\alpha$ with respect to $<_\alpha$
by \eqref{eq-rf}.  Because $<_\alpha$ is a well-ordering on $\alpha$,
this sequence cannot be infinite.

Conversely, for reachable states $\Sigma' \subseteq \Sigma$, the graph
$G = (\Sigma', \loopt)$ is acyclic by assumption.  Hence the ranking
function $f: \Sigma \to \mathbf{On}$ that assigns every state an
ordinal number such that $f(\sigma) = \sup \{ f(\sigma') \mid
(\sigma, \sigma') \in \loopt \} + 1$ is well-defined.
\end{proof}

Even though every terminating lasso program has a ranking function, in
general they can be arbitrarily complicated and their existence is
undecidable according to the following theorem.  Consequently, in this
work we want to restrict ourselves to the proper subclass of lasso
programs which are linear as well as consider only specific classes of
ranking functions.

\begin{theorem}[Halting problem for lasso programs~\cite{Tiwari04}]
\label{thm-halting}
Termination of linear lasso programs is undecidable.
\end{theorem}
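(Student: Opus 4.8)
The plan is to establish undecidability by reduction from the halting problem for deterministic two-counter (Minsky) machines, which is known to be undecidable. Such a machine $M$ has two non-negative integer counters and a finite list of labelled instructions of three kinds: increment a counter and jump to a successor label, decrement a counter (when positive) and jump, and a zero-test that branches according to whether a counter equals zero; one distinguished label $\mathrm{HALT}$ has no successor. I would show how to construct, uniformly from $M$, a linear lasso program $\prog_M$ whose execution faithfully simulates the run of $M$, so that $\prog_M$ terminates in the sense of \autoref{def-termination} if and only if $M$ halts.

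First I would fix the state space. Introduce variables $x, y$ for the two counters and a variable $pc$ encoding the current instruction label by an integer. The stem $\stemt$ sets $pc$ to the initial label and $x = y = 0$, i.e.\ the initial configuration. The loop $\loopt$ is a disjunction with one disjunct per non-$\mathrm{HALT}$ instruction, each guarded by an equation $pc = L$ selecting the active instruction. For instance, an increment of $x$ at label $L$ with successor $L'$ contributes $pc = L \land x' = x + 1 \land y' = y \land pc' = L'$, while a zero-test on $x$ at $L$ contributes the two disjuncts $pc = L \land x = 0 \land x' = x \land y' = y \land pc' = L'$ and $pc = L \land x \geq 1 \land x' = x - 1 \land y' = y \land pc' = L''$. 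Every guard and every update is an affine-linear (in)equality, so each disjunct is a conjunction of linear atoms and the whole program is a linear lasso program already in the normal form of \autoref{lem-lasso-nf}.

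The correspondence then follows by tracing executions against \autoref{def-lasso-semantics}. Since $M$ is deterministic, for each reachable configuration exactly one disjunct's guard is satisfiable, so $\prog_M$ admits a unique execution that visits precisely the configurations $M$ passes through. The execution can only get stuck when $pc$ reaches $\mathrm{HALT}$, for which no matching disjunct exists; hence it is finite if and only if $M$ halts and infinite if and only if $M$ runs forever. Therefore $\prog_M$ terminates iff $M$ halts, and a decision procedure for termination of linear lasso programs would decide the halting problem for two-counter machines, a contradiction.

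The point requiring care---and what I expect to be the principal obstacle---is that the domain $\mathbb{K}$ is a field (the rationals or a real closed field), not the integers, so a priori the guards $x = 0$ and $x \geq 1$ need not cover all values of $x$, and one might fear spurious non-integer executions. This is resolved by observing that integrality is preserved by the dynamics: the stem initializes $x, y, pc$ to integers, and every update adds or subtracts $0$ or $1$, so every state reachable along an execution has integer coordinates, while the counters stay non-negative because a counter is decremented only under the guard $x \geq 1$. Along such reachable states the two zero-test guards are exhaustive and mutually exclusive, and no fractional configuration ever arises, so the simulation is exact over any such $\mathbb{K}$.
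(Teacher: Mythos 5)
Your proposal is correct and follows essentially the same route as the paper: a reduction from the halting problem for Minsky counter machines, encoding the program counter and registers as lasso variables, one loop disjunct per instruction, and an inductive integrality argument to rule out spurious non-integer executions over the field $\mathbb{K}$. The only differences are cosmetic (you use a two-counter machine with combined decrement/zero-test instructions, the paper uses $n$-counter machines with separate $\mathrm{INC}$, $\mathrm{DEC}$, $\mathrm{JZ}$ statements), and your explicit treatment of the integrality issue is, if anything, slightly more detailed than the paper's.
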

\begin{proof}
We reduce the halting problem for Minsky counter
machines~\cite{Minsky61} to lasso programs.  These counter machines
have a finite number of registers (each holding one non-negative
integer) and a programming in form of a finite sequence of statements.
Possible statements are
\begin{itemize}
\item $\mathrm{INC}(r_k)$: increment register $k$ by one,
\item $\mathrm{DEC}(r_k)$: decrement register $k$ by one, and
\item $\mathrm{JZ}(r_k, s_\ell)$: if register $k$ is zero, jump to
instruction $s_\ell$, otherwise continue.
\end{itemize}
Let $M$ be such an $n$-counter machine and let its sequence of
statements be $s_0, \ldots, s_m$.  We define a linear lasso program
$\prog = (\stemt, \loopt)$ over the variables $s, r_1, \ldots, r_n$ as
follows.
\begin{align*}
\stemt \equiv s = 0 \land \bigwedge_{j=1}^n r_j = 0
\end{align*}
The loop transition $\loopt$ is a large disjunction composed of the
following disjuncts constructed from the program instructions of $M$.
\begin{align*}
s_i &= \mathrm{INC}(r_k): && \big( s = i \;\land\; s' = s + 1
  \;\land\; r_k' = r_k + 1 \;\land\; \bigwedge_{j \neq k} r_j' = r_j \big) \\
s_i &= \mathrm{DEC}(r_k): && \big( s = i \;\land\; s' = s + 1
  \;\land\; r_k \geq 1 \;\land\; r_k' = r_k - 1
  \;\land\; \bigwedge_{j \neq k} r_j' = r_j \big) \\
&~ && \lor\; \big( s = i \;\land\; s' = s + 1 \;\land\; r_k < 1
  \;\land\; r_k' = 0 \;\land\; \bigwedge_{j \neq k} r_j' = r_j \big) \\
s_i &= \mathrm{JZ}(r_k, s_\ell): && \big( s = i \;\land\; s' = \ell
  \;\land\; r_k = 0 \;\land\; \bigwedge_j r_j' = r_j \big) \\
&~ && \lor\; \big( s = i \;\land\; s' = s + 1 \;\land\; r_i \neq 0
  \;\land\; \bigwedge_j r_j' = r_j \big)
\end{align*}
Given a run for the counter machine $M$ starting with empty registers
at instruction $0$, we can construct an execution for the lasso
program $\prog$ by assigning the current program position to $s$ and
the register content to $r_1, \ldots, r_n$.  Conversely, given an
execution of $\prog$, we conclude inductively that in every state the
program counter $s$ and the registers $r_1, \ldots, r_n$ contain only
integers.  Hence we can define a run of $M$ such that every execution
step of $M$ is given by a state of $\prog$.
\end{proof}

Because of this fundamental undecidability, any method trying to prove
termination of a given lasso program must be incomplete.  However,
Braverman showed the decidability of the termination of deterministic
linear lasso programs that have an affine-linear function as loop
transition~\cite{Braverman06}, extending the work of
Tiwari~\cite{Tiwari04}.  We conjecture that the termination of
general, conjunctive linear lasso programs is also decidable.  The
critical property here seems to be the convexity of the loop
transition.  Like linear functions, polyhedral transitions tend to
move variables into a particular direction (e.g. $y' \geq y + 1$) or
rotate them about (e.g. $y' = -y$);
see \autoref{ex-multiphase-rotation}.  If one could eliminate the
rotating behavior, any terminating linear lasso program should have a
multiphase ranking function (see \autoref{sec-rft-multiphase} for its
definition): since it is terminating, there must be an inequality
$\tr{a}x + b \geq 0$ that is eventually violated.  Hence the loop
implies $\tr{a}x' \leq \tr{c}x + e$ for some $c, e$ and we can proceed
to argument about $\tr{(c - a)}x + e \geq 0$ recursively.

\begin{conjecture}\label{conj-lasso}
Termination of conjunctive linear lasso programs over rational and
real variable domain is decidable.
\end{conjecture}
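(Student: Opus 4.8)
The plan is to settle the conjecture by combining a spectral decomposition of the loop transition, in the spirit of Tiwari~\cite{Tiwari04} and Braverman~\cite{Braverman06}, with the observation that convexity forces nontermination witnesses to be simple. Since we may restrict to the conjunctive case, the loop is a single polyhedron $P = \{ \abovebelow{x}{x'} \mid A\abovebelow{x}{x'} \le c \;\land\; A'\abovebelow{x}{x'} < c' \}$, and by \autoref{lem-rf-termination} deciding termination amounts to deciding, for such a $P$ together with the stem, whether every execution is finite. First I would use \autoref{lem-polyhedron-closure} to reduce the strict inequalities to non-strict ones, checking separately that passing to the closure introduces no spurious infinite executions on the boundary; this lets me work with a closed polyhedral relation throughout.

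The core of the argument is a dichotomy. On the one hand, because $P$ is convex, I would show that if $\prog$ does not terminate then it admits a \emph{polyhedral} recurrence set: a nonempty polyhedron $R$ that is reachable from $\stemt$ (witnessed by a single entering execution) and satisfies, for every $\sigma \in R$, that there is $\sigma' \in R$ with $\loopt(\sigma,\sigma')$. For a bounded-size template $R$ these conditions are expressible in the first-order theory of $\mathbb{K}$ and hence decidable, the universal closure condition being eliminated via \autoref{thm-motzkin}. On the other hand, if $\prog$ terminates, I would show it has a multiphase ranking function of bounded length, which by the completeness result of this work is discoverable by the constraint-solving method. The engine of this step is the recursive peeling argument sketched before the conjecture: termination forces some guard inequality $\tr{a}x + b \ge 0$ to be eventually violated on every execution, the loop entails $\tr{a}x' \le \tr{c}x + e$ for suitable $c,e$, and one recurses on $\tr{(c-a)}x + e \ge 0$, each peeled constraint contributing one phase. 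I would control the recursion by a well-founded measure --- the dimension of the span of the not-yet-peeled constraints --- so that the number of phases is a priori bounded by $n$ and the template search is finite.

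A decision procedure then runs the two searches in parallel: enumerate multiphase templates up to the bounded length and test each with the sound-and-complete method, while simultaneously searching for a polyhedral recurrence set. Exactly one search succeeds, and termination of the procedure is guaranteed by the a priori bounds on both witness sizes.

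The hard part will be the rotational behavior. Complex eigenvalues of modulus one in the deterministic case, and the analogous oscillatory nondeterminism of a general polyhedron, can produce terminating loops for which no guard inequality is ever violated monotonically, so the peeling recursion stalls and no finite multiphase ranking function need exist; the rotation $y' = -y$ already exhibits this phenomenon. To close the gap I would first quotient out the maximal purely rotational subspace --- making precise, for nondeterministic polyhedra, the subspace on which $P$ acts as an isometry --- and prove that termination is insensitive to this quotient. Establishing that on the rotation-free quotient termination is genuinely \emph{equivalent} to the existence of a bounded-length multiphase ranking function, together with an \emph{effective} bound on the number of phases, is exactly the step I expect to resist a routine argument, and it is precisely this completeness-with-explicit-bound that the present work must leave open.
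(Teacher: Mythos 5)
There is no paper proof to compare your attempt against: the statement is \autoref{conj-lasso}, which the paper explicitly leaves open (``a proof remains due,'' as the future-work section puts it). The only thing the paper offers is the informal motivation immediately preceding the conjecture, and that motivation is essentially the peeling recursion you propose --- a guard $\tr{a}x + b \geq 0$ that must eventually be violated, the loop entailing $\tr{a}x' \leq \tr{c}x + e$, recursion on $\tr{(c-a)}x + e \geq 0$ --- together with the remark that rotation is the obstruction. So your proposal reproduces the paper's heuristic sketch, wraps it in a two-sided search, and then, to your credit, concedes that the decisive step is open. Judged as a proof, it has three genuine gaps.

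First, the opening reduction fails: replacing strict by non-strict inequalities via \autoref{lem-polyhedron-closure} does not preserve termination. The conjunctive loop $x > 0 \;\land\; x' = 0$ terminates (every execution has length at most two), but its closure $x \geq 0 \;\land\; x' = 0$ admits the infinite execution $0, 0, 0, \ldots$; so ``checking separately that passing to the closure introduces no spurious infinite executions'' is not a side condition but the entire difficulty of strict inequalities, and \autoref{thm-motzkin} was introduced in this paper precisely because no such reduction exists. Second, the terminating side of your dichotomy is false as stated: \autoref{ex-multiphase-rotation} exhibits a terminating conjunctive linear lasso program with no multiphase ranking function of \emph{any} length, so ``terminating implies bounded-length multiphase ranking function'' cannot serve as the engine; your proposed repair --- quotienting out the subspace on which the polyhedral relation ``acts as an isometry'' --- is not defined for nondeterministic relations, and proving termination invariant under that quotient \emph{is} the conjecture, not a lemma toward it. Third, the nonterminating side is equally unsupported: there is no argument that a nonterminating conjunctive linear lasso program over $\mathbb{Q}$ or $\mathbb{R}$ must admit a polyhedral recurrence set, let alone one with an a priori bounded number of inequalities (and note that the closedness condition $\forall \sigma \in R\, \exists \sigma' \in R$ is a $\forall\exists$ statement, which \autoref{thm-motzkin} does not eliminate; only Tarski-style quantifier elimination gives decidability for a fixed template). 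Without completeness of \emph{both} witness classes, your parallel search is a partial procedure that can diverge on some inputs, i.e.\ exactly what we already have, not a decision procedure. The proposal therefore repackages the conjecture rather than settling it.
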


Decidability of termination does hold for integer domains if the lasso
program's coefficients allow real numbers~\cite{BGM12}.


\clearpage
\chapter{Ranking Function Templates}
\label{ch-templates}

This chapter is centered around the notion of \emph{ranking function
templates}.  The concept is introduced in \autoref{sec-rft} together
with the auxiliary concept of transforming affine-linear functions to
functions with ordinals as image.  We then discuss three important
examples of linear ranking function templates, multiphase
in \autoref{sec-rft-multiphase}, piecewise in \autoref{sec-rft-pw} and
lexicographic in \autoref{sec-rft-lex}.  An overview over the results
on our ranking function templates is given
in \autoref{sec-rft-overview}.

\section{Definition}
\label{sec-rft}

\begin{definition}[Ranking function template]\label{def-rft}
A quantifier-free formula $\T(x, x')$ containing function symbols and
variables is called a \emph{ranking function template} iff for every
lasso program $\prog = (\stemt, \loopt)$, the satisfiability of
\begin{align}
\forall \sigma, \sigma' \in \Sigma.\;
  \loopt(\sigma, \sigma') \rightarrow \T(\sigma, \sigma')
\label{eq-rft}
\end{align}
implies that $\prog$ terminates.
If \eqref{eq-rft} holds for a program $\prog$, we say that
$\prog$ \emph{instantiates the template $\T$}.
\end{definition}

The ranking function template is our instrument for proving
termination.  An assignment to the function symbols and variables
gives rise to a ranking function.  Together with a set of supporting
invariants, this constitutes a termination argument.  All ranking
function templates we consider can be encoded in linear arithmetic.

We use the term \emph{affine-linear function symbol} $f(x)$ as a
shorthand for $\tr{s} x + t$ for a vector $s \in \mathbb{K}^n$ and a
variable $t \in \mathbb{K}$.

\begin{definition}[Linear ranking function template]
\label{def-linear-rft}
Let $D$ be a finite set of variables and let $F$ be a finite set of
affine-linear function symbols.  A \emph{linear ranking function
template} $\T(x, x')$ over $F$ and $D$ is a ranking function template
that can be written as a boolean combination of atoms of the form
\begin{align*}
\sum_{f \in F} \big( \alpha_f \cdot f(x) + \beta_f \cdot f(x') \big)
+ \sum_{d \in D} \gamma_d \cdot d \rhd 0,
\end{align*}
where $\alpha_f, \beta_f, \gamma_d \in \mathbb{K}$ are constants and
$\rhd \in \{ \geq, > \}$.  A variable $f \in F$ (respectively $d \in
D$) \emph{occurs in an atom $A$} of $\T(x, x')$ iff it has a non-zero
coefficient $\alpha_f$ or $\beta_f$ (respectively $\gamma_d$) in $A$.
\end{definition}

For brevity we will also write \emph{template} instead of ranking
function template and \emph{linear template} instead of linear ranking
function template.  Moreover, we disallow empty atoms of the form
$0 \rhd 0$ for notational convenience.

In order to establish that a formula conforming to the syntactic
requirements is indeed a ranking function template, \eqref{eq-rft}
must entail termination of the linear lasso program $\prog$.
According to \autoref{lem-rf-termination}, we can equally well show
that the satisfiability of \eqref{eq-rft} implies the existence of a
ranking function for $\prog$.

\begin{example}\label{ex-rft}
The formula $false$ is a ranking function template:
\begin{align*}
\forall \sigma, \sigma' \in \Sigma.\; \loopt(\sigma, \sigma')
  \rightarrow false
\end{align*}
is satisfiable iff $\loopt \equiv false$ and hence there can be no
execution of length greater than $1$ and therefore $\prog$ terminates.
$false$ is even a linear template; it can be written as $\delta >
0 \;\land\; -\delta > 0$ with variables $D = \{ \delta \}$.
\end{example}

The following linear template is applied by Podelski and Rybalchenko
in \cite{PR04}.

\begin{definition}\label{def-rft-affine}
We define the \emph{affine ranking function template} (affine
template) over the function symbols $F = \{ f \}$ and variables $D
= \{ \delta \}$ as
\begin{align}
\delta > 0 \;\land\; f(x) > 0 \;\land\; f(x') < f(x) - \delta.
\tag{$\T_\mathrm{affine}$}\label{eq-rft-affine}
\end{align}
\end{definition}

We will argue in \autoref{lem-rft-affine} that the affine template is
indeed a ranking function template; let us now check the additional
syntactic requirements for \ref{eq-rft-affine} to be a \emph{linear}
ranking function template.
\begin{align*}
\delta > 0 &\equiv
  \big( 0 \cdot f(x) + 0 \cdot f(x') \big)
  + 1 \cdot \delta > 0 \\
f(x) > 0 &\equiv
  \big( 1 \cdot f(x) + 0 \cdot f(x') \big)
  + 0 \cdot \delta > 0 \\
f(x') < f(x) - \delta &\equiv
  \big( 1 \cdot f(x) + (- 1) \cdot f(x') \big)
  + (-1) \cdot \delta > 0
\end{align*}
Thus we can write every atom of \ref{eq-rft-affine} in the required
form.

\begin{example}\label{ex-running-5}
Consider the program $\prog_{y \geq 1}$ from \autoref{ex-running-1}.
We check if $\prog_{y \geq 1}$ instantiates the affine
template \ref{eq-rft-affine}:
\begin{align*}
\forall q, y, q', y'.\; &(q \geq 0 \;\land\; q' = q - y
  \;\land\; y' = y + 1) \\
&\rightarrow (\delta > 0 \;\land\; f(q, y) > 0
  \;\land\; f(q', y') < f(q, y) - \delta)
\end{align*}
This formula is not satisfiable; essentially because it cannot be
inferred that $y$ is positive.  However, in \autoref{ex-running-3} we
showed that $y \geq 1$ is an invariant of $\prog_{y \geq 1}$ and hence
we can regard the semantically equivalent loop transition
\begin{align*}
\loopt'(q, y, q', y')
&\equiv y \geq 1 \;\land\; \loopt(q, y, q', y') \\
&\equiv y \geq 1 \;\land\; q \geq 0 \;\land\; q' = q - y
  \;\land\; y' = y + 1.
\end{align*}
Now \ref{eq-rft-affine} can be instantiated for $f(q, y) = q + 1$ and
$\delta = \frac{1}{2}$ from \autoref{ex-running-4} yielding the
following valid formula.
\begin{align*}
\forall q, y, q', y'.\;  &(y \geq 1 \;\land\; q \geq 0
  \;\land\; q' = q - y \;\land\; y' = y + 1) \\
&\rightarrow \Big( 1 > 0 \;\land\; q + 1 > 0
  \;\land\; q' + 1 < q + 1 - \frac{1}{2} \Big)
\end{align*}
\end{example}

\begin{example}\label{ex-delta}
Why do we need the positive variable $\delta$ in \ref{eq-rft-affine}?
Assume we use the following template:
\begin{align}
f(x) > 0 \;\land\; f(x') < f(x)
\label{eq-ex-delta}
\end{align}
The formula \eqref{eq-ex-delta} does not imply termination as required
by \autoref{def-rft}: $f$ could exhibit zeno behavior by attaining the
sequence of positive values
\begin{align*}
1,\; \frac{1}{2},\; \frac{1}{4},\; \frac{1}{8},\; \ldots
\end{align*}
and hence permit infinite executions.
\end{example}

Because our ranking function templates are constructed from
affine-linear function symbols, we define a conversion to functions
with the ordinal $\omega$ as image.  Transforming affine-linear
functions in this fashion yields a well ordering on their image (the
well-ordering $\in$ on ordinals).  From these `elementary' ranking
functions we will construct the ranking functions associated with the
templates.
\begin{definition}\label{def-ordinal-ranking}
Given an affine-linear function $f$ and a number $\delta > 0$ called
the \emph{step size of $f$}, we define the \emph{ordinal ranking
equivalent} of $f$ as
\begin{align}
\widehat{f}(x) =
\begin{cases}
\lceil \frac{f(x)}{\delta} \rceil, & \text{if } f(x) > 0,
\text{ and} \\
0 & \text{otherwise.}
\end{cases}
\tag{Rk}\label{eq-rank}
\end{align}
\end{definition}

$\lceil \cdot \rceil$ denotes the ceiling function that assigns to
every real number $r$ the smallest natural number that is larger or
equal to $r$.  Since the natural numbers coincide with the finite
ordinals, we can use $\lceil \cdot \rceil$ to convert a real number
into an ordinal.  Ordinal ranking equivalents are well-defined;
$\frac{f(x)}{\delta}$ is positive for $f(x) > 0$ since $\delta > 0$.
Although ordinal ranking equivalents depend on the step size, for
notational simplicity we do not explicitly denote it in $\widehat{f}$.

\begin{example}\label{ex-running-6}
Consider the ranking function $f(q, y) = q + 1$ of step size $\delta = \frac{1}{2}$
from \autoref{ex-running-5}.  Its ordinal ranking equivalent is
\begin{align*}
\widehat{f}(q, y) =
\begin{cases}
\lceil 2(q + 1) \rceil, & \text{if } q + 1 > 0, \text{ and} \\
0 & \text{otherwise.}
\end{cases}
\end{align*}
\end{example}

A formula $\T$ is a ranking function template if its satisfiability
gives rise to a ranking function.  We use ordinal ranking equivalents
to transform the assignment to the function symbols from $\T$ to
functions over ordinals.  From these we build the ranking function;
the image of this ranking function is itself an ordinal and we call
this ordinal the \emph{ranking structure of $\T$}.

\begin{lemma}\label{lem-ordinal-ranking}
Let $f$ be an affine-linear function of step size $\delta > 0$ and let
$x$ and $x'$ be two states.  If $f(x) > 0$ and $f(x) - f(x')
> \delta$, then $\widehat{f}(x) > 0$ and $\widehat{f}(x)
> \widehat{f}(x')$.
\end{lemma}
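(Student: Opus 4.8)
The plan is to verify the two conclusions $\widehat{f}(x) > 0$ and $\widehat{f}(x) > \widehat{f}(x')$ separately, using only the defining cases \eqref{eq-rank} of the ordinal ranking equivalent together with two elementary properties of the ceiling function, namely $r \leq \lceil r \rceil$ and $\lceil r \rceil < r + 1$ for every real number $r$.

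First I would dispatch the positivity claim. Since $f(x) > 0$ and $\delta > 0$, the quotient $\frac{f(x)}{\delta}$ is strictly positive, so \eqref{eq-rank} applies in its first case and $\widehat{f}(x) = \lceil \frac{f(x)}{\delta} \rceil \geq 1 > 0$. This simultaneously settles the first conclusion and supplies the positivity of $\widehat{f}(x)$ that I reuse below.

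For the strict decrease I would distinguish on the sign of $f(x')$. If $f(x') \leq 0$, then \eqref{eq-rank} gives $\widehat{f}(x') = 0$, and the already-proven $\widehat{f}(x) > 0$ finishes this case at once. If instead $f(x') > 0$, both equivalents are genuine ceilings; dividing the hypothesis $f(x) - f(x') > \delta$ by $\delta > 0$ yields $\frac{f(x)}{\delta} > \frac{f(x')}{\delta} + 1$. The only substantive step is then the elementary gap estimate: writing $a = \frac{f(x)}{\delta}$ and $b = \frac{f(x')}{\delta}$ with $a > b + 1$, the chain $\lceil b \rceil < b + 1 < a \leq \lceil a \rceil$ forces $\lceil a \rceil > \lceil b \rceil$, and since both sides are natural numbers this is precisely $\widehat{f}(x) > \widehat{f}(x')$ under the identification of natural numbers with finite ordinals.

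The argument is short, and the one thing to watch is strictness in the ceiling bounds: the full step of more than $\delta$ separating $f(x)$ from $f(x')$ is exactly what produces a scaled gap exceeding $1$, which in turn separates the two ceilings. Without this full step—if one assumed only $f(x') < f(x)$ as in \autoref{ex-delta}—the scaled gap $a - b$ could be arbitrarily small and the two ceilings could coincide, which is the zeno pathology that the step size $\delta$ is introduced to rule out. No structural hypothesis on $\mathbb{K}$ beyond the order properties underlying $\lceil \cdot \rceil$ is required.
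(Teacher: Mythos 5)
Your proof is correct and follows essentially the same route as the paper's: establish $\widehat{f}(x) > 0$ from the first case of \eqref{eq-rank}, split on whether $\widehat{f}(x')$ vanishes, and in the remaining case divide the hypothesis $f(x) - f(x') > \delta$ by $\delta$ to obtain a scaled gap exceeding $1$ that separates the two ceilings. The only difference is that you spell out the elementary ceiling estimate $\lceil b \rceil < b + 1 < a \leq \lceil a \rceil$, which the paper leaves implicit behind its final ``hence.''
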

\begin{proof}
From $f(x) > 0$ follows that $\widehat{f}(x) > 0$.  Hence
$\widehat{f}(x) > \widehat{f}(x')$ in the case $\widehat{f}(x') = 0$.
For $\widehat{f}(x') > 0$, we use the fact that $f(x) - f(x')
> \delta$ to conclude that $\frac{f(x)}{\delta} - \frac{f(x')}{\delta}
> 1$ and hence $\widehat{f}(x') > \widehat{f}(x)$.
\end{proof}

We can immediately apply this lemma to show that \ref{eq-rft-affine} is
indeed a ranking function template.

\begin{lemma}\label{lem-rft-affine}
\ref{eq-rft-affine} is a linear ranking function template.
\end{lemma}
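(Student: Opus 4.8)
The plan is to verify the one semantic requirement of \autoref{def-rft} that remains open, since the syntactic conditions for \ref{eq-rft-affine} to be a \emph{linear} template were already checked in the display preceding the lemma. Concretely, I would assume that the instantiation formula
\begin{align*}
\forall \sigma, \sigma' \in \Sigma.\; \loopt(\sigma, \sigma')
  \rightarrow \big( \delta > 0 \;\land\; f(\sigma) > 0
  \;\land\; f(\sigma') < f(\sigma) - \delta \big)
\end{align*}
is satisfiable and derive that $\prog$ terminates. By \autoref{lem-rf-termination} it suffices to exhibit a ranking function, and the natural candidate is the ordinal ranking equivalent $\widehat{f}$ of the assigned affine-linear function $f$, taken with step size $\delta$.

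First I would fix a satisfying assignment, which supplies a concrete affine-linear function $f$ (an instance of $\tr{s}x + t$) and a concrete value $\delta \in \mathbb{K}$. If $\loopt$ is empty the program terminates trivially, since then no execution has length greater than one; so I may assume $\loopt$ is inhabited. For any pair $(\sigma,\sigma') \in \loopt$ the first conjunct of the consequent forces $\delta > 0$, which is precisely the hypothesis that \autoref{def-ordinal-ranking} needs for $\widehat{f}$ to be well-defined.

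Next I would show that $\widehat{f} : \Sigma \to \omega$ satisfies \eqref{eq-rf}. Take reachable states $\sigma, \sigma'$ with $\loopt(\sigma, \sigma')$. The instantiation formula yields $f(\sigma) > 0$ and $f(\sigma') < f(\sigma) - \delta$, the latter rewritten as the additive gap $f(\sigma) - f(\sigma') > \delta$. These are exactly the hypotheses of \autoref{lem-ordinal-ranking}, which then gives $\widehat{f}(\sigma) > 0$ and $\widehat{f}(\sigma) > \widehat{f}(\sigma')$; in particular $\widehat{f}(\sigma') <_\omega \widehat{f}(\sigma)$, so \eqref{eq-rf} holds. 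Hence $\widehat{f}$ is a ranking function into the well-ordered set $\omega$, and \autoref{lem-rf-termination} concludes that $\prog$ terminates.

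I do not expect a serious obstacle here: the real content is carried by \autoref{lem-ordinal-ranking}, whose gap-by-$\delta$ conclusion is exactly what rules out the zeno behavior illustrated in \autoref{ex-delta}. The only points that need care are the bookkeeping around the $\delta > 0$ conjunct (it lives inside the universal quantifier, but it pins down the single constant $\delta$ and so holds as soon as $\loopt$ is inhabited) and the rewriting of the strict inequality $f(\sigma') < f(\sigma) - \delta$ into the additive-gap form $f(\sigma) - f(\sigma') > \delta$ demanded by the lemma.
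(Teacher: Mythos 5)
Your proposal is correct and follows essentially the same route as the paper's proof: fix a satisfying assignment, observe that the atoms of \ref{eq-rft-affine} supply exactly the hypotheses of \autoref{lem-ordinal-ranking}, and conclude that the ordinal ranking equivalent $\widehat{f}$ with step size $\delta$ is a ranking function, whence termination follows by \autoref{lem-rf-termination}. The paper states this in two sentences; your additional care about the empty-loop case and the placement of the $\delta > 0$ conjunct under the quantifier is sound bookkeeping but not a different argument.
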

\begin{proof}
If \ref{eq-rft-affine} is implied by the loop, the assignment to $f$
and $\delta$ satisfies the requirements
of \autoref{lem-ordinal-ranking}.  Consequently, $\widehat{f}$ is a
ranking function for $\prog$ of step size $\delta$.
\end{proof}

\begin{example}\label{ex-rft-affine2}
Consider the simple non-conjunctive program $\prog_\mathrm{disj}$.
\begin{center}
\begin{minipage}{45mm}
\begin{lstlisting}
while ($q \geq 0$):
    if ($y > 0$):
        $q$ := $q - y - 1$;
    else:
        $q$ := $q + y - 1$;
\end{lstlisting}
\end{minipage}
\end{center}
Written as a linear lasso program, the stem and loop transitions are
\begin{align*}
\stemt \equiv\; \quad &\; true, \\
\loopt \equiv\; \quad &(q \geq 0 \;\land\; y > 0
    \;\land\; y' = y \;\land\; q' = q - y - 1) \\
  \lor\; &(q \geq 0 \;\land\; y \leq 0
    \;\land\; y' = y \;\land\; q' = q + y - 1).
\end{align*}
As \ref{eq-rft-affine} is implied by the loop: it has the assignment
$f(q, y) = q + 1$ and $\delta = \frac{1}{2}$.  This corresponds to the
ordinal ranking equivalent
\begin{align*}
r(q, y) = \widehat{f}(q, y) = \lceil q + 1 \rceil.
\end{align*}
\end{example}

\section{Multiphase Template}
\label{sec-rft-multiphase}

The multiphase ranking function template is targeted at programs that
go through different phases in their execution.  Each phase is ranked
with an affine-linear ranking function and the phase is considered to
be completed once this ranking function becomes non-positive.  This
yields a ranking structure of $\omega \cdot k$ as an $\omega$-ranking
is performed for each of the $k$ phases.

\begin{example}\label{ex-2phase-1}
Consider the program $\prog_{2\mathrm{-phase}}$
from \autoref{fig-multiphase-introduction}.
\begin{center}
\begin{minipage}{30mm}
\begin{lstlisting}
while ($q \geq 0$):
    $q$ := $q - y$;
    $y$ := $y + 1$;
\end{lstlisting}
\end{minipage}
\end{center}
Every execution of $\prog_{2\mathrm{-phase}}$ can be partitioned into
two phases; first $y$ increases until it is positive and then $q$
decreases until the loop condition $q \geq 0$ is violated.  Depending
on the initial values of $y$ and $q$, either phase might be skipped
altogether.
\end{example}

\begin{definition}\label{def-rft-multiphase}
We define the \emph{$k$-phase ranking function template} ($k$-phase
template) over the functions $F = \{ f_1, \ldots, f_k \}$ and
variables $D = \{ \delta_1, \ldots, \delta_k \}$ as follows.
\begin{align}
\begin{aligned}
&\bigwedge_{i=1}^k \delta_i > 0 \\
\land\; &\bigvee_{i=1}^k f_i(x) > 0 \\
\land\; &\bigwedge_{i=1}^k \Big( f_i(x') < f_i(x) - \delta_i \;\lor\;
  \bigvee_{j=1}^{i-1} f_j(x) > 0 \Big)
\end{aligned}
\tag{$\T_{k\mathrm{-phase}}$}\label{eq-rft-multiphase}
\end{align}
\end{definition}
The multiphase ranking function given by an assignment to the template
$f_1, \ldots, f_k$ to \ref{eq-rft-multiphase} is in phase $i$ if
$f_i(x) > 0$ and $f_j(x) \leq 0$ for all $j < i$.  Line 2
in \ref{eq-rft-multiphase} states that the multiphase ranking function
is always in some phase $i$.  Line 3 states that if we are in a phase
$\geq i$, then $f_i$ has to be decreasing by at least $\delta_i > 0$.

Note that the $1$-phase template coincides with the affine template.

\begin{lemma}\label{lem-rft-multiphase}
\ref{eq-rft-multiphase} is a linear ranking function template.
\end{lemma}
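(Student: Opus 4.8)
The goal is to show that the $k$-phase template indeed qualifies as a linear ranking function template. By the definition of ranking function template, and invoking \autoref{lem-rf-termination}, it suffices to demonstrate that whenever the loop transition implies \ref{eq-rft-multiphase}, there exists a ranking function for $\prog$. The plan is to construct an explicit ranking function from the ordinal ranking equivalents $\widehat{f_1}, \ldots, \widehat{f_k}$ of the individual phases, and to verify that it decreases along every loop transition.

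The key idea is that the multiphase function is in phase $i$ precisely when $f_i(x) > 0$ and $f_j(x) \leq 0$ for all $j < i$; line~2 of \ref{eq-rft-multiphase} guarantees that for every reachable state there is at least one such phase, so the current phase is well-defined. The natural candidate for the ranking function is a lexicographic-style combination that first compares the phase index and then, within a phase, the ordinal ranking equivalent of the governing function. Concretely I would map a state $x$ in phase $i$ to the ordinal $\omega \cdot (k - i) + \widehat{f_i}(x)$, so that being in an earlier phase (smaller $i$) dominates, and within a fixed phase the value $\widehat{f_i}(x)$ drives the decrease. This realizes the ranking structure $\omega \cdot k$ advertised before the definition.

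To verify \eqref{eq-rf} I would take an arbitrary loop transition $(\sigma, \sigma')$ and consider the phase $i$ of $\sigma$ and the phase $i'$ of $\sigma'$. The main case distinction is whether the phase advances. If $i' > i$, the ordinal value strictly decreases because the leading term $\omega \cdot (k - i)$ strictly dominates $\omega \cdot (k - i') + \widehat{f_{i'}}(\sigma')$, using that $\widehat{f_{i'}}(\sigma')$ is a finite ordinal below $\omega$. If $i' = i$ (the phase does not advance), then since $\sigma'$ is not in any earlier phase, line~3 of \ref{eq-rft-multiphase} for index $i$ cannot be satisfied through its disjunct $\bigvee_{j=1}^{i-1} f_j(x) > 0$, so it must be satisfied through $f_i(\sigma') < f_i(\sigma) - \delta_i$; together with $f_i(\sigma) > 0$ and \autoref{lem-ordinal-ranking}, this gives $\widehat{f_i}(\sigma) > \widehat{f_i}(\sigma')$, hence a strict decrease. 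The case $i' < i$ cannot lower the value, so I must rule it out: I would argue that line~3 forces each $f_j$ with $j \leq i$ to either stay in a phase below $j$ or decrease, which prevents the active phase from regressing to a strictly smaller index.

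The main obstacle I anticipate is the last point, namely showing rigorously that the phase index never decreases along a transition. This requires arguing that if $f_j(\sigma) \leq 0$ for all $j < i$, then after the transition we still cannot have $f_{j_0}(\sigma') > 0$ for some $j_0 < i$ together with all earlier $f_{j}(\sigma') \leq 0$; I expect to derive this from the conjunct of line~3 at index $j_0$, whose first disjunct $f_{j_0}(\sigma') < f_{j_0}(\sigma) - \delta_{j_0}$ combined with $f_{j_0}(\sigma) \leq 0$ and $\delta_{j_0} > 0$ yields $f_{j_0}(\sigma') < 0$, contradicting $f_{j_0}(\sigma') > 0$, while its second disjunct $\bigvee_{j < j_0} f_j(\sigma) > 0$ contradicts $f_j(\sigma) \leq 0$ for $j < j_0 < i$. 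Handling this monotonicity of the phase index cleanly, and taking care that $\widehat{f_i}$ is evaluated with the correct step size $\delta_i$, is where the bookkeeping concentrates; the remaining syntactic check that the template is \emph{linear} is immediate from the form of its atoms, just as for the affine template.
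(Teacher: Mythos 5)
Your proposal is correct and takes essentially the same route as the paper's proof: the same ranking function $r(x) = \omega \cdot (k - i) + \widehat{f_i}(x)$ on $\omega \cdot k$, the same appeal to \autoref{lem-ordinal-ranking} for the in-phase decrease, and the same use of line~3 of \ref{eq-rft-multiphase} both to rule out phase regression and to force $f_i(x') < f_i(x) - \delta_i$ when the phase stays at $i$. Only two cosmetic points: in your case $i' = i$ the disjunct $\bigvee_{j<i} f_j(x) > 0$ fails because the \emph{unprimed} state $\sigma$ is in phase $i$ (not because of anything about $\sigma'$), and your trichotomy on $i'$ should also admit the trivial subcase that $\sigma'$ lies in no phase at all, where $r(\sigma') = 0 < r(\sigma)$.
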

\begin{proof}
It is clear that \ref{eq-rft-multiphase} conforms to the syntactic
requirements to be a linear template.  Consider the following ranking
function on $\omega \cdot k$.
\begin{align}
r(x) =
\begin{cases}
\omega \cdot (k - i) + \widehat{f_i}(x) & \text{if }
f_j(x) \leq 0 \text{ for all } j < i \text{ and } f_i(x) > 0, \\
0 & \text{otherwise.}
\end{cases}
\label{eq-rft-multiphase-rf}
\end{align}
Let $(x, x') \in \loopt$.  We need to show that $r(x') < r(x)$.  From
line 2 in $\T_{k\mathrm{-phase}}$ follows that $r(x) > 0$ for any
$x$, and there is an $i$ such that $f_i(x) > 0$ and $f_j(x) \leq 0$
for all $j < i$.  By line 3, $f_j(x') \leq 0$ for all $j < i$ because
$f_j(x') < f_j(x) - \delta_j \leq 0 - \delta_j \leq 0$ since
$f_\ell(x) \leq 0$ for all $\ell < j$.

If $f_i(x') \leq 0$, then $r(x') \leq \omega \cdot (k - i)
< \omega \cdot (k - i) + \widehat{f_i}(x) = r(x)$.  Otherwise $f_i(x')
> 0$ and from line 3 follows $f_i(x') < f_i(x) - \delta_i$.
By \autoref{lem-ordinal-ranking}, $\widehat{f_i}(x)
> \widehat{f_i}(x')$ for the ordinal ranking equivalent $f_i$ with
step size $\delta_i$.  Hence
\begin{align*}
r(x') = \omega \cdot (k - i) + \widehat{f_i}(x')
  < \omega \cdot (k - i) + \widehat{f_i}(x)
  = r(x). \qquad \qedhere
\end{align*}
\end{proof}

\begin{example}\label{ex-2phase-2}
Consider the program $\prog_{2\mathrm{-phase}}$
from \autoref{ex-2phase-1}.  From the $2$-phase template we get an
assignment $f_1(q, y) \mapsto 1 - y$ and $f_2(q, y) \mapsto q + 1$,
each with step size $1$.  Thus $\prog_{2\mathrm{-phase}}$ has the
ranking function
\begin{align*}
r(q, y) =
\begin{cases}
\omega + \lceil 1 - y \rceil, & \text{if } y < 1, \\
\lceil q + 1 \rceil, & \text{if } y \geq 1 \;\land\; q + 1 > 0,
  \text{ and} \\
0 & \text{otherwise.}
\end{cases}
\end{align*}
\end{example}

\begin{example}\label{ex-multiphase-rotation}
There are terminating conjunctive linear lassos that do not have a
multi-phase ranking function:
\begin{center}
\begin{minipage}{43mm}
\begin{lstlisting}
assume($z \geq y + 1$);
while($q \geq 0$):
    $q$ := $q + z - y - 1$;
    $y$ := $-y$;
    $z$ := $-z$;
\end{lstlisting}
\end{minipage}
\end{center}
Here $y$ and $z$ are both subject to a rotation of $180$ degrees.  The
function $f(q, y, z) = q + 1$ is eventually decreasing in the sense
that after a finite number of iterations, its value will have
decreased.  However, during one step its value might increase.  If we
consider the loop transition $\loopt' = \loopt \circ \loopt$ such that
the loop body is executed twice, then $f$ is indeed a ranking function
since $y$ and $z$ remain constant.  However, concatenating the loop
does not work in general since a variables $y$ and $z$ can do a
rotation by an arbitrary irrational angle $\alpha$ (even over the
theory of the rationals):
\begin{align*}
y' = \cos(\alpha) \cdot y - \sin(\alpha) \cdot z
\;\land\; z' = \sin(\alpha) \cdot y + \cos(\alpha) \cdot y
\end{align*}
Consequently, there is not necessarily a finite number of
concatenations of $\loopt$ that make $y$ remain constant.
\end{example}

\begin{example}\label{ex-multiphase-complexity}
Although every phase has a linear ranking function, we cannot use this
to state a complexity result about the program in question.  The
reason is the non-determinism of linear lasso programs. Consider the
following linear lasso program.
\begin{align}
\begin{aligned}
\stemt(q, y) \equiv\; &y = 1 \\
\loopt(q, y, q', y') \equiv\;
  &(q \geq 0 \;\land\; y \geq 1 \;\land\; y' = 0) \;\lor\; \\
  &(q \geq 0 \;\land\; y \leq 0 \;\land\; y' = y - 1 \;\land\; q' = q - 1)
\end{aligned}
\tag{$\prog_\mathrm{runtime}$}\label{eq-prog-runtime}
\end{align}
The runtime of \ref{eq-prog-runtime} does not depend on the input at
all: after the first loop execution $y$ is set to $0$ and $q$ is set
to \emph{some arbitrary value}.  In particular, this value does not
depend on the initial value of $q$.  The remainder of the loop
execution then takes $\lceil q \rceil + 1$ iterations to terminate.

However, \ref{eq-prog-runtime} instantiates the $2$-phase template: it
has the $2$-phase ranking function $f_1(q, y) = y$ and $f_2(q, y) = q
+ 1$.  It provably terminates, there is just no a priori bound on the
execution steps.
\end{example}

\section{Piecewise Template}
\label{sec-rft-pw}

The piecewise ranking function template formalizes an affine-linear
ranking function that is defined piecewise using affine-linear
predicates to discriminate the different pieces.  This discrimination
need not be unambiguous; if two predicates overlap, their
corresponding affine-linear functions are both ranking functions.
Piecewise ranking functions have a ranking structure of $\omega$.

\begin{definition}\label{def-rft-pw}
We define the \emph{$k$-piece ranking function template} ($k$-piece
template) over the functions $F = \{ f_1, \ldots, f_k, g_1, \ldots,
g_k \}$ and variables $D = \{ \delta \}$ as follows.
\begin{align}
\begin{aligned}
&\delta > 0 \\
\land\; &\bigwedge_{i=1}^k \bigwedge_{j=1}^k \Big( g_i(x) < 0
  \;\lor\; g_j(x') < 0 \;\lor\; f_j(x') < f_i(x) - \delta \Big) \\
\land\; &\bigwedge_{i=1}^k f_i(x) > 0 \\
\land\; &\bigvee_{i=1}^k g_i(x) \geq 0
\end{aligned}
\tag{$\T_{k\mathrm{-piece}}$}\label{eq-rft-pw}
\end{align}
\end{definition}
We call the function symbols $\{ g_i \mid 1 \leq i \leq
k \}$ \emph{discriminating predicates} and the function symbols $\{
f_i \mid 1 \leq i \leq k \}$ \emph{ranking pieces}.

Line 4 of \ref{eq-rft-pw} states that the predicates cover all states;
in other words, the piecewise defined ranking function is not just a
partial function.  Given the the $k$ different pieces $f_1, \ldots,
f_k$ and a state $x$, we use $f_i$ as a ranking function only if
$g_i(x) \geq 0$.  This choice need not be unambiguous---the
discriminating predicates may overlap.  If they do, we can use any one
of their ranking pieces.  According to line 3 in \ref{eq-rft-pw}, all
ranking pieces are positive-valued and by line 2 piece transitions are
well-defined: the rank of the new state is always less than the rank
any of the ranking pieces assigned to the old state.  We formally
prove this in the following lemma.

\begin{lemma}\label{lem-rft-pw}
\ref{eq-rft-pw} is a linear ranking function template.
\end{lemma}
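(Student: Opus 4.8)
The plan is to mirror the structure of the proof of \autoref{lem-rft-multiphase}. First I would observe that \ref{eq-rft-pw} meets the syntactic requirements of a linear template, since each of its atoms ($\delta > 0$, $g_i(x) < 0$, $g_j(x') < 0$, $f_j(x') < f_i(x) - \delta$, $f_i(x) > 0$, $g_i(x) \geq 0$) is already of the prescribed affine form over $F$ and $D$. The real work is to show that whenever the loop implies the template, a ranking function exists; by \autoref{lem-rf-termination} this is equivalent to termination and hence establishes that \ref{eq-rft-pw} is a ranking function template.

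For the ranking function I would use line 4 of \ref{eq-rft-pw}, which guarantees that for every state $x$ there is at least one index $i$ with $g_i(x) \geq 0$. I fix any such choice $i(x)$ and set
\[
r(x) = \widehat{f}_{i(x)}(x),
\]
the ordinal ranking equivalent of the chosen ranking piece, all pieces taken with the single common step size $\delta$. Since each $\widehat{f}_i$ maps into the finite ordinals, $r$ has image in $\omega$, matching the claimed ranking structure.

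To verify \eqref{eq-rf} I would take $(x, x') \in \loopt$ and write $i = i(x)$, $j = i(x')$, so that $g_i(x) \geq 0$ and $g_j(x') \geq 0$ by construction. The conjunct of line 2 for the pair $(i,j)$ reads $g_i(x) < 0 \lor g_j(x') < 0 \lor f_j(x') < f_i(x) - \delta$; its first two disjuncts are false by the choice of $i$ and $j$, so $f_j(x') < f_i(x) - \delta$ must hold. Line 3 supplies $f_i(x) > 0$. From these two inequalities I would conclude $\widehat{f}_i(x) > \widehat{f}_j(x')$, i.e.\ $r(x) > r(x')$. Crucially, this argument goes through for \emph{any} admissible choice of the selected indices, which is exactly why overlapping discriminating predicates cause no difficulty, as advertised in the text.

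The one point that needs care, and the main (mild) obstacle, is that \autoref{lem-ordinal-ranking} is stated for a single affine-linear function evaluated at two states, whereas here I compare the ordinal ranking equivalents of two \emph{different} pieces $f_i$ and $f_j$. However, the proof of \autoref{lem-ordinal-ranking} never uses that the same function appears on both sides; it relies only on $f_i(x) > 0$ and $f_i(x) - f_j(x') > \delta$ together with the shared step size $\delta$. I would therefore either invoke a trivially generalized version of that lemma or simply repeat its short computation: if $f_j(x') \leq 0$ then $\widehat{f}_j(x') = 0 < \widehat{f}_i(x)$, while if $f_j(x') > 0$ then $f_i(x)/\delta - f_j(x')/\delta > 1$ forces $\lceil f_i(x)/\delta \rceil > \lceil f_j(x')/\delta \rceil$, which is $\widehat{f}_i(x) > \widehat{f}_j(x')$. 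This closes the proof.
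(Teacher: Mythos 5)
Your proof is correct and takes essentially the same approach as the paper: the paper defines $r(x) = \max \{ \widehat{f_i}(x) \mid g_i(x) \geq 0 \}$ rather than fixing an arbitrary admissible index $i(x)$, but the verification is identical — line 4 for well-definedness, line 2 applied to the pair of selected indices, line 3 for positivity, then the ordinal-ranking argument. Your explicit remark that \autoref{lem-ordinal-ranking} is stated for a single function while here it must be applied to two \emph{different} pieces $f_i$, $f_j$ (with the observation that its proof carries over verbatim) is a point the paper's proof silently glosses over, so your version is, if anything, slightly more careful.
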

\begin{proof}
It is clear that \ref{eq-rft-pw} conforms to the syntactic
requirements to be a linear template.  Consider the following ranking
function on $\omega$.
\begin{align}
r(x) = \max \{ \widehat{f_i}(x) \mid g_i(x) \geq 0 \}
\end{align}
The function $r$ is well-defined because according to line 4
in \eqref{eq-rft-pw}, the set $\{ \widehat{f_i}(x) \mid g_i(x) \geq
0 \}$ is not empty.  Let $(x, x') \in \loopt$ and let $i$ and $j$ be
indices such that $r(x) = \widehat{f_i}(x)$ and $r(x')
= \widehat{f_j}(x')$.  By definition of $r$, we have that $g_i(x) \geq
0$ and $g_j(x) \geq 0$ and line 2 then implies $f_j(x') < f_i(x)
- \delta$.  According to \autoref{lem-ordinal-ranking} and line 3,
this entails $\widehat{f_j}(x') < \widehat{f_i}(x)$ and thus $r(x') <
r(x)$.
\end{proof}

\begin{example}\label{ex-rft-pw}
Consider the following program $\prog_\mathrm{gcd}$ adapted
from \cite{BMS05linrank}.
\begin{center}
\begin{minipage}{56mm}
\begin{lstlisting}
assume($y_1 \geq 1 \;\land\; y_2 \geq 1$);
while($y_1 - y_2 \geq 1 \lor y_2 - y_1 \geq 1$):
    if ($y_1 > y_2$):
        $y_1$ := $y_1 - y_2$;
    else:
        $y_2$ := $y_2 - y_1$;
\end{lstlisting}
\end{minipage}
\end{center}
Given two positive integers $y_1$ and $y_2$, the program
$\prog_\mathrm{gcd}$ computes the greatest common denominator.  Note
that $y_1 - y_2 \geq 1 \lor y_2 - y_1 \geq 1$ is the integer
equivalent of $y_1 \neq y_2$.

The program $\prog_\mathrm{gcd}$ instantiates the 2-piece template for
the ranking pieces $f_1(y_1, y_2) = y_1$ and $f_2(y_1, y_2) = y_2$
with step size $\delta = 1$ and discriminating predicates $g_1(y_1,
y_2) = y_1 - y_2$ and $g_2(y_1, y_2) = y_2 - y_1$, given the two
inductive invariants $y_1 \geq 1$ and $y_2 \geq 1$.
\end{example}

\section{Lexicographic Template}
\label{sec-rft-lex}

Lexicographic ranking functions are used frequently and have been
adopted to lasso programs by Bradley, Manna and
Sipma~\cite{BMS05linrank}.  They consist of lexicographically ordered
components of affine-linear functions.  Hence they have a ranking
structure of $\omega^k$.  A state is mapped to a tuple of values such
that the loop transition leads to a decrease with respect to the
lexicographic ordering for this tuple.  Therefore no function may
increase unless a function of a lower index decreases.  Additionally,
at every step there must be at least one function that decreases.

\begin{definition}\label{def-rft-lex}
We define the \emph{$k$-lexicographic ranking function template}
($k$-lexicographic template) over the functions $F = \{ f_1, \ldots,
f_k \}$ and variables $D = \{ \delta_1, \ldots, \delta_k \}$ as
follows.

\begin{align}
\begin{aligned}
&\bigwedge_{i=1}^{k} \delta_i > 0 \\
\land\; &\bigwedge_{i=1}^k f_i(x) > 0 \\
\land\; &\bigwedge_{i=1}^{k-1} \Big( f_i(x') \leq f_i(x)
  \;\lor\; \bigvee_{j=1}^{i-1} f_j(x') < f_j(x) - \delta_j \Big) \\
\land\; &\bigvee_{i=1}^k f_i(x') < f_i(x) - \delta_i
\end{aligned}
\tag{$\T_{k\mathrm{-lex}}$}\label{eq-rft-lex}
\end{align}
\end{definition}
Consider the formula \ref{eq-rft-lex}.  Line 2 establishes that all
lexicographic entries $f_1, \ldots, f_k$ are positive-valued.  In
every step, at least one component must decrease according to line 4.  All
functions corresponding to indexes smaller than the decreasing
function may increase by line 3.

\begin{example}\label{ex-rft-lex}
Consider the program $\prog_\mathrm{gcd}$ from \autoref{ex-rft-pw}.
$\prog_\mathrm{gcd}$ has the lexicographic ranking function with first
index $f_1(y_1, y_2) = y_2$ and second index $f_2(y_1, y_2) = y_1$
provided the two inductive invariants $y_1 \geq 1$ and $y_2 \geq 1$.
\end{example}

\begin{lemma}\label{lem-rft-lex}
\ref{eq-rft-lex} is a linear ranking function template.
\end{lemma}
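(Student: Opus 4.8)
The plan is to follow the same two-part strategy used for the affine, multiphase and piecewise templates (\autoref{lem-rft-affine}, \autoref{lem-rft-multiphase}, \autoref{lem-rft-pw}). First I would observe that \ref{eq-rft-lex} is manifestly a boolean combination of atoms of the shape required by \autoref{def-linear-rft}, so the syntactic conditions hold and only the semantic condition---that \eqref{eq-rft} entails termination---remains. By \autoref{lem-rf-termination} it then suffices to exhibit a ranking function built from the assignment to $f_1, \ldots, f_k$, and as announced in the section preamble its image will lie in the ordinal $\omega^k$.

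The ranking function I would use is the Cantor-normal-form encoding of the tuple of ordinal ranking equivalents,
\begin{align*}
r(x) = \sum_{i=1}^{k} \omega^{k-i} \cdot \widehat{f_i}(x),
\end{align*}
where each $\widehat{f_i}$ is the ordinal ranking equivalent of $f_i$ with step size $\delta_i$ from \autoref{def-ordinal-ranking}. Line 2 of \ref{eq-rft-lex} guarantees $f_i(x) > 0$ for every $i$, so each $\widehat{f_i}(x)$ is a well-defined finite ordinal and $r$ maps into $\omega^k$. The point of this encoding is that comparison of two such ordinals is precisely the lexicographic comparison of the coefficient tuples $(\widehat{f_1}, \ldots, \widehat{f_k})$, which is exactly the behavior we want from a lexicographic ranking function.

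Fixing $(x, x') \in \loopt$, the core argument is to locate the index that drives the decrease. Line 4 supplies some index with $f_i(x') < f_i(x) - \delta_i$; let $i_0$ be the least such. For every $i < i_0$ the second disjunct of line 3 is unavailable, since it would require some $j < i_0$ with $f_j(x') < f_j(x) - \delta_j$, contradicting minimality of $i_0$; hence the first disjunct $f_i(x') \leq f_i(x)$ holds, and monotonicity of the construction in \autoref{def-ordinal-ranking} (a one-line case split on the sign of $f_i(x')$) gives $\widehat{f_i}(x') \leq \widehat{f_i}(x)$. At $i_0$ itself, \autoref{lem-ordinal-ranking} turns $f_{i_0}(x') < f_{i_0}(x) - \delta_{i_0}$ together with $f_{i_0}(x) > 0$ into the strict drop $\widehat{f_{i_0}}(x') < \widehat{f_{i_0}}(x)$. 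Letting $p \leq i_0$ be the first coordinate at which the tuples genuinely differ, the coefficients agree below $p$ and strictly decrease at $p$, so $r(x') < r(x)$ by the standard Cantor-normal-form comparison, the tail $\sum_{i>p} \omega^{k-i} \widehat{f_i}(x')$ being absorbed because it lies below $\omega^{k-p}$.

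The main obstacle I anticipate is the bookkeeping around line 3's disjunction together with the use of genuine ordinal arithmetic. I must argue carefully that minimality of $i_0$ rules out the second disjunct for the whole prefix $i < i_0$, and I must keep the summands in strictly decreasing order of exponent when estimating the tail, since ordinal addition is not commutative and the absorption step $\sum_{i>p} \omega^{k-i} \widehat{f_i}(x') < \omega^{k-p} \leq \omega^{k-p}\cdot\widehat{f_p}(x)$ only works in that order. Both points are routine once set up, so I expect the proof to be short.
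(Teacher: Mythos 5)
Your proposal is correct and matches the paper's own proof in all essentials: the same ranking function $r(x) = \sum_{i=1}^{k} \omega^{k-i} \cdot \widehat{f_i}(x)$ on $\omega^k$, the same use of lines 2--4 of \ref{eq-rft-lex} together with \autoref{lem-ordinal-ranking} to get a non-increasing prefix and a strict drop at a minimal index, and the same ordinal-arithmetic absorption of the tail below $\omega^{k-p}$. The only cosmetic difference is that you take the minimal index with respect to the original functions $f_i$ and then pass to the first coordinate where the tuples of ordinal ranking equivalents differ, whereas the paper takes the minimal index directly in terms of $\widehat{f_i}$; both yield the identical inequality chain.
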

\begin{proof}
It is clear that \ref{eq-rft-lex} conforms to the syntactic
requirements to be a linear template.  Consider the following ranking
function on $\omega^k$.
\begin{align}
r(x) = \sum_{j=1}^k \omega^{k-j} \cdot \widehat{f_j}(x)
\end{align}
Let $(x, x') \in \loopt$.  From line 2 in $\T_{k\mathrm{-lex}}$
follows $f_j(x) > 0$ for all $j$, so $r(x) > 0$.  By line 4
and \autoref{lem-ordinal-ranking}, there is a minimal $i$ such that
$\widehat{f_i}(x') < \widehat{f_i}(x)$.  Line 3 implies that
$\widehat{f_1}(x') \leq \widehat{f_1}(x)$ and hence inductively
$\widehat{f_j}(x') \leq \widehat{f_j}(x)$ for all $j < i$ since $i$
was minimal.
\begin{align*}
r(x') &= \sum_{j=1}^k \omega^{k-j} \cdot \widehat{f_j}(x') \\
  &\leq \sum_{j=1}^{i-1} \omega^{k-j} \cdot \widehat{f_j}(x)
    + \sum_{j=i}^k \omega^{k-j} \cdot \widehat{f_j}(x') \\
  &< \sum_{j=1}^{i-1} \omega^{k-j} \cdot \widehat{f_j}(x)
    + \omega^{k-i} \cdot \widehat{f_i}(x) \\
  &\leq r(x) \qedhere
\end{align*}
\end{proof}


\clearpage
\chapter{Building the Constraints}
\label{ch-constraints}

In this chapter we discuss an automatic procedure for the
instantiation of ranking function templates.  Given a linear lasso
program $\prog$ and a linear ranking function template $\T$, we set up
constraints whose solution is a termination argument for $\prog$.
With the help of \hyperref[thm-motzkin]{Motzkin's Transposition
Theorem}, this will be a purely existentially quantified formula.  We
first discuss the addition of supporting invariants
in \autoref{sec-augmentation}.  The step-by-step transformations
involved in building the constraints are the subject
in \autoref{sec-constraints}.  In \autoref{sec-soundness-completeness}
we show that this procedure is sound and complete.  We conclude this
chapter with a discussion of lasso programs that contain integer
variables in \autoref{sec-integers}.

\section{Loop Augmentation with Invariants}
\label{sec-augmentation}

Ranking function templates are checked for implication by the loop
transition.  As this transition is independent of the lasso program's
stem, information critical to the program's termination proof might be
missed.  We address this issue by augmenting the loop transition with
inductive invariants similar to \cite{CSS03}.  The following lemma
formalizes this process.

\begin{lemma}[Loop augmentation with invariants]\label{lem-invariant-aug}
Let $\T$ be a ranking function template, $\prog = (\stemt, \loopt)$ be
a lasso program, and $(\psi_\ell)_{\ell \in L}$ a finite number of
invariants of $\prog$.  If the formula
\begin{align}
\forall x, x'.\; \big( \loopt(x, x')
    \land \bigwedge_{\ell \in L} \psi_\ell(x) \big)
  \rightarrow \T(x, x') \label{eq-rft'}
\end{align}
is satisfiable, then $\prog$ terminates.
\end{lemma}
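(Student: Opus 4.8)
The plan is to reduce the statement to the defining property of a ranking function template (\autoref{def-rft}) by passing to an auxiliary lasso program whose loop transition is precisely the augmented relation. Define $\prog' = (\stemt, \loopt')$ over the same domain, where
\[
\loopt'(x, x') \equiv \loopt(x, x') \;\land\; \bigwedge_{\ell \in L} \psi_\ell(x).
\]
Then the formula \eqref{eq-rft'} is literally \eqref{eq-rft} instantiated for $\prog'$. Since $\T$ is a ranking function template and \autoref{def-rft} quantifies over \emph{all} lasso programs, the satisfiability of \eqref{eq-rft'} entails that $\prog'$ terminates. The remaining task is to transfer termination from $\prog'$ back to $\prog$.

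For this transfer I would show that every execution of $\prog$ is also an execution of $\prog'$. The two programs share the same stem, so any $\sigma_0 \in \stemt$ is a valid initial state for both. Now let $\sigma_0 \sigma_1 \ldots$ be an execution of $\prog$ and consider a loop step $(\sigma_i, \sigma_{i+1}) \in \loopt$. By \autoref{def-invariant}, each $\sigma_i$ occurring in this execution is a reachable state of $\prog$, hence $\models \psi_\ell(\sigma_i)$ for every $\ell \in L$. Therefore the source state of every transition satisfies the conjoined invariants, which gives $(\sigma_i, \sigma_{i+1}) \in \loopt'$. Consequently the sequence $\sigma_0 \sigma_1 \ldots$ is an execution of $\prog'$ as well. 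If $\prog$ admitted an infinite execution, this execution would be an infinite execution of $\prog'$, contradicting the termination of $\prog'$ established above; hence $\prog$ terminates.

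The step I expect to be the main obstacle, at least conceptually, is the direction of the final implication. Augmenting the loop \emph{shrinks} the transition relation ($\loopt' \subseteq \loopt$), and a smaller relation generally makes termination \emph{easier}, so termination of $\prog'$ does not by itself yield termination of $\prog$. The point that makes the argument go through is that the $\psi_\ell$ are invariants: on the reachable states—which are exactly the states appearing in executions—the two loop relations $\loopt$ and $\loopt'$ agree, so the executions of $\prog$ and $\prog'$ coincide. It is worth emphasizing that only the invariant property (not inductivity) of the $\psi_\ell$ is used here; inductivity becomes relevant later, when the invariants themselves must be synthesized from the stem.
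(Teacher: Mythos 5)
Your proof is correct and follows essentially the same route as the paper: both define the augmented program $\prog' = (\stemt, \loopt')$, observe that \eqref{eq-rft'} is exactly \eqref{eq-rft} instantiated for $\prog'$, and transfer termination back to $\prog$ via the fact that the two programs have the same executions. The only difference is that you spell out the execution-equivalence argument (which the paper states as ``semantically equivalent'' without proof) and correctly flag that the nontrivial direction rests on the invariant property of the $\psi_\ell$.
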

\begin{proof}
Because every $\psi_\ell$ holds at all reachable states of $\prog$, so
does $\bigwedge_{\ell \in L} \psi_\ell$.  Consider the transition
\begin{align*}
\loopt'(x, x') \equiv \Big( \bigwedge_{\ell \in L} \psi_\ell(x) \Big)
  \;\land\; \loopt(x, x').
\end{align*}
The two programs $\prog = (\stemt, \loopt)$ and $\prog' =
(\stemt, \loopt')$ are semantically equivalent: they have the same
executions.  Therefore $\prog$ terminates iff $\prog'$ terminates, and
the latter is equivalent to the satisfiability of \eqref{eq-rft'}
by \autoref{def-rft}.
\end{proof}

In addition to solving the constraints \eqref{eq-rft'}, we need to
encode that all $\psi_\ell$ are invariants.
We use inductive invariants and add the
conditions \eqref{eq-ii} and \eqref{eq-ic} to our constraint system
for every inductive invariant $\psi_\ell$.
\begin{align}
\bigwedge_{\ell \in L} &\forall x.\; \stemt(x) \rightarrow \psi_\ell(x)
\tag{II0}\label{eq-ii0} \\
\bigwedge_{\ell \in L} &\forall x, x'.\; \psi_\ell(x)
  \;\land\; \loopt(x, x') \rightarrow \psi_\ell(x')
\tag{IC0}\label{eq-ic0} \\
&\forall x, x'.\; \big( \loopt(x, x')
    \;\land\; \bigwedge_{\ell \in L} \psi_\ell(x) \big)
  \rightarrow \T(x, x') \tag{TI0}\label{eq-ti0}
\end{align}
We call \eqref{eq-ii0} \emph{invariant
initiation}, \eqref{eq-ic0} \emph{invariant consecution}
and \eqref{eq-ti0} the \emph{template implication}.

The following lemma asserts that a finite number of inductive
invariants is sufficient to entail the ranking function template, if
it is entailed by all inductive invariants.

\begin{lemma}\label{lem-finite-invariants}
Let $I$ be the set of all inductive invariants of $\loopt$ and
$\varphi$ be a formula.  Then $I, \loopt \models \varphi$
iff there is a finite subset $I' \subseteq I$ such that
$I', \loopt \models \varphi$.
\end{lemma}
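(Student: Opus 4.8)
The plan is to obtain this as a straightforward consequence of the \hyperref[thm-compactness]{Compactness Theorem}, after rewriting entailment as unsatisfiability. Recall from \autoref{sec-logic} that $I, \loopt \models \varphi$ (where $I$ is just a set of formulae and $\loopt$ a single formula) means that every structure modelling all of $I \cup \{ \loopt \}$ also models $\varphi$. The direction from right to left needs no compactness at all: if a finite $I' \subseteq I$ already satisfies $I', \loopt \models \varphi$, then any model of $I \cup \{ \loopt \}$ is in particular a model of $I' \cup \{ \loopt \}$ and hence of $\varphi$; this is just monotonicity of entailment under enlarging the set of premises.

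For the converse, assume $I, \loopt \models \varphi$. First I would translate this into an unsatisfiability statement: the set $T := I \cup \{ \loopt, \neg \varphi \}$ has no model, since any such model would satisfy $I$ and $\loopt$ yet falsify $\varphi$, contradicting the assumption. Applying the contrapositive of \autoref{thm-compactness}, unsatisfiability of $T$ yields a finite unsatisfiable subset $T_0 \subseteq T$. Setting $I' := T_0 \cap I$ gives a finite subset of $I$, and since $T_0 \subseteq I' \cup \{ \loopt, \neg \varphi \}$, the larger set $I' \cup \{ \loopt, \neg \varphi \}$ is still unsatisfiable. Reading this back as an entailment gives exactly $I', \loopt \models \varphi$, as desired.

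The step that needs the most care---and the only real obstacle---is the class of structures over which $\models$ is taken. \autoref{thm-compactness} is stated for satisfiability over \emph{all} $S$-structures, and the argument above is valid verbatim for that notion of consequence, which is precisely the $\models$ defined in \autoref{sec-logic}. One should resist the temptation to read $\models$ as consequence over the single fixed structure $\mathbb{K}$: over a single structure such as $\mathbb{Q}$ compactness fails, so the finite-subset extraction would be unjustified. If one nevertheless wants consequence relative to the relevant models---the real closed fields---this is still fine, because that class is elementarily axiomatisable; one simply applies compactness to $T$ together with the axioms of real closed fields, and the resulting finite subset meets $I$ in only finitely many formulae, leaving $I'$ finite and the conclusion intact.
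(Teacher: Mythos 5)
Your proof is correct and follows essentially the same route as the paper's: reduce the entailment to unsatisfiability of $I \cup \{\loopt, \neg\varphi\}$, extract a finite unsatisfiable subset via the \hyperref[thm-compactness]{Compactness Theorem}, and intersect it with $I$ to obtain the finite $I'$. Your closing remark about which notion of $\models$ is in play (consequence over all $S$-structures, where compactness applies, versus consequence over a single fixed structure, where it fails) is a sound clarification of a point the paper leaves implicit.
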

\begin{proof}
If $I, \loopt \models \varphi$, then the set $T :=
I \cup \{ \loopt, \neg\varphi \}$ is unsatisfiable. By
the \hyperref[thm-compactness]{Compactness Theorem}, there is a finite
subset $T' \subseteq T$ that is unsatisfiable, and hence $T'' =
T' \cup \{ \loopt, \neg\varphi \}$ is also unsatisfiable.  Therefore
we can conclude for the finite set $I' = T'' \cap I$ that
$I', \loopt \models \varphi$.
\end{proof}

\section{The Constraints}
\label{sec-constraints}

In this section we will sequentially apply five equivalence
transformations to the constraints \eqref{eq-ii0}, \eqref{eq-ic0}
and \eqref{eq-ti0} to make them more easily solvable by an SMT solver.
The reason for this is that the result (1) has only existential
quantification instead of universal and (2) has a significantly
reduced number of non-linear operations (multiplications of
variables).  In \autoref{sec-soundness-completeness} we argue that
each transformation is indeed an equivalence transformation; this
method is sound and complete.

We fix a linear ranking function template over $F$ in conjunctive
normal form,
\begin{align}
\T(x, x') \equiv \bigwedge_{i \in I} \bigvee_{j \in J_i} \T_{i,j}(x, x')
  \equiv \bigwedge_{i \in I} \bigvee_{j \in J_i}
    \tr{d_{i,j}} \abovebelow{x}{x'} \rhd_{i,j} e_{i,j},
\label{eq-rft-explicit}
\end{align}
where the vectors $d$ and the numbers $e$ are linear combinations of
the uninterpreted function symbols in $F$ and $\rhd_{i,j} \in \{ \geq,
> \}$.  We partition every $J_i$ in $J_i^\geq$ and $J_i^>$ such that
$\rhd_{i,j} = \geq$ for all $j \in J_i^\geq$ and $\rhd_{i,j} = >$ for
all $j \in J_i^>$.

Furthermore, we fix a linear lasso program $\prog = (\stemt, \loopt)$.
According to \autoref{lem-lasso-nf}, we can write $\prog$ in normal
form:
\begin{align}
&\hspace{5mm} \stemt(x)
  \equiv \bigvee_{n \in N} \stemt_n(x)
  \equiv \bigvee_{n \in N} \big( B_n x \leq b_n
    \;\land\; B_n' x < b_n' \big)
\label{eq-stem-explicit} \\
&\begin{aligned}
\loopt(x, x') &\equiv \bigvee_{m \in M} \loopt_m(x, x') \\
&\equiv \bigvee_{m \in M} \big( A_m \abovebelow{x}{x'} \leq c_m
  \;\land\; A_m' \abovebelow{x}{x'} < c_m' \big)
\end{aligned}
\label{eq-loop-explicit}
\end{align}
Let $\{ \psi_\ell \mid \ell \in L \}$ denote the inductive invariants;
every invariant is an inequality of the form $\tr{s} x + t \rhd 0$ for
a vector $s$, a number $t$, and $\rhd \in \{ \geq, > \}$.

\transformationStep{Remove disjunctions in stem and loop.}
The disjunction in the stem and loop formulae are moved outside the
quantifiers' scope in \eqref{eq-ii0}, \eqref{eq-ic0}
and \eqref{eq-ti0}.
\begin{align}
\bigwedge_{\ell \in L} \bigwedge_{n \in N} &\forall x.\;
  \stemt_n(x) \rightarrow \psi_\ell(x)
\label{eq-ii1}\tag{II1} \\
\bigwedge_{\ell \in L} \bigwedge_{m \in M} &\forall x, x'.\;
  \psi_\ell(x) \land \loopt_m(x, x') \rightarrow \psi_\ell(x')
\label{eq-ic1}\tag{IC1} \\
\bigwedge_{m \in M} &\forall x, x'.\;
  \loopt_m(x, x') \land \bigwedge_{\ell \in L} \psi_\ell(x)
  \rightarrow \T(x, x')
\label{eq-ti1}\tag{TI1}
\end{align}

\transformationStep{Remove template conjunctions.}
The conjunctions in the ranking function template are moved outside
the quantifiers' scope.
\begin{align}
\bigwedge_{\ell \in L} \bigwedge_{n \in N} &\forall x.\;
  \stemt_n(x) \rightarrow \psi_\ell(x)
\label{eq-ii2}\tag{II2} \\
\bigwedge_{\ell \in L} \bigwedge_{m \in M} &\forall x, x'.\;
  \psi_\ell(x) \land \loopt_m(x, x') \rightarrow \psi_\ell(x')
\label{eq-ic2}\tag{IC2} \\
\bigwedge_{i \in I} \bigwedge_{m \in M} &\forall x, x'.\;
  \loopt_m(x, x') \land \bigwedge_{\ell \in L} \psi_\ell(x)
  \rightarrow \bigvee_{j \in J_i} \T_{i,j}(x, x')
\label{eq-ti2}\tag{TI2}
\end{align}

\transformationStep{Replicate supporting invariants.}
We supply different supporting invariants to every template
implication.  This will later enable us to get rid of a number of
non-linear variables.  In order to achieve this, we introduce
invariants for every $\ell \in L$, $i \in I$ and $m \in M$; therefore
let $L' = L \times I \times M$.
\begin{align}
\bigwedge_{\ell \in L'} \bigwedge_{n \in N} &\forall x.\;
  \stemt_n(x) \rightarrow \psi_\ell(x)
\label{eq-ii3}\tag{II3} \\
\bigwedge_{\ell \in L'} \bigwedge_{m \in M} &\forall x, x'.\;
  \psi_\ell(x) \land \loopt_m(x, x') \rightarrow \psi_\ell(x')
\label{eq-ic3}\tag{IC3} \\
\bigwedge_{i \in I} \bigwedge_{m \in M} &\forall x, x'.\;
  \loopt_m(x, x') \land \bigwedge_{\ell \in L} \psi_{(\ell,i,m)}(x)
  \rightarrow \bigvee_{j \in J_i} \T_{i,j}(x, x')
\label{eq-ti3}\tag{TI3}
\end{align}

\transformationStep{Write as negated conjunctions.}
In order to make \hyperref[thm-motzkin]{Motzkin's Theorem} applicable, we
write the implications equivalently as negated conjunctions.
\begin{align}
\bigwedge_{\ell \in L'} \bigwedge_{n \in N} &\forall x.\;
  \neg (\stemt_n(x) \;\land\; \neg\psi_\ell(x))
\label{eq-ii4}\tag{II4} \\
\bigwedge_{\ell \in L'} \bigwedge_{m \in M} &\forall x, x'.\;
  \neg (\psi_\ell(x) \;\land\; \loopt_m(x, x')
    \;\land\; \neg\psi_\ell(x'))
\label{eq-ic4}\tag{IC4} \\
\bigwedge_{i \in I} \bigwedge_{m \in M} &\forall x, x'.\;
  \neg \Big(
  \loopt_m(x, x') \land \bigwedge_{\ell \in L} \psi_{(\ell,i,m)}(x)
    \land \bigwedge_{j \in J_i} \neg\,\T_{i,j}(x, x') \Big)
\label{eq-ti4}\tag{TI4}
\end{align}

\transformationStep{Apply Motzkin's Transposition Theorem.}
For simplicity we assume that no involved invariants are non-strict
inequalities (strict inequalities are processed analogously).  We
rewrite the invariants
\begin{align*}
\psi_{\ell,i,m}(x) \equiv \tr{s_{\ell,i,m}} x + t_{\ell,i,m} \geq 0
\end{align*}
where $s_{\ell,i,m} \in \mathbb{K}^n$ and
$t_{\ell,i,m} \in \mathbb{K}$ are variables.  Similarly, we
use \eqref{eq-rft-explicit}, \eqref{eq-stem-explicit},
and \eqref{eq-loop-explicit} to rewrite $\T_{i,j}$, $\stemt_n$ and
$\loopt_m$ as linear inequalities.  Next, we
apply \hyperref[thm-motzkin]{Motzkin's Transposition Theorem} to every
universally quantified subformula and obtain the following equivalent
constraints.
\begin{align}
\begin{aligned}
\bigwedge_{\ell \in L'} \bigwedge_{n \in N}
&\exists \lambda, \mu, \xi \geq 0. \\
&\;\quad \tr{\lambda} B_n + \tr{\mu} B_n'
  + \xi \tr{\abovebelow{s_\ell}{0}} = 0 \\
&\land\; \tr{\lambda} b_n + \tr{\mu} b_n' + \xi t_\ell \leq 0 \\
&\land\; \big( \tr{\lambda} b_n < 0
  \;\lor\; \xi + \sum \mu > 0 \big)
\end{aligned}\label{eq-ii5}\tag{II5}
\end{align}
\begin{align}
\begin{aligned}
\bigwedge_{\ell \in L'} \bigwedge_{m \in M}
&\exists \lambda, \chi_1, \mu, \chi_2 \geq 0.\; \\
&\;\quad \tr{\lambda} A_m + \tr{\mu} A_m'
  + \chi_2 \tr{\abovebelow{0}{s_\ell}}
  - \chi_1 \tr{\abovebelow{s_\ell}{0}} = 0 \\
&\land\; \tr{\lambda} c_m + \tr{\mu} c_m'
    + (\chi_2 - \chi_1) t_\ell \leq 0 \\
&\land\; \big( \tr{\lambda} c_m - \chi_1 t_\ell < 0
  \;\lor\; \chi_2 + \sum \mu > 0 \big)
\end{aligned}\label{eq-ic5}\tag{IC5}
\end{align}
\begin{align}
\begin{aligned}
\bigwedge_{i \in I} \bigwedge_{m \in M}
&\exists \lambda, (\xi_\ell)_{\ell \in L},
  (\zeta_j)_{j \in J_i}, \mu \geq 0. \\
&\quad\; \tr{\lambda} A_m + \tr{\mu} A_m'
  + \sum_{\ell \in L} \xi_\ell \tr{\abovebelow{s_{\ell,i,m}}{0}}
  + \sum_{j \in J_i} \zeta_j \tr{d_{i,j}} = 0 \\
&\land\; \tr{\lambda} c_m + \tr{\mu} c_m'
  + \sum_{\ell \in L} \xi_\ell t_{\ell,i,m}
  + \sum_{j \in J_i} \zeta_j e_{i,j} \leq 0 \\
&\land\; \big( \tr{\lambda} c_m
  + \sum_{\ell \in L} \xi_\ell t_{\ell,i,m}
  + \sum_{j \in J_i^\geq} \zeta_j e_{i,j} < 0 \\
&\quad\quad \lor\; \sum_{j \in J_i^>} \zeta_j + \sum \mu > 0 \big)
\end{aligned}\label{eq-ti5}\tag{TI5}
\end{align}
An explanation to the coefficients introduced
by \hyperref[thm-motzkin]{Motzkin's Transposition Theorem} is in
order.  For every inequality in \eqref{eq-motzkin1}, a new
existentially quantified variable is added in \eqref{eq-motzkin2}.  We
call these new existentially quantified variables \emph{Motzkin
coefficients}.

In the invariant initiation \eqref{eq-ii5}, the stem's non-strict
inequalities correspond to the vector of variables $\lambda$, the
stem's strict inequalities correspond to the vector of variables
$\mu$.  The invariant has the Motzkin coefficient $\xi$.
In the invariant consecution \eqref{eq-ic5} and the template
implication \eqref{eq-ti5}, the loop's non-strict inequalities
correspond to the vector of variables $\lambda$ and the loop's strict
inequalities to the vector of variables $\mu$.
In \eqref{eq-ic5} the Motzkin coefficient $\chi_1$ corresponds to the
premise $\psi_\ell(x)$, while the Motzkin coefficient $\chi_2$
corresponds to $\psi_\ell(x')$.  Lastly, the Motzkin coefficients
$\xi_\ell$ in \eqref{eq-ti5} correspond to the invariants
$\psi_\ell(x)$ in the template implication and the Motzkin
coefficients $\zeta_j$ correspond to the ranking function template's
inequalities.

\section{Soundness and Completeness}
\label{sec-soundness-completeness}

It is clear that step 1, 2 and 4 are equivalence transformations; they
are simple syntactic modifications that preserve semantics.  Step 3
introduces a number of new invariants, hence the constraints
potentially gain new solutions, but retain all old solutions.
However, adding more invariants is still sound;
by \autoref{lem-invariant-aug} we might just as well have started out
with the larger number of invariants.  Finally, transformation 5 also
retains equivalence by \hyperref[thm-motzkin]{Motzkin's Transposition
Theorem}.

We can now state the soundness and completeness of our method: solving
the constraint \eqref{eq-ii5} $\land$ \eqref{eq-ic5}
$\land$ \eqref{eq-ti5} is equivalent to solving the
constraint \eqref{eq-ii0} $\land$ \eqref{eq-ic0}
$\land$ \eqref{eq-ti0}, which according to \autoref{lem-invariant-aug}
is satisfiable only if $\prog$ terminates.

\begin{theorem}[Soundness]\label{thm-soundness}
If the constraint \eqref{eq-ii5} $\land$ \eqref{eq-ic5}
$\land$ \eqref{eq-ti5} is satisfiable, then $\prog$ terminates.
\end{theorem}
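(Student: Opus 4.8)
The plan is to read the five transformations backwards: a satisfying assignment to the final constraints should yield, after undoing each step, a witness for the augmented template implication of \autoref{lem-invariant-aug}, from which termination of $\prog$ follows. Since the paragraph preceding the statement has already classified each transformation as either equivalence-preserving (steps 1, 2, 4, 5) or merely sound (step 3), the real content is to make this chain explicit and to check that the witnesses we recover are genuine invariants. I would therefore not try to re-prove any single transformation but instead assemble them into one reduction to \autoref{lem-invariant-aug}.

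Concretely, I would start from a satisfying assignment to \eqref{eq-ii5} $\land$ \eqref{eq-ic5} $\land$ \eqref{eq-ti5}, which fixes values for the template function symbols (hence for the $d_{i,j}$ and $e_{i,j}$), for the invariant coefficients $s_{\ell,i,m}, t_{\ell,i,m}$, and for all Motzkin coefficients. Reversing transformation 5 via \autoref{thm-motzkin} converts the existential Motzkin witnesses back into the universally quantified negated-conjunction form, so the same assignment to the template and invariant coefficients satisfies \eqref{eq-ii4} $\land$ \eqref{eq-ic4} $\land$ \eqref{eq-ti4}; undoing the purely syntactic steps 4, 2, and 1 then shows it satisfies \eqref{eq-ii3} $\land$ \eqref{eq-ic3} $\land$ \eqref{eq-ti3}. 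At this point \eqref{eq-ii3} and \eqref{eq-ic3} are exactly the initiation \eqref{eq-ii} and consecution \eqref{eq-ic} conditions for each $\psi_{(\ell,i,m)}$, so every $\psi_{(\ell,i,m)}$ is an inductive invariant and, by \autoref{rem-inductive}, an invariant of $\prog$.

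It remains to turn \eqref{eq-ti3} into the single augmented implication \eqref{eq-rft'}, and this is the step I expect to require the most care. The difficulty is that transformation 3 replicated the invariants, so the conjunct of \eqref{eq-ti3} indexed by $(i,m)$ only mentions $\psi_{(\ell,i,m)}$ for $\ell \in L$, whereas \autoref{lem-invariant-aug} expects one fixed family of invariants shared across the entire template. I would resolve this by pooling: take the finite family of all $\psi_{(\ell,i,m)}$ (over $\ell \in L$, $i \in I$, $m \in M$) as the supporting invariants, each of which is an invariant by the previous paragraph. Strengthening the premise of each conjunct with the remaining, still valid, invariants preserves the implication, so for every $i$ and $m$ we obtain $\loopt_m \land \bigwedge \psi \rightarrow \bigvee_{j \in J_i} \T_{i,j}$ with the full pooled conjunction $\bigwedge \psi$. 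Conjoining over $i \in I$ reconstitutes $\T$, and since $\loopt \equiv \bigvee_{m} \loopt_m$, distributing over the loop disjuncts yields $\loopt \land \bigwedge \psi \rightarrow \T$, which is precisely \eqref{eq-rft'}. As all pooled $\psi$ are invariants, \autoref{lem-invariant-aug} then delivers termination of $\prog$.
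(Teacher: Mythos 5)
Your proposal is correct and takes essentially the same route as the paper: the paper's own proof also reads the five transformations backwards, treating steps 1, 2, 4 and 5 as equivalences (the last by Motzkin's Theorem) and handling the non-equivalence of step 3 by observing that one ``might just as well have started out with the larger number of invariants,'' before concluding via \autoref{lem-invariant-aug}. Your pooling argument simply makes explicit what the paper compresses into that remark.
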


\begin{theorem}[Completeness]\label{thm-completeness}
If the constraint \eqref{eq-ii0} $\land$ \eqref{eq-ic0}
$\land$ \eqref{eq-ti0} is satisfiable, then so is the
constraint \eqref{eq-ii5} $\land$ \eqref{eq-ic5}
$\land$ \eqref{eq-ti5}.
\end{theorem}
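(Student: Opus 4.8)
The plan is to start from a satisfying valuation of \eqref{eq-ii0} $\land$ \eqref{eq-ic0} $\land$ \eqref{eq-ti0} --- that is, an assignment to the template's function symbols, the step sizes $\delta$, and the invariant coefficients $s_\ell, t_\ell$ for $\ell \in L$ --- and to push it forward through the five transformations in order, checking that satisfiability is preserved at each stage. Since only the forward direction is needed, the whole argument is a chain that upgrades this valuation (extending it with fresh witnesses where necessary) until it satisfies \eqref{eq-ii5} $\land$ \eqref{eq-ic5} $\land$ \eqref{eq-ti5}. The observation that Transformations 1, 2, 4 and 5 are genuine equivalences while only Transformation 3 is asymmetric is exactly what makes both \autoref{thm-soundness} and this theorem fall out of the same bookkeeping.

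Transformations 1, 2 and 4 are logical equivalences at the level of the quantified sentences, so the very same valuation carries through them. I would spell this out once using the propositional identities $(A \lor B) \to C \equiv (A \to C) \land (B \to C)$ (Transformation 1, where $\stemt$ and $\loopt$ are the disjunctions of \eqref{eq-stem-explicit} and \eqref{eq-loop-explicit}), $A \to (B \land C) \equiv (A \to B) \land (A \to C)$ (Transformation 2, splitting the template conjunction of \eqref{eq-rft-explicit}), and $A \to B \equiv \neg(A \land \neg B)$ (Transformation 4), together with the distribution of $\forall$ over $\land$. Thus the original valuation already satisfies \eqref{eq-ii2} $\land$ \eqref{eq-ic2} $\land$ \eqref{eq-ti2} and, after Transformation 4, \eqref{eq-ii4} $\land$ \eqref{eq-ic4} $\land$ \eqref{eq-ti4}.

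The only asymmetric step is Transformation 3, which enlarges the invariant index set from $L$ to $L' = L \times I \times M$. For completeness I would handle it by replication: set $\psi_{(\ell,i,m)} := \psi_\ell$, i.e.\ $s_{(\ell,i,m)} := s_\ell$ and $t_{(\ell,i,m)} := t_\ell$ for every triple. Then each conjunct of \eqref{eq-ii3} and \eqref{eq-ic3} is a verbatim copy of a conjunct of \eqref{eq-ii2} or \eqref{eq-ic2} that already holds, and the premise $\bigwedge_{\ell \in L} \psi_{(\ell,i,m)}(x)$ in \eqref{eq-ti3} collapses to $\bigwedge_{\ell \in L} \psi_\ell(x)$, reducing \eqref{eq-ti3} to \eqref{eq-ti2}. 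Hence the extended valuation satisfies \eqref{eq-ii3} $\land$ \eqref{eq-ic3} $\land$ \eqref{eq-ti3}.

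Finally I would apply Motzkin in Transformation 5. With all function symbols and invariant coefficients now fixed to concrete field elements, each universally quantified conjunct of \eqref{eq-ii4}, \eqref{eq-ic4} and \eqref{eq-ti4} is a sentence of the exact shape $\forall x.\; \neg(Ax \leq b \land Bx < d)$ of \eqref{eq-motzkin1}; since the valuation makes it true, \autoref{thm-motzkin} gives the equivalent \eqref{eq-motzkin2}, producing the Motzkin coefficients $\lambda, \mu, \xi, \chi_1, \chi_2, \xi_\ell, \zeta_j$ that witness the matching conjunct of \eqref{eq-ii5}, \eqref{eq-ic5} or \eqref{eq-ti5}. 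Collecting all these coefficients alongside the fixed valuation yields a model of the target constraint. The step I expect to need the most care is this last one: one must verify that, once the valuation is substituted, each conjunct genuinely lands in the matrix form $Ax \leq b \land Bx < d$, that the split of $J_i$ into $J_i^\geq$ and $J_i^>$ together with the $\mu \neq 0$ disjunct of \eqref{eq-motzkin2} correspond to precisely the coefficients appearing in the target, and that the nonlinear products of Motzkin coefficients with invariant coefficients cause no difficulty here --- precisely because in this direction the invariant coefficients are constants. Everything else is routine symbol manipulation.
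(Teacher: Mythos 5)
Your proposal is correct and takes essentially the same route as the paper: the paper's own argument likewise observes that Transformations 1, 2 and 4 are purely syntactic equivalence transformations, that Transformation 3 only adds invariants and hence retains all old solutions (your explicit replication $\psi_{(\ell,i,m)} := \psi_\ell$ is exactly how this retention works), and that Transformation 5 is an equivalence by Motzkin's Transposition Theorem applied once the template and invariant coefficients are fixed. The only difference is level of detail---the paper compresses this into a short paragraph preceding the theorem statement, whereas you spell out the propositional identities and the pointwise application of Motzkin's Theorem.
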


We get even more than just termination guarantee
the \hyperref[thm-soundness]{soundness theorem} suggests.  The
resulting variable assignment gives rise to a termination argument in
form of a ranking function together with a set of supporting
invariants.  This serves as a termination proof that can be verified
by an independent theorem prover.

\section{Integer Lasso Programs}
\label{sec-integers}

In \autoref{sec-constraints} we built the constraints for rational or
real variable domains.  In this section we want to motivate that the
same procedure can be applied to integer or mixed integer variable
domains.

\begin{definition}\label{def-mixed-variable}
A lasso program is said to have \emph{mixed integer domain}, iff it
contains some variables whose domain is the integers.
\end{definition}

The soundness of this method for integers is trivial---the integers
are a subset of the rationals and hence every execution of a program
of mixed integer domain is also an execution of the program with the
larger domain.  Therefore the mixed integer program has no infinite
execution if the program with larger domain has none.  We are
interested in the completeness.  Note that even the instantiation of
the affine template for lasso programs without stem is co-NP-complete
in the integer case~\cite{BG13}.

We introduce the notion of \emph{integral polyhedra}.  A polyhedron is
integral, if it contains all inequalities that do not follow over the
rationals, but are entailed over the integers.  This will enable the
use of \hyperref[thm-motzkin]{Motzkin's Theorem} for integer
polyhedra.  We give an equivalent definition.
\begin{definition}\label{def-integral}
A polyhedron $Ax \leq b$ is \emph{integral} iff it coincides with the
convex hull of the integer solutions of $Ax \leq b$.
\end{definition}

For a given polyhedron, we can compute its \emph{integral hull} (the
corresponding integral polyhedron) using Hartmann's
algorithm~\cite{CHK09}.  However, the number of inequalities needed
can grow exponentially~\cite{Hartmann88}.  If we fix the dimension
$n$, the running time is polynomial in the number of inequalities $m$
and their descriptive size.

\begin{lemma}[Integral polyhedra]\label{lem-poly-integral}
A integral polyhedron contains no integer points if and only if it is
empty.
\end{lemma}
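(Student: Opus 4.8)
The plan is to derive the statement almost directly from \autoref{def-integral}. Write the polyhedron as $P = \{ x \in \mathbb{K}^n \mid Ax \leq b \}$ and let $P_{\mathbb{Z}} = \{ x \in \mathbb{Z}^n \mid Ax \leq b \}$ denote its set of integer points, which is exactly the set of integer solutions of the defining system $Ax \leq b$. The reverse implication is immediate: if $P$ is empty, then it trivially contains no integer points.

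For the forward implication I would reason as follows. Assume $P$ contains no integer points, so that $P_{\mathbb{Z}}$ is empty. By \autoref{def-integral}, the integrality of $P$ says precisely that $P$ coincides with the convex hull of $P_{\mathbb{Z}}$. Since $P_{\mathbb{Z}}$ is empty and the convex hull of the empty set is again empty, we conclude $P = \emptyset$, as required.

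The only point that needs a word of care is the identification of the integer points of $P$ with the integer solutions of $Ax \leq b$; this is immediate because $P$ is defined by that very system, so that $P \cap \mathbb{Z}^n = P_{\mathbb{Z}}$. Consequently there is no genuine obstacle here: the lemma is essentially an unfolding of the definition of integrality, and the entire content lies in taking the convex hull of the empty set to be empty. Its purpose in the surrounding development is to license the application of \hyperref[thm-motzkin]{Motzkin's Theorem} in the integer setting, where the property detected by the transposition theorem is emptiness of the polyhedron rather than the mere absence of integer points.
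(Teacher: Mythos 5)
Your proof is correct, but it takes a genuinely different route from the paper's. You unfold \autoref{def-integral} directly: if the polyhedron contains no integer points, then the set of integer solutions of $Ax \leq b$ is empty, its convex hull is empty, and integrality forces the polyhedron itself to be empty; the converse is trivial. The paper instead proves the contrapositive of the forward direction via linear programming: assuming the integral polyhedron is non-empty, it invokes (in the terminology of the proof of \autoref{lem-farkas-affine}) the fact that an integral primal problem has an integer optimal solution~\cite{Schrijver99}, and that optimum is an integer point of the polyhedron. Your approach buys simplicity and self-containedness: under the convex-hull definition the paper actually adopts, the lemma is essentially definitional, and the only point requiring care --- that the convex hull of the empty set is empty --- is one you address explicitly. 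The paper's approach is heavier than necessary for its own definition, but it has the advantage of surviving under the other standard characterizations of integrality (for instance, that every face, or every linear program over the polyhedron with finite optimum, admits an integer optimal solution), where the identity between the polyhedron and the convex hull of its integer points is a theorem rather than the definition.
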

\begin{proof}
If the polyhedron is empty, it cannot contain integer points.
Conversely, assume the integral polyhedron is not empty.  Using the
terminology of the proof of \autoref{lem-farkas-affine}, we know that
the primal problem $P$ is integral and thus has an integer optimal
solution~\cite{Schrijver99}.  This integer optimal solution is an
integer point in the polyhedron.
\end{proof}

According to \autoref{lem-poly-integral}, if we manage to make the
polyhedron in \eqref{eq-motzkin1} integral, we can equivalently
transform an integer universally quantified statement into a rational
existentially quantified one using \hyperref[thm-motzkin]{Motzkin's
Theorem}.  However, this is not applicable for our method because the
polyhedra in \eqref{eq-ii4}, \eqref{eq-ic4}, and \eqref{eq-ti4}
contain free variables.  Making just the stem and loop transitions
integral does preserve more solutions; however, we cannot obtain
completeness by this approach, as the following example illustrates.

\begin{example}\label{ex-non-integral}
Consider the following program $\prog_\mathrm{int}$.
\begin{center}
\begin{minipage}{40mm}
\begin{lstlisting}
assume($2y \geq z$);
while($q \geq 0 \;\land\; z = 1$):
    $q$ := $q - 2y + 1$;
\end{lstlisting}
\end{minipage}
\end{center}
Clearly, $f(q, y, z) = q + 1$ is a ranking function for
$\prog_\mathrm{int}$, hence we use the affine template.  The only
inductive invariant is $2y - z \geq 0$ since inductive invariants have
to be implied by the stem.  These invariants are not sufficient to
prove that $f(q, y, z)$ is indeed a ranking function:
\begin{align*}
&f(q, y, z) - f(q', y', z') - \mu \cdot (2y - z) \\
=\; &q - (q - 2y + 1) - (2y - z) \\
=\; &z - 1
= 0, \text{ but should be positive.}
\end{align*}

The invariant $2y \geq z$ and the loop condition $z = 1$ imply over
the integers that $y \geq 1$ and hence
\begin{align*}
\forall x, x' \in \mathbb{Z}^n.\;
  \loopt(x, x') \land 2y - z \geq 0
  \rightarrow f(x) - f(x') \geq 1
\end{align*}
is valid.  Computing the integral hull of $\loopt \land 2y - z \geq 0$
yields $y \geq 1$ and with this inequality the ranking function can be
discovered.
\end{example}


\clearpage
\chapter{Non-linearity in the Constraints}
\label{ch-nl-constraints}

In this chapter we discuss the constraints generated
in \autoref{sec-constraints}.  We asses the number of variables that
occur in non-linear operations in these constraints using formal
notions introduced in \autoref{sec-nl-dim}.  Furthermore, we give a
theorem that enables us to eliminate some of the Motzkin coefficients
from the constraints in \autoref{sec-nl-constraints}.  We apply this
technique to our ranking function templates in \autoref{sec-nl-rfts}
and give a summarizing overview of the results
in \autoref{sec-rft-overview}.

\section{Definitions}
\label{sec-nl-dim}

\begin{definition}[Dependency graph]
\label{def-rft-dg}
Let $\T$ be a linear ranking function template with variables $D$ and
function symbols $F$.  The template's \emph{dependency graph} is a
graph $G_\T = (D \cup F, E)$ with the set of nodes $D \cup F$ and the
edges
\begin{align*}
E = \{ (f_1, f_2) \in (D \cup F)^2 \mid
  \T \text{ has an atom where both, }
  f_1 \text{ and } f_2 \text{ occur} \}.
\end{align*}
\end{definition}

It follows from the definition that the dependency graph $G_\T$ of a
ranking function template $\T$ is reflexive and symmetric
(undirected).  Given a variable or function symbol $f \in D \cup F$,
we denote by $[f]$ the connected component\footnote{A connected
component is a maximal subset of nodes such that these nodes are
pairwise connected by paths.} that contains $f$.

\begin{example}\label{ex-rft-dg}
The following table lists the set of connected components in the
dependency graph for the ranking function templates introduced
in \autoref{ch-templates}.  See \autoref{fig-rft-dg} for a
visualization.
\begin{center}
\renewcommand{\arraystretch}{1.5} 
\begin{tabular}{ll}
\ref{eq-rft-affine} & $\big\{ \{ f, \delta \} \big\}$ \\
\ref{eq-rft-multiphase}
  & $\big\{ \{ f_i, \delta_i \} \mid 1 \leq i \leq k \big\}$ \\
\ref{eq-rft-pw} & $\big\{ \{ f_1, \ldots, f_k, \delta \} \big\}
  \cup \big\{ \{ g_i \} \mid 1 \leq i \leq k \big\}$ \\
\ref{eq-rft-lex}
  & $\big\{ \{ f_i, \delta_i \} \mid 1 \leq i \leq k \big\}$
\end{tabular}
\end{center}
\end{example}

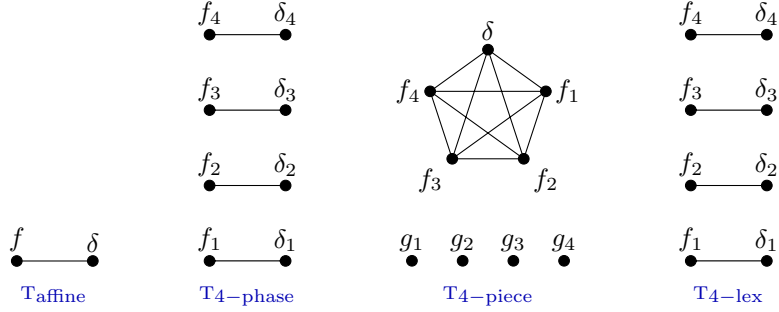
\begin{figure}
\begin{center}
\renewcommand{\tabcolsep}{1.5em}
\begin{tabular}{cccc}
\begin{tikzpicture}
\draw[fill] (0, 0) circle(0.07) node[above] {$f$};
\draw[fill] (1, 0) circle(0.07) node[above] {$\delta$};
\draw (0, 0) -- (1, 0);
\end{tikzpicture}
&
\begin{tikzpicture}
\draw[fill]
  \foreach \x in {1,...,4} {
    (0, \x) circle(0.07) node[above] {$f_\x$}
    (1, \x) circle(0.07) node[above] {$\delta_\x$}
    (0, \x) -- (1, \x)
  };
\end{tikzpicture}
&
\begin{tikzpicture}
\newdimen\R
\R=8mm
\draw[rotate=90] (0:\R)
  \foreach \x in {72,144,...,360} {  -- (\x:\R) }
  \foreach \x in {144,288,...,576} {  -- (\x:\R) }
     -- cycle (360:\R) node[above] {$\delta$}
     -- cycle (288:\R) node[right] {$f_1$}
     -- cycle (216:\R) node[below right] {$f_2$}
     -- cycle (144:\R) node[below left] {$f_3$}
     -- cycle  (72:\R) node[left] {$f_4$};
\draw[rotate=90,fill] (0:\R)
  \foreach \x in {72,144,...,360} { (\x:\R) circle (0.07) };

\draw[fill]
  \foreach \x in {1,...,4} {
    ({(\x-2.5)*0.667}, -2) circle (0.07) node[above] {$g_\x$}
  };
\end{tikzpicture}
&
\begin{tikzpicture}
\draw[fill]
  \foreach \x in {1,...,4} {
    (0, \x) circle(0.07) node[above] {$f_\x$}
    (1, \x) circle(0.07) node[above] {$\delta_\x$}
    (0, \x) -- (1, \x)
  };
\end{tikzpicture} \\
\ref{eq-rft-affine}
& \hyperref[eq-rft-multiphase]{$\T_{4\mathrm{-phase}}$}
& \hyperref[eq-rft-pw]{$\T_{4\mathrm{-piece}}$}
& \hyperref[eq-rft-lex]{$\T_{4\mathrm{-lex}}$}
\end{tabular}
\end{center}
\caption{
The dependency graph of the affine, 4-phase, 4-piece and
4-lexicographic template.  The number of connected components is 1, 4,
5, and 4 respectively.  The graphs' reflexive edges are not shown.
}\label{fig-rft-dg}
\end{figure}

Next, we define colorings, coloring graphs and suitable colorings for
a template $\T$.  A suitable coloring selects the occurrences of atoms
of the template whose Motzkin coefficient we can eliminate in the
constraint \eqref{eq-ii5} $\land$ \eqref{eq-ic5}
$\land$ \eqref{eq-ti5} (see \autoref{sec-nl-constraints}).

\begin{definition}[Coloring]\label{def-coloring}
Let $\T = \bigwedge_{i \in I} \bigvee_{j \in J_i} \T_{i,j}$ be a
linear ranking function template in CNF and let $D$ be the variables
and $F$ be the function symbols of $\T$.  A \emph{coloring $\eta$ of
$\T$} is mapping from occurrences of atoms of $\T$ to
$\{ \cwhite, \cred, \cblue \}$.  The occurrence of an atom $\T_{i,j}$
is called \emph{red} iff it is mapped to $\cred$, \emph{blue} iff it
is mapped to $\cblue$, and uncolored otherwise.
\end{definition}

In \autoref{def-coloring} we consider occurrences of atoms rather than
atoms because an atom may occur multiple times in the same template
and we want to be able to distinguish these occurrences.  For
simplicity, we will sometimes write that an atom is red, blue or
uncolored respectively, if it is clear from context which occurrence
we mean.  Moreover, by stating $f$ occurs in a red atom, we mean $f$
occurs in an atom which has an occurrence that is colored red.

\begin{definition}[Coloring graph]\label{def-coloring-graph}
Let $\T = \bigwedge_{i \in I} \bigvee_{j \in J_i} \T_{i,j}$ be a
linear ranking function template in CNF with variables $D$ and
function symbols $F$, and let $\eta$ be a coloring for $\T$.
The \emph{coloring graph} is a directed graph $G_\eta = (\mathcal{K},
E)$ such that the following holds.
\begin{itemize}
\item The set of nodes $\mathcal{K}$ is the set of connected
  components of $G_\T$.
\item For all $f_1, f_2 \in D \cup F$, there is an edge from the
  connected component of $f_1$ to the connected component of $f_2$,
  i.e., $([f_1], [f_2]) \in E$, if and only if $f_1$ occurs in an red
  atom $\T_{i_1,j_1}$ and $f_2$ occurs in a blue atom $\T_{i_2,j_2}$
  and $i_1 = i_2$, i.e., $\T_{i_1, j_1}$ and $\T_{i_2, j_2}$ occur in
  the same conjunct.
\end{itemize}
\end{definition}

\begin{definition}[Suitable coloring]\label{def-suitable-coloring}
Let $\T$ be a linear ranking function template in CNF and let $F$ be
the function symbols and $D$ be the variables of $\T$.  A coloring
$\eta$ is \emph{suitable for $\T$} iff the following holds.
\begin{enumerate}[a)]
\item Every conjunct of $\T$ contains exactly one red atom.
\item For every $f_1, f_2 \in D \cup F$ that occur in two different
  blue atoms, there is no path between $f_1$ and $f_2$ in the
  dependency graph $G_\T$.
\item The coloring graph $G_\eta$ is acyclic.
\end{enumerate}
\end{definition}

\noindent
See \autoref{fig-rft-colored} on page \pageref{fig-rft-colored} for a
visualization of some suitable colorings for our ranking function
templates.  We also give two detailed examples in the following.

\begin{example}\label{ex-coloring-rft-affine}
Consider the linear template \ref{eq-rft-affine}
from \autoref{def-rft-affine}.  Since \ref{eq-rft-affine} does not
contain any disjunctions in CNF, every conjunct contains exactly one
atom.  Therefore the only suitable coloring for \ref{eq-rft-affine} is
one that colors all occurrences of atoms red according
to \autoref{def-suitable-coloring} (a).  As no atoms are colored blue,
conditions of \autoref{def-suitable-coloring} (b) and (c) are
trivially satisfied.
\end{example}

According to \autoref{def-suitable-coloring} (a), occurrences of atoms
in a conjunct that contains only one atom have to be colored red.

\begin{example}\label{ex-coloring-rft-3phase}
Consider the $3$-phase template.
\begin{align}
\begin{aligned}
&\delta_1 > 0 \;\land\; \delta_2 > 0 \;\land\; \delta_3 > 0 \\
\land\; \big( &f_1(x) > 0 \;\lor\; f_2(x) > 0 \;\lor\; f_3(x) > 0 \big) \\
\land\;\;\; &f_1(x') < f_1(x) - \delta_1 \\
\land\; \big( &f_2(x') < f_2(x) - \delta_2 \;\lor\; f_1(x) > 0 \big) \\
\land\; \big( &f_3(x') < f_3(x) - \delta_3 \;\lor\; f_2(x) > 0
  \;\lor\; f_1(x) > 0\big)
\end{aligned}
\tag{$\T_{3\mathrm{-phase}}$}\label{eq-rft-3phase}
\end{align}
We construct a coloring $\eta$ for \ref{eq-rft-3phase}, given in CNF.
The following atoms have to be colored red according
to \autoref{def-suitable-coloring} (a).
\begin{align*}
\delta_1 > 0, && \delta_2 > 0, && \delta_3 > 0,
&& f_1(x') < f_1(x) - \delta_1.
\end{align*}
The remaining candidates are the atoms for blue coloring are
\begin{align*}
f_1(x) > 0 \;\lor\; f_2(x) > 0 \;\lor\; f_3(x) > 0, \\
f_2(x') < f_2(x) - \delta_2 \;\lor\; f_1(x) > 0, \\
f_3(x') < f_3(x) - \delta_3 \;\lor\; f_2(x) > 0 \;\lor\; f_1(x) > 0.
\end{align*}
Recall that although $f_1(x) > 0$ occurs three times in this list, we
consider it as three different occurrences of the atom
in \ref{eq-rft-3phase}.  By \autoref{ex-rft-dg}, the dependency graph
$G_{\T_{3\mathrm{-phase}}}$ has three connected components.  We color
the following two atoms blue.
\begin{align*}
f_2(x') < f_2(x) - \delta_2 \text{  and  }
f_3(x') < f_3(x) - \delta_3.
\end{align*}
We complete our coloring by choosing the color red for the three
occurrences of the atoms $f_1(x) > 0$.  Note that this choice for
$\eta$ is not the only possibility.  We visualize the coloring $\eta$:
\begin{align*}
&{\color{myred}\delta_1 > 0} \;\land\; {\color{myred}\delta_2 > 0}
\;\land\; {\color{myred}\delta_3 > 0} \\
\land\; \big( &{\color{myred}f_1(x) > 0} \;\lor\; f_2(x) > 0 \;\lor\;
  f_3(x) > 0 \big) \\
\land\;\;\; &{\color{myred}f_1(x') < f_1(x) - \delta_1} \\
\land\; \big( &{\color{myblue}f_2(x') < f_2(x) - \delta_2}
  \;\lor\; {\color{myred}f_1(x) > 0} \big) \\
\land\; \big( &{\color{myblue}f_3(x') < f_3(x) - \delta_3}
  \;\lor\; f_2(x) > 0
  \;\lor\; {\color{myred}f_1(x) > 0} \big)
\end{align*}

Let us check the conditions of \autoref{def-suitable-coloring}:
\begin{enumerate}[a)]
\item We colored exactly one atom in each conjunct red.
\item The two sets of variables and function symbols
  $\{ f_2, \delta_2 \}$ and $\{ f_3, \delta_3 \}$ that occur in blue
  colored atoms are different connected components of the dependency
  graph of \ref{eq-rft-3phase}.
\item The coloring graph $G_\eta$ is acyclic:
\begin{center}
\begin{tikzpicture}
\node (k1) at (0, 0) {$\{ f_1, \delta_1 \}$};
\node (k2) at (2, 0) {$\{ f_2, \delta_2 \}$};
\node (k3) at (4, 0) {$\{ f_3, \delta_3 \}$};
\draw[->,bend right] (k1) to node {} (k2);
\draw[->,bend left] (k1) to node {} (k3);
\end{tikzpicture}
\end{center}
\end{enumerate}
Hence $\eta$ is a suitable coloring for the ranking function
template \ref{eq-rft-3phase}.
\end{example}

\begin{lemma}\label{lem-deg-components}
Let $\T$ be a ranking function template, let $G_\T$ be the dependency
graph of $\T$ and let $\eta$ be a suitable coloring for $\T$.  If
$G_\T$ has $c$ connected components, then the number of occurrences of
atoms colored blue is at most $c - 1$.
\end{lemma}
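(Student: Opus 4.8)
The plan is to set up an injective map from blue-colored atom occurrences to connected components of $G_\T$, and then to use the acyclicity of $G_\eta$ to exhibit one component that is not the image of any such occurrence, which sharpens the bound from $c$ to $c-1$.

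First I would observe that all function symbols and variables occurring in a single atom lie in one connected component of $G_\T$: by \autoref{def-rft-dg} any two of them are joined by an edge and hence lie in the same component. Thus each blue occurrence determines a unique component, which I will call its \emph{blue component}, and I claim the resulting map is injective. Indeed, if two distinct blue occurrences shared a component, then choosing a symbol from each (atoms are nonempty by convention) would produce two symbols lying in different blue occurrences that are path-connected in $G_\T$, contradicting \autoref{def-suitable-coloring}~(b); note the degenerate case where the chosen symbols coincide is covered by the trivial length-$0$ path. Hence the number of blue occurrences equals the number of distinct blue components.

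Next I would show that every blue component has positive in-degree in the coloring graph $G_\eta$. If a blue occurrence $\T_{i,j}$ sits in conjunct $i$ with blue component $C$, then by \autoref{def-suitable-coloring}~(a) that same conjunct contains exactly one red atom, lying in some component $C'$, and \autoref{def-coloring-graph} then places an edge $(C', C)$ in $G_\eta$. The acyclicity demanded by \autoref{def-suitable-coloring}~(c) rules out self-loops, forcing $C' \neq C$, so $C$ indeed receives a genuine incoming edge.

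Finally I would invoke acyclicity a second time. Being a finite acyclic directed graph on the $c$ nodes of $G_\eta$ (the components of $G_\T$), it admits a topological ordering and therefore has at least one source of in-degree $0$. By the previous paragraph a source cannot be a blue component, so at least one of the $c$ components carries no blue occurrence; the number of blue components, which equals the number of blue occurrences, is therefore at most $c-1$. The main obstacle is conceptual rather than computational: I must keep the two suitability conditions doing separate jobs --- condition (b) supplying the injectivity that bounds the count by $c$, and condition (c) supplying the unused source component that improves the bound to $c-1$ --- and in particular justify via acyclicity that the red and blue atoms sharing a conjunct fall into distinct components.
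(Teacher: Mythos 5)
Your proposal is correct and follows essentially the same route as the paper's proof: an injective map from blue occurrences to connected components (via condition (b)), an incoming edge into each blue component obtained from the red atom in the same conjunct (via conditions (a) and the definition of the coloring graph), and acyclicity (condition (c)) to conclude. The only difference is presentational --- you argue directly via the existence of a source in a finite DAG, while the paper argues by contradiction that every node would have an incoming edge and hence the graph would contain a cycle; these are the same argument stated contrapositively.
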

\begin{proof}
Let $M$ be the function assigning to every blue atom $A$ the connected
component of the variables and function symbols occurring in $A$.  The
function $M$ is injective: if there are two blue atoms $A_1, A_2$ such
that $M(A_1) = M(A_2)$, then from \autoref{def-suitable-coloring} (b)
follows that $A_1 = A_2$.

Assume there are $c$ or more blue atoms, and let $K_1$ be some
connected component in $G_\T$.  Consider the conjunct of $A_1 =
M^{-1}(K_1)$.  By \autoref{def-suitable-coloring} (a), there is a red
atom $A_2$ in this conjunct; let $K_2 = M(A_2)$.  We have that $(K_2,
K_1)$ is an edge in the coloring graph $G_\eta$.

Consequently, every node in the finite coloring graph $G_\eta$ has an
incoming edge, and thus the graph must contain a cycle.  This
contradicts \autoref{def-suitable-coloring} (c).
\end{proof}

\begin{definition}[Degree of a template]\label{def-degree}
Let $\T$ be a linear ranking function template.  We define
the \emph{degree of a coloring $\eta$} of $\T$ as
\begin{align*}
\deg_\T(\eta) = \#\, \eta^{-1}(\{ \cwhite \}),
\end{align*}
the number of occurrences of atoms that are uncolored.
The \emph{degree of $\T$} is the minimal degree of all suitable
colorings of $\T$.
\end{definition}

As we will show in \autoref{sec-nl-constraints}, the degree of $\T$ is
the number of non-linear Motzkin coefficients of atoms of $\T$ that we
are not going to eliminate from our constraints.

\begin{example}\label{ex-def-affine}
By \autoref{ex-coloring-rft-affine}, the only suitable coloring
for \ref{eq-rft-affine} is a coloring $\eta$ such that all atoms are
colored red.  The template \ref{eq-rft-affine} has a degree of at most
\begin{align*}
\deg_{\T_\mathrm{affine}}(\eta) = 0.
\end{align*}
\end{example}

The number of connected components in the dependency graph of a
template $\T$ gives rise to a lower bound on the degree of $\T$
according to \autoref{lem-deg-components}.

\begin{lemma}\label{lem-eta-minimal}
Let $\T$ be a ranking function template, let $G_\T$ be the dependency
graph of $\T$ and let $\eta$ be a suitable coloring for $\T$.  If
$G_\T$ has $c$ connected components and $c - 1$ occurrences of atoms
are colored blue by $\eta$, then the degree of $\T$ is
$\deg_\T(\eta)$.
\end{lemma}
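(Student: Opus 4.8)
The plan is to reduce everything to a counting identity and then apply \autoref{lem-deg-components}. Write $T$ for the total number of occurrences of atoms in $\T$ (viewed in CNF). Every such occurrence is colored red, blue, or left white, so for any coloring $\eta'$ we have the partition
$T = R(\eta') + B(\eta') + \deg_\T(\eta')$,
where $R$ and $B$ count red and blue occurrences respectively and, by \autoref{def-degree}, $\deg_\T(\eta') = \#\,\eta'^{-1}(\{\cwhite\})$ is the white count.

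The crucial first observation I would make is that $R(\eta')$ is the \emph{same} for every suitable coloring. Indeed, \autoref{def-suitable-coloring} (a) forces each conjunct of $\T$ to contain exactly one red atom, so $R(\eta')$ equals the number of conjuncts of $\T$, a quantity that does not depend on $\eta'$. Denote this common value by $R$. Then $\deg_\T(\eta') = T - R - B(\eta')$ with $T$ and $R$ fixed, which means that minimizing the degree over all suitable colorings is exactly the same as maximizing the number of blue occurrences.

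Next I would invoke \autoref{lem-deg-components}, which guarantees that any suitable coloring has at most $c - 1$ blue occurrences. Combining this with the identity above gives $\deg_\T(\eta') \geq T - R - (c - 1)$ for every suitable $\eta'$, and this lower bound is attained precisely when $B(\eta') = c - 1$. By hypothesis $\eta$ colors exactly $c - 1$ occurrences blue, so $\deg_\T(\eta) = T - R - (c - 1)$ matches the lower bound; hence $\eta$ has minimal degree and, by \autoref{def-degree}, $\deg_\T(\eta)$ is the degree of $\T$.

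I do not expect a genuine obstacle here: the argument is an affine-counting observation. The only point requiring care is recognizing that condition (a) of \autoref{def-suitable-coloring} pins the red count down uniformly across all suitable colorings, which is what turns the degree into a strictly decreasing affine function of the blue count; the bound from \autoref{lem-deg-components} then closes the argument immediately.
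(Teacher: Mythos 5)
Your proof is correct and takes essentially the same route as the paper: the paper's own proof likewise observes that \autoref{def-suitable-coloring} (a) fixes the number of red occurrences across all suitable colorings and then invokes \autoref{lem-deg-components} to cap the blue count at $c-1$, concluding that no suitable coloring can leave fewer atoms uncolored. Your write-up simply makes the underlying counting identity explicit, which the paper leaves implicit.
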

\begin{proof}
By \autoref{def-suitable-coloring} (a), in every coloring the same
number of occurrences of atoms is colored red.  Moreover,
by \autoref{lem-deg-components}, no more than $c - 1$ atoms may be
colored blue, therefore there is no coloring that such that less atoms
are uncolored and thus $\deg_\T(\eta)$ is minimal.
\end{proof}

\begin{example}\label{ex-coloring-rft-3phase2}
Recall the linear ranking function template \ref{eq-rft-3phase} and
its suitable coloring $\eta$ from \autoref{ex-coloring-rft-3phase}.
\begin{align*}
\deg_{\T_{3\mathrm{-phase}}}(\eta) = 3.
\end{align*}
The dependency graph of \ref{eq-rft-3phase} has three components.
By \autoref{lem-eta-minimal} coloring two atoms blue implies that
$\deg_{\T_{3\mathrm{-phase}}}(\eta)$ is minimal and
therefore \ref{eq-rft-3phase} has degree $3$.
\end{example}

\begin{definition}[Non-linear dimension]
\label{def-nl-dim}
Let $\varphi$ be a formula in non-linear arithmetic containing the
free or existentially quantified variables $V$.  The \emph{non-linear
dimension} of $\varphi$ is the size of the smallest subset of
variables $V' \subseteq V$ such that $\varphi$ becomes a formula in
linear arithmetic when assigning a value to each variable in $V'$
(removing their quantifiers).
\end{definition}

A formula $\varphi$ in non-linear arithmetic is also a formula in
linear arithmetic if it uses no non-linear operations (multiplication
of variables).

\begin{example}\label{ex-nl}
Consider the formula in non-linear arithmetic
\begin{align*}
\varphi(y) \equiv \exists x.\; x \cdot y = 1.
\end{align*}
We have one non-linear operation in $\varphi$: the multiplication
$x \cdot y$.  The formula $\varphi$ becomes linear after the
assignment to the variable $x$ or the variable $y$.
\begin{align*}
\varphi_1 &\equiv \exists x.\; x \cdot 3 = 1
  \equiv \exists x.\; x + x + x = 1 \\
\varphi_2(y) &\equiv \frac{1}{2} \cdot y = 1 \equiv y = 2
\end{align*}
Here we used the assignment $x \mapsto \frac{1}{2}$ and $y \mapsto 3$
respectively.  Consequently, $\varphi$ has non-linear dimension $1$.
\end{example}

\section{Non-linear dimension of the Constraints}
\label{sec-nl-constraints}

In this section we examine the non-linear dimension of the constraints
generated in \autoref{sec-constraints}.  First we show that the
constraints \eqref{eq-ii5} $\land$ \eqref{eq-ic5}
$\land$ \eqref{eq-ti5} can be simplified.  We eliminate quantifiers by
fixing the value of the quantified variables: a variable $v$ is fixed
to a finite set of values $\{ u_1, \ldots, u_m \}$ by replacing
$\exists v.\, \varphi(v)$ with
$\varphi(u_1) \lor \ldots \lor \varphi(u_m)$.

\begin{theorem}[Omitting quantifiers]\label{thm-omit-quantifiers}
Let $\eta$ be a suitable coloring for the linear ranking function
template $\T$ according to \autoref{def-suitable-coloring}.  If
for \emph{every} existentially-quantified subformula
in \eqref{eq-ii5}, \eqref{eq-ic5} and \eqref{eq-ti5}, we eliminate
quantifiers by fixing Motzkin coefficients to a finite set of values
as described below, then we obtain \emph{equivalent} constraints.
\begin{enumerate}[I.]
\item The Motzkin coefficient $\xi$ in the invariant initiation
  \eqref{eq-ii5} is fixed to $\{ 0, 1 \}$.
\item The Motzkin coefficient $\chi_2$ in the invariant
  consecution \eqref{eq-ic5} is fixed to $\{ 0, 1 \}$.
\item The Motzkin coefficients $\zeta_j$ in \eqref{eq-ti5} of every
  atom $\T_{i,j}(x, x')$ colored red or blue by $\eta$ is fixed to $\{
  0, 1 \}$.
\item The Motzkin coefficients $\xi_\ell$ in \eqref{eq-ti5} of strict
  invariants are fixed to $\{ 0, 1 \}$, Motzkin coefficients of
  non-strict invariants are fixed to $\{ 1 \}$.
\end{enumerate}
\end{theorem}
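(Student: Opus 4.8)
The plan is to prove both directions of the claimed equivalence. One direction is immediate: restricting the existentially quantified Motzkin coefficients to a subset of their admissible values can only remove satisfying assignments, so the restricted constraint always implies the original one. The substance is the converse---starting from any assignment satisfying \eqref{eq-ii5} $\land$ \eqref{eq-ic5} $\land$ \eqref{eq-ti5}, I would manufacture a new satisfying assignment in which the designated coefficients already lie in the prescribed finite sets. Everything rests on a single observation. For each fixed index, the formula produced by \autoref{thm-motzkin} is positively homogeneous in its block of Motzkin coefficients: the defining equality is linear and homogeneous, the accompanying inequality is a homogeneous $\leq 0$, and the disjunction is invariant under multiplication by a positive scalar. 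Hence each such block may be multiplied by any positive constant. Independently, each connected component of the dependency graph may have all of its function symbols (together with the matching step sizes, which share the component) multiplied by a positive constant without changing the truth of any template atom. I would exploit exactly these two scaling freedoms.

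Parts I, II and IV are the easier ones and I would dispatch them first. For a fixed $(\ell,n)$ the block $(\lambda,\mu,\xi)$ in \eqref{eq-ii5} is homogeneous, so if $\xi>0$ I rescale it by $1/\xi$ to obtain $\xi=1$ and otherwise $\xi=0$ already; either way $\xi\in\{0,1\}$ and the invariant parameters are untouched. The same move on the block $(\lambda,\chi_1,\mu,\chi_2)$ in \eqref{eq-ic5} forces $\chi_2\in\{0,1\}$ while leaving $\chi_1$ free. For Part IV I would use invariant rescaling instead: a non-strict invariant $\tr{s}x+t\geq 0$ stays a valid inductive invariant under multiplication by any positive $\kappa$, so if its coefficient $\xi_\ell$ in \eqref{eq-ti5} is positive I rescale the invariant by $\kappa=\xi_\ell$ to make the coefficient equal to $1$, and if $\xi_\ell=0$ I replace the invariant by the trivial $0\geq 0$ and set $\xi_\ell=1$, contributing nothing anywhere. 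This pinpoints why strict invariants differ: the trivial strict invariant $0>0$ is false, so an unused strict invariant is forced to keep $\xi_\ell=0$, giving $\{0,1\}$ rather than $\{1\}$. I would observe that this invariant rescaling is absorbed in \eqref{eq-ti5} by the very coefficient $\xi_\ell$ being normalised, and in the triple-indexed blocks of \eqref{eq-ii5} and \eqref{eq-ic5} by the block scalings already used for Parts I and II, so the three normalisations are mutually consistent.

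Part III is the heart of the theorem and the place I expect the real work. I must simultaneously force the coefficient $\zeta_j$ of every red and every blue atom into $\{0,1\}$, using two handles: a positive factor $\sigma_{i,m}$ multiplying the entire Motzkin block of each conjunct $(i,m)$ in \eqref{eq-ti5}, and a positive factor $\rho_K$ rescaling each connected component $K$ of the dependency graph. Under the combined rescaling the new coefficient of an atom $\T_{i,j}$ becomes $\sigma_{i,m}\,\zeta_{i,j}/\rho_{[j]}$, and the defining equality, being homogeneous, is preserved. By \autoref{def-suitable-coloring}(a) each conjunct carries a single red atom, so I would spend $\sigma_{i,m}$ on normalising that red coefficient to $1$ (or leaving it $0$); by \autoref{def-suitable-coloring}(b) each component carries at most one blue atom, so I would spend $\rho_K$ on normalising that single blue coefficient. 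The obstacle is that these choices are coupled: normalising the blue atom of a conjunct $i$, whose symbols lie in some component $K$, demands a value of $\rho_K$ depending on $\sigma_{i,m}$, which itself depends on $\rho_{[\text{red of }i]}$---and this dependency is precisely the edge $[\text{red}]\to K$ of the coloring graph $G_\eta$ from \autoref{def-coloring-graph}.

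The key step, then, is to solve this system of scaling equations, and this is exactly where acyclicity enters. Since $G_\eta$ is acyclic by \autoref{def-suitable-coloring}(c), I would process its nodes in topological order: a component with no incoming edge receives $\rho_K=1$, each $\sigma_{i,m}$ is then determined by the already-fixed factor of its red atom's component, and each downstream $\rho_K$ is determined by the $\sigma$ of the conjunct whose blue atom lives in $K$. Acyclicity guarantees this propagation terminates with well-defined strictly positive factors, and the absence of self-loops guarantees that the red and blue atoms of one conjunct never share a component, so normalising a blue coefficient never disturbs a red one. Vanishing original coefficients are harmless, since $0$ already lies in $\{0,1\}$ and leaves the corresponding factor free. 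Finally I would verify that the inequality and disjunction parts of all three constraints survive: this is routine, because every scaling factor is strictly positive and hence preserves the sign of each linear form. The uncolored coefficients merely inherit their rescaled values and stay existentially quantified, consistent with the degree of $\T$ counting precisely these leftover non-linear coefficients.
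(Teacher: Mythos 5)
Your proposal is correct and follows essentially the same route as the paper's proof: Parts I, II and IV by positive homogeneity of each Motzkin block and by rescaling invariants (with the trivial invariant $0 \geq 0$ explaining the strict/non-strict distinction), and Part III by combining per-conjunct block scalings with per-component rescalings of the template's function symbols, resolved in topological order of the coloring graph using conditions (a), (b) and acyclicity (c). Your explicit factors $\sigma_{i,m}$ and $\rho_K$ are just a systematic restatement of the paper's iterative removal of source components, so the two arguments coincide in substance.
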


\noindent
Applying the quantifier eliminations I, II and IV
from \autoref{thm-omit-quantifiers} to \eqref{eq-ii5}, \eqref{eq-ic5},
and \eqref{eq-ti5} yields the following constraints (elimination III
is omitted for clarity).

\begin{align} 
\begin{aligned}
\bigwedge_{\ell \in L'} \bigwedge_{n \in N} \bigvee_{\xi \in \{0, 1\}}
&\exists \lambda, \mu \geq 0. \\
&\;\quad \tr{\lambda} B_n + \tr{\mu} B_n'
  + \xi \tr{\abovebelow{s_\ell}{0}} = 0 \\
&\land\; \tr{\lambda} b_n + \tr{\mu} b_n' + \xi t_\ell \leq 0 \\
&\land\; \big( \tr{\lambda} b_n < 0
  \;\lor\; \xi + \sum \mu > 0 \big)
\end{aligned}\label{eq-ii6}\tag{II6}
\end{align}
\begin{align} 
\begin{aligned}
\bigwedge_{\ell \in L'} \bigwedge_{m \in M} \bigvee_{\chi_2 \in \{0,1\}}
&\exists \lambda, \chi_1, \mu \geq 0.\; \\
&\;\quad \tr{\lambda} A_m + \tr{\mu} A_m'
  + \chi_2 \tr{\abovebelow{0}{s_\ell}}
  - \chi_1 \tr{\abovebelow{s_\ell}{0}} = 0 \\
&\land\; \tr{\lambda} c_m + \tr{\mu} c_m'
    + (\chi_2 - \chi_1) t_\ell \leq 0 \\
&\land\; \big( \tr{\lambda} c_m - \chi_1 t_\ell < 0
  \;\lor\; \chi_2 + \sum \mu > 0 \big)
\end{aligned}\label{eq-ic6}\tag{IC6}
\end{align}
\begin{align} 
\begin{aligned}
\bigwedge_{i \in I} \bigwedge_{m \in M}
  \bigvee_{(\xi_\ell)_{\ell \in L'} \in \{ 0, 1 \}^{L'}}
&\exists \lambda, (\zeta_j)_{j \in J_i}, \mu \geq 0. \\
&\quad\; \tr{\lambda} A_m + \tr{\mu} A_m'
  + \sum_{\ell \in L} \xi_\ell \tr{\abovebelow{s_{\ell,i,m}}{0}}
  + \sum_{j \in J_i} \zeta_j \tr{d_{i,j}} = 0 \\
&\land\; \tr{\lambda} c_m + \tr{\mu} c_m'
  + \sum_{\ell \in L} \xi_\ell t_{\ell,i,m}
  + \sum_{j \in J_i} \zeta_j e_{i,j} \leq 0 \\
&\land\; \big( \tr{\lambda} c_m
  + \sum_{\ell \in L} \xi_\ell t_{\ell,i,m}
  + \sum_{j \in J_i^\geq} \zeta_j e_{i,j} < 0 \\
&\quad\quad \lor\; \sum_{j \in J_i^>} \zeta_j + \sum \mu > 0 \big)
\end{aligned}\label{eq-ti6}\tag{TI6}
\end{align}

\begin{proof}[Proof of \autoref{thm-omit-quantifiers}]
We need to show that if there is a solution to \eqref{eq-ii5}
$\land$ \eqref{eq-ic5} $\land$ \eqref{eq-ti5}, then there is also a
solution to \eqref{eq-ii6} $\land$ \eqref{eq-ic6}
$\land$ \eqref{eq-ti6}; the converse is clear.  The four variables
fixed to specific values, I--IV, are discussed independently in the
following.
\begin{enumerate}[I.]
\item Let $\lambda$, $\mu$ and $\xi$ be a solution for the
invariant initiation \eqref{eq-ii5}:
\begin{align}
\begin{aligned}
&\tr{\lambda} B_n + \tr{\mu} B_n'
  + \xi \tr{\abovebelow{s_\ell}{0}} = 0 \\
\land\; &\tr{\lambda} b_n + \tr{\mu} b_n' + \xi t_\ell \leq 0 \\
\land\; &\big( \tr{\lambda} b_n < 0 \;\lor\; \xi + \sum \mu > 0 \big)
\end{aligned}
\label{eq-omit-proof-1}
\end{align}
  If $\xi = 0$, there is
  nothing to show.  Otherwise we can pick an assignment to the
  corresponding constraints in \eqref{eq-ii6},
\begin{align}
\begin{aligned}
&\tr{\lambda'} B_n + \tr{\mu'} B_n'
  + \tr{\abovebelow{s_\ell}{0}} = 0 \\
\land\; &\tr{\lambda'} b_n + \tr{\mu'} b_n' + t_\ell \leq 0 \\
\land\; &\big( \tr{\lambda'} b_n < 0
  \;\lor\; 1 + \sum \mu' > 0 \big)
\end{aligned}
\label{eq-omit-proof-2}
\end{align}
  by setting $\lambda' = \frac{\lambda}{\xi}$ and $\mu'
  = \frac{\mu}{\xi}$, since $\lambda$ and $\mu$ only occur in these
  three atoms.  Thus \eqref{eq-omit-proof-1} is satisfiable
  iff \eqref{eq-omit-proof-2} is.

\item Analogously to I, if $\chi_2 \neq 0$, we can divide 
  the solution to the invariant consecution \eqref{eq-ic5} by
  $\chi_2$.

\item Let $\mathcal{K}$ denote the set of connected components in the
  ranking function template's dependency graph $G_\T$.  Furthermore,
  let $G_\eta = (\mathcal{K}, E)$ be the coloring graph according
  to \autoref{def-coloring-graph}.  This graph is finite and acyclic,
  therefore we find a connected component $K \in \mathcal{K}$ that has
  no incoming edges in $G_\eta$.  We will show that fixing the values
  of the Motzkin coefficients of atoms where variables and function
  symbols from $K$ occur is equivalent, and then remove $K$ from the
  coloring graph $G_\eta$.  We do this iteratively for all nodes of the
  coloring graph.  Every atom of $\T$ contains variables or function
  symbols, and thus we cover all described quantifier eliminations.

  First, for every red atom $\T_{i,j^\ast}$ where variables or
  function symbols from $K$ occur, we divide the conjunct's Motzkin
  coefficients by the Motzkin coefficient of $\T_{i,j^\ast}$
  analogously to I and II.

  Second, the variables and function symbols of $K$ occur in at most
  one blue colored atom $\T_{i_0, j_0}$ according
  to \autoref{def-suitable-coloring} (b).  Let $\zeta$ be the Motzkin
  coefficient of $\T_{i_0,j_0}$ in \eqref{eq-ti5} as assigned
  by \hyperref[thm-motzkin]{Motzkin's Theorem} in transformation step
  5.  We assume $\zeta > 0$.

  Let $\T_{i_0, j_\mathrm{red}}$ be the red atom of the conjunct
  $\bigvee_{j \in J_{i_0}} \T_{i_0, j}$ which contains $\T_{i_0, j_0}$
  and let $\zeta_\mathrm{red}$ be the Motzkin coefficient of $\T_{i_0,
  j_\mathrm{red}}$ in \eqref{eq-ti5}.  We know that we already handled
  the $\T_{i_0, j_\mathrm{red}}$ in an earlier step, otherwise there
  would be an incoming edge to $K$ in the coloring graph $G_\eta$.
  Thus we have divided the conjunct by $\zeta_\mathrm{red}$ if
  $\zeta_\mathrm{red} \neq 0$.

  For every function symbol $f \in K$ and every variable $d \in K$, we
  pick $\frac{\zeta}{\zeta_\mathrm{red}} \cdot f$ as a new assignment
  to $f$ and $\frac{\zeta}{\zeta_\mathrm{red}} \cdot d$ as a new
  assignment to $d$ ($\zeta \cdot f$ for $f$ and $\zeta \cdot d$ for
  $d$ in case $\zeta_\mathrm{red} = 0$).  By \autoref{def-linear-rft},
  $\T_{i,j}$ can be written as
\begin{align*}
\sum_{f \in F_{i,j}} \left( \alpha_f f(x) + \beta_f f(x') \right)
  + \sum_{d \in D_{i,j}} \gamma_d d \rhd 0,
\end{align*}
  and this is equivalent to the following by multiplication with
  $\frac{\zeta}{\zeta_\mathrm{red}} > 0$.
\begin{align*}
\sum_{f \in F_{i,j}} \left(
    \alpha_f \Big(\frac{\zeta}{\zeta_\mathrm{red}} \cdot f \Big)(x)
  + \beta_f \Big(\frac{\zeta}{\zeta_\mathrm{red}} \cdot f \Big)(x')
  \right) + \sum_{d \in D_{i,j}}
    \gamma_d \Big(\frac{\zeta}{\zeta_\mathrm{red}} \cdot d\Big)
  \rhd 0
\end{align*}

\item For every inductive invariant $\psi \equiv s^T x + t \rhd 0$,
  the multiple
\begin{align*}
\psi' \equiv (\xi \cdot s^T) x + \xi \cdot t \rhd 0
\end{align*}
  is also an inductive invariant for every $\xi > 0$ if $\psi$ is a
  strict invariant ($\rhd =\; >$) and for all $\xi \geq 0$ if $\psi$
  is a non-strict invariant ($\rhd =\; \geq$).  The variables of every
  invariant $\psi_{\ell,i,m}$ occur only in its
  initiation \eqref{eq-ii5}, its consecution \eqref{eq-ic5} and in
  exactly one template implication in \eqref{eq-ti5} (due to the
  transformation step 3).  Hence for every solution to \eqref{eq-ii5}
  $\land$ \eqref{eq-ic5} we can pick $\frac{s_\ell}{\xi_\ell}$,
  $\frac{t_\ell}{\xi_\ell}$ as a solution to \eqref{eq-ii6}
  and \eqref{eq-ic6}. \qedhere
\end{enumerate}
\end{proof}

\noindent
The proof of \autoref{thm-omit-quantifiers} motivates why we need the
complicated restrictions to suitable colorings
in \autoref{def-suitable-coloring}.  These are the weakest
requirements to a coloring such that we can eliminate the Motzkin
coefficients of the colored atoms.  We get the elimination of red
atoms `for free'---we remove these by dividing the assignments of all
other Motzkin coefficients by this value.  We remove blue atoms by
rescaling the assignment to the template's variables and function
symbols.  However, we have to avoid cyclic dependencies, otherwise the
rescaling operation never terminates.  That is why we need to define
the notion of a coloring graph in \autoref{def-coloring-graph}.

Next, let us assess the non-linear dimension of the
constraints \eqref{eq-ii0} $\land$ \eqref{eq-ic0}
$\land$ \eqref{eq-ti0} before our transformations
in \autoref{sec-constraints}.  We want to compare this to the
non-linear dimension of the constraints \eqref{eq-ii6}
$\land$ \eqref{eq-ic6} $\land$ \eqref{eq-ti6}, the output of our
method.

\begin{theorem}\label{thm-nl-0}
Let $L$ be the index set of invariants, let $F$ and $D$ be the
function symbols and variables of the ranking function template $\T$
respectively.  Let $n$ denote the number of lasso program variables.
The constraints \eqref{eq-ii0} $\land$ \eqref{eq-ic0}
$\land$ \eqref{eq-ti0} have non-linear dimension
\begin{align*}
(n + 1)\#L + (n + 1)\#F + \#D.
\end{align*}
\end{theorem}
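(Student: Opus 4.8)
The plan is to bound the non-linear dimension from both sides, after first reading off the free variables of the constraint. By \autoref{def-nl-dim} the eligible variables $V$ are exactly the unknowns we solve for, and the universally quantified program variables $x, x'$ are \emph{not} in $V$ and hence may never be placed in $V'$; this asymmetry will do all the work.

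First I would enumerate $V$. Writing each invariant as $\psi_\ell(x) \equiv \tr{s_\ell}x + t_\ell \rhd 0$ contributes the vector $s_\ell \in \mathbb{K}^n$ and the scalar $t_\ell$, i.e.\ $n+1$ unknowns for each of the $\#L$ invariants. Writing each affine function symbol as $f(x) = \tr{s_f}x + t_f$ contributes $s_f \in \mathbb{K}^n$ and $t_f$, i.e.\ $n+1$ unknowns for each of the $\#F$ symbols. Finally there are the $\#D$ template variables $d \in D$. This gives the claimed total $(n+1)\#L + (n+1)\#F + \#D$, so the theorem amounts to showing both that fixing all of these suffices and that \emph{every} one of them must be fixed.

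For the upper bound I would note that once all unknowns are assigned, every atom occurring in \eqref{eq-ii0}, \eqref{eq-ic0}, \eqref{eq-ti0} becomes affine-linear in the remaining variables $x, x'$: the invariant atoms $\tr{s_\ell}x + t_\ell \rhd 0$, the stem/loop atoms from \autoref{lem-lasso-nf} (whose coefficients are program constants anyway), and each template atom $\sum_{f} (\alpha_f f(x) + \beta_f f(x')) + \sum_d \gamma_d d \rhd 0$ of \autoref{def-linear-rft}. Hence the whole quantified Boolean combination lies in linear arithmetic and the non-linear dimension is at most the total count.

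The lower bound is the heart of the argument. For each unknown I would display a genuine product of two variables it occurs in that can only be removed by assigning that unknown. The coefficient components are immediate: $s_{\ell,j}$ appears in the monomial $s_{\ell,j}\,x_j$ inside $\psi_\ell$, and $s_{f,j}$ appears in $\alpha_f s_{f,j}\,x_j$ or $\beta_f s_{f,j}\,x'_j$ inside a template atom; since the second factor is a universally quantified program variable that cannot enter $V'$, the coefficient itself is forced, which accounts for the $n\#L + n\#F$ part. The main obstacle is to argue that the remaining $\#L + \#F + \#D$ additive parameters---the constants $t_\ell$, $t_f$ and the template variables $d$---are likewise forced, because syntactically they enter the atoms only additively (with program-constant coefficients) rather than multiplied by a program variable. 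Here I would have to show that no strictly smaller $V'$ linearises the system, treating each affine object $\tr{s}x + t$ as a single unit whose constant is determined jointly with its linear part and tracking each $d$ through the atom it shares with such affine terms, working constraint-by-constraint over \eqref{eq-ii0}, \eqref{eq-ic0} and \eqref{eq-ti0}. This joint-occurrence bookkeeping is the delicate step that pins the dimension at the full $(n+1)\#L + (n+1)\#F + \#D$, rather than the smaller value a purely syntactic product count of the coefficient vectors alone would yield.
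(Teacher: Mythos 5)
Your proposal has a genuine gap at exactly the step you defer, and that step cannot be completed: the claim that the additive parameters $t_\ell$, $t_f$ and the template variables $d \in D$ must belong to every linearising set $V'$ is not just unproven --- under \autoref{def-nl-dim} taken literally it is false. Take $V'$ to consist only of the coefficient vectors $s_\ell$ ($\ell \in L$) and $s_f$ ($f \in F$), i.e.\ $n\#L + n\#F$ scalars. After assigning these, every atom of \eqref{eq-ii0}, \eqref{eq-ic0} and \eqref{eq-ti0} is an affine expression in $x$, $x'$, the $t_\ell$, the $t_f$ and the $d$'s, in which the latter occur only with \emph{constant} coefficients (the $\gamma_d$ of \autoref{def-linear-rft} are constants, and no $t$ is ever multiplied by another variable). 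Hence no multiplication of two variables survives, the formula is linear, and the non-linear dimension is at most $n(\#L + \#F)$, strictly below the value claimed in \autoref{thm-nl-0} whenever $\#L + \#F + \#D > 0$. The ``joint-occurrence bookkeeping'' you hope will force the additive parameters into $V'$ cannot exist; your own observation that they enter the atoms only additively is precisely a proof that they need not be fixed.

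For comparison, the paper's own proof is your first paragraph turned into an assertion: it declares that, since $x$ and $x'$ are universally quantified and cannot be chosen, one ``has to choose'' $V' = \{ s_\ell, t_\ell \mid \ell \in L \} \cup D \cup F$, and then counts $(n+1)\#L + (n+1)\#F + \#D$. It thus glosses over the very step you flagged as delicate, and it is wrong at the same spot: $t_\ell$, the constant parts of the $f$'s, and the $d$'s occur in no non-linear operation. What that count actually measures is the total number of unknowns $\#V$, which is a valid \emph{upper} bound on the non-linear dimension (and suffices for the intended contrast with \autoref{thm-nl-6}), but not the minimum demanded by \autoref{def-nl-dim}. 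So the part of your argument you did carry out --- that the $n\#L + n\#F$ coefficient entries are forced because their partners $x_j, x_j'$ may never enter $V'$ --- is sound (modulo the harmless assumption that every $f \in F$ and every $\psi_\ell$ actually occurs in some atom), and it in fact pins the true value at $n(\#L + \#F)$; the equality stated in the theorem is not provable, by your route or by the paper's.
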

\begin{proof}
Non-linear operations occur only with the invariants $\psi(x)$ and the
ranking function template's atoms $\T_{i,j}(x, x')$.  Because we
cannot choose $x$ or $x'$ since they are universally quantified, we
have to choose as the set of variables that occur in non-linear
operations
\begin{align*}
V = \{ s_\ell, t_\ell \mid \ell \in L \} \cup D \cup F.
\end{align*}
The vectors $s_\ell$ and the affine-linear function symbols $f \in F$
have a total number of $n$ and $n + 1$ variables respectively. 
\end{proof}

\begin{theorem}\label{thm-nl-6}
Let $\T$ be a linear ranking function template and let $L' = L \times
I \times M$ as in \autoref{sec-constraints}: $L$ is the index set of
invariants, $M$ is the index set of the loop transition's disjunctions
in CNF and $I$ is the index set of conjunctions in the ranking
function template's DNF.  Let $\eta$ be a suitable coloring for $\T$.
The constraint \eqref{eq-ii6} $\land$ \eqref{eq-ic6}
$\land$ \eqref{eq-ti6} has non-linear dimension at most
\begin{align*}
\#M \cdot \deg_\T(\eta) + \#L'.
\end{align*}
\end{theorem}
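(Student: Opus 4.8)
The plan is to read \autoref{def-nl-dim} combinatorially. Assigning a value to a variable $v$ turns every product $v \cdot w$ into a scalar multiple of $w$, so a set $V'$ of variables makes \eqref{eq-ii6} $\land$ \eqref{eq-ic6} $\land$ \eqref{eq-ti6} linear exactly when every product of two otherwise-free variables has at least one factor in $V'$. Hence the non-linear dimension equals the size of a minimum \emph{cover} of the bilinear monomials occurring in these constraints, and it suffices to exhibit one cover of size $\#M \cdot \deg_\T(\eta) + \#L'$. First I would catalogue the surviving bilinear monomials. In \eqref{eq-ii6} the coefficient $\xi$ has already been fixed to $\{0,1\}$ by \autoref{thm-omit-quantifiers}, so $\xi\,s_\ell$ and $\xi\,t_\ell$ are linear and \eqref{eq-ii6} contributes nothing; the multipliers $\lambda,\mu$ only meet the constant data $B_n,B_n',b_n,b_n'$. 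By the same theorem $\chi_2$ in \eqref{eq-ic6} and the invariant multipliers $\xi_\ell$ in \eqref{eq-ti6} are fixed, and $\lambda,\mu$ again meet only the constant loop data $A_m,A_m',c_m,c_m'$. The only products of two genuinely free variables are therefore (i) in \eqref{eq-ic6}, the premise multiplier $\chi_1$ against the invariant coefficients $(s_\ell,t_\ell)$, and (ii) in \eqref{eq-ti6}, a template Motzkin coefficient $\zeta_j$ against the template symbols $F\cup D$ occurring in $d_{i,j}$ and $e_{i,j}$.

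For the template implication \eqref{eq-ti6} I would cover the monomials of type (ii) by fixing the offending $\zeta_j$. By \autoref{thm-omit-quantifiers}(III) every $\zeta_j$ attached to a red or blue atom is already fixed to $\{0,1\}$, so it is no longer a free variable and produces no monomial; the only free coefficients are those of the \emph{uncolored} atoms. By \autoref{def-degree} a suitable coloring $\eta$ leaves exactly $\deg_\T(\eta)$ uncolored atom occurrences in $\T$, and \eqref{eq-ti6} carries one independent copy of these coefficients for each of the $\#M$ loop disjuncts $m$. Adding these $\#M\cdot\deg_\T(\eta)$ coefficients to $V'$ kills every monomial of type (ii), after which each symbol of $F\cup D$ appears only linearly. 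This is precisely where the coloring machinery of \autoref{def-suitable-coloring} and \autoref{lem-deg-components} earns its keep: it reduces the $(n+1)\#F+\#D$ template variables of \autoref{thm-nl-0} to a count governed by the number of uncolored atoms alone.

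It remains to cover the monomials of type (i), and this I expect to be the main obstacle. For a fixed invariant index $\ell\in L'$ the bilinear products in its consecution are exactly $\chi_1\,(s_\ell,t_\ell)$, pairing the premise multiplier with the whole coefficient vector of $\psi_\ell$; fixing this one Motzkin coefficient makes the consecution linear in the unknowns $s_\ell,t_\ell$, which then occur linearly throughout \eqref{eq-ii6}, \eqref{eq-ic6} and \eqref{eq-ti6}. Doing this once for each $\ell\in L'$ adds $\#L'$ variables and completes a cover of total size $\#M\cdot\deg_\T(\eta)+\#L'$, giving the claimed bound and again sharpening the $(n+1)\#L$ of \autoref{thm-nl-0} to one variable per invariant. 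The delicate point is to argue that a single coefficient per invariant really suffices to absorb the consecution bilinearity: this rests on the scaling freedom of \hyperref[thm-motzkin]{Motzkin's duality} already exploited in the proof of \autoref{thm-omit-quantifiers}, namely that one may normalize the premise multiplier so that $\chi_1$, rather than the shared invariant coefficients, is the factor carrying the bilinearity. Once this normalization is made explicit and the catalogue of monomials above is justified, the counting itself is immediate.
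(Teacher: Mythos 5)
Your proposal takes essentially the same route as the paper's proof: after \autoref{thm-omit-quantifiers} fixes $\xi$, $\chi_2$, the colored $\zeta_j$ and the $\xi_\ell$, the only remaining variables occurring in non-linear products are the invariant-consecution coefficients $\chi_1$ (counted as $\#L'$) and the Motzkin coefficients of uncolored template atoms ($\deg_\T(\eta)$ of them per loop disjunct, hence $\#M \cdot \deg_\T(\eta)$ in total), and fixing exactly these variables gives the stated bound. The only real difference is that your closing ``delicate point'' about normalizing via Motzkin's scaling freedom is unnecessary for this theorem: \autoref{def-nl-dim} only requires that the formula become linear after substituting values for $V'$, with no equi-satisfiability demand, so the fact that every type-(i) monomial contains a $\chi_1$ factor already finishes the cover --- the scaling argument belongs to the proof of \autoref{thm-omit-quantifiers}, which both you and the paper invoke as a prior step.
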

\begin{proof}
We first count the number of \hyperref[thm-motzkin]{Motzkin's Theorem}
applications in transformation step 5:
\begin{itemize}
\item $\#L \cdot \#M \cdot \#I \cdot \#N$ from \eqref{eq-ii4},
\item $\#L \cdot (\#M)^2 \cdot \#I$ from \eqref{eq-ic4}, and
\item $\#I \cdot \#M$ from \eqref{eq-ti4}.
\end{itemize}
Let $n_\stemt$ be the total number of inequalities in the stem
transition and $n_\loopt$ be the total number of inequalities in the
loop transition.  Thus the constraint \eqref{eq-ii6}
$\land$ \eqref{eq-ic6} $\land$ \eqref{eq-ti6} has a total number of
\begin{align*}
\#L \cdot \#M \cdot \#I \cdot n_\stemt + \#L \cdot \#M \cdot
\#I \cdot n_\loopt + \#I \cdot n_\loopt
\end{align*}
Motzkin coefficients $\lambda$ and $\mu$, as well as $\#L'$ Motzkin
coefficients $\chi_1$ for invariants $\#M \cdot \deg_\T(\eta)$ Motzkin
coefficients $\zeta$ for the ranking function template.
\end{proof}

In order to eliminate more variables in our constraints, we introduce
non-decreasing invariants~\cite{HHLP13}, as a restricted class of
inductive invariants.

\begin{definition}[Non-decreasing invariant]
\label{def-non-decreasing}
An affine-linear inductive invariant $\psi(x) \equiv \tr{s} x + t \geq
0$ is \emph{non-decreasing} iff
\begin{align*}
\models \forall x, x'.\;
  \loopt(x, x') \rightarrow \tr{s} x' - \tr{s} x \geq 0.
\end{align*}
\end{definition}

Restricting the inductive invariants to non-decreasing invariants is
equivalent to fixing their Motzkin coefficients $\chi_1$ in the
invariant consecution \eqref{eq-ic6} to the value $1$.  This enables
the following corollary to \autoref{thm-nl-6}.

\begin{corollary}\label{cor-nl-6}
Let $\T$, $M$ and $\eta$ as in \autoref{thm-nl-6}.  When using only
non-decreasing invariants, the constraint \eqref{eq-ii6}
$\land$ \eqref{eq-ic6} $\land$ \eqref{eq-ti6} has non-linear dimension
at most $\#M \cdot \deg_\T(\eta)$.
\end{corollary}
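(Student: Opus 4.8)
The plan is to derive the corollary directly from \autoref{thm-nl-6}, showing that the non-decreasing restriction erases precisely the summand $\#L'$. First I would recall from the proof of \autoref{thm-nl-6} how the bound $\#M \cdot \deg_\T(\eta) + \#L'$ decomposes. The summand $\#M \cdot \deg_\T(\eta)$ is the contribution of the coefficients $\zeta_j$ of the uncolored template atoms, which multiply the template vectors $d_{i,j}$ in \eqref{eq-ti6}. The summand $\#L'$ is the contribution of the consecution constraint \eqref{eq-ic6}, where each invariant coefficient vector $s_\ell$ meets the free Motzkin coefficient $\chi_1$ through the product $-\chi_1 \tr{\abovebelow{s_\ell}{0}}$. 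The remaining Motzkin coefficients $\lambda$ and $\mu$ multiply only the fixed program matrices and vectors, so they enter linearly and contribute nothing.

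Next I would invoke the observation recorded immediately before the corollary: restricting to non-decreasing invariants in the sense of \autoref{def-non-decreasing} is equivalent to fixing every $\chi_1$ to the value $1$. The crucial step is then to verify that, with $\chi_1 = 1$ no longer free, the invariant coefficients $s_\ell$ and $t_\ell$ occur \emph{only} linearly across \eqref{eq-ii6}, \eqref{eq-ic6}, and \eqref{eq-ti6}. In \eqref{eq-ii6} they are scaled solely by $\xi \in \{0, 1\}$; in \eqref{eq-ic6} by $\chi_2 \in \{0, 1\}$ and the fixed $\chi_1 = 1$; and in \eqref{eq-ti6} by $\xi_\ell \in \{0, 1\}$ (or $\{1\}$ for non-strict invariants). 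Since \autoref{thm-omit-quantifiers} already pins each of $\xi$, $\chi_2$, and $\xi_\ell$ to a finite value set, every multiplier of an invariant coefficient is now constant, so all these products are linear. I expect this bookkeeping---enumerating every appearance of $s_\ell$ and $t_\ell$ and checking that its multiplier has been eliminated---to be the only genuine content, and the main (if mild) obstacle is simply not overlooking an occurrence.

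Finally I would conclude that the sole surviving source of non-linearity is the family of coefficients $\zeta_j$ attached to the uncolored template atoms, of which there are $\#M \cdot \deg_\T(\eta)$ after replication over $m \in M$. Hence, restricted to non-decreasing invariants, the constraint \eqref{eq-ii6} $\land$ \eqref{eq-ic6} $\land$ \eqref{eq-ti6} has non-linear dimension at most $\#M \cdot \deg_\T(\eta)$, as claimed.
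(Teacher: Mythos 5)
Your proposal is correct and takes essentially the same route as the paper: the paper obtains the corollary directly from the remark that restricting to non-decreasing invariants is equivalent to fixing each Motzkin coefficient $\chi_1$ in \eqref{eq-ic6} to the value $1$, which deletes exactly the $\#L'$ summand from the bound of \autoref{thm-nl-6}, leaving $\#M \cdot \deg_\T(\eta)$. Your explicit bookkeeping---checking that every occurrence of $s_\ell$ and $t_\ell$ in \eqref{eq-ii6}, \eqref{eq-ic6} and \eqref{eq-ti6} is then multiplied only by coefficients already pinned to finite value sets by \autoref{thm-omit-quantifiers}, so that only the $\zeta_j$ of uncolored atoms remain non-linear---is precisely the content the paper leaves implicit in that remark.
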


\begin{example}\label{ex-non-decreasing1}
The inductive invariant $y \geq 1$ from \autoref{ex-running-3} is in
fact non-decreasing:
\begin{align*}
\models \forall q, y, q', y'.\;
  q \geq 0 \;\land\; q' = q - y \;\land\; y' = y + 1
  \;\rightarrow\; y' - y \geq 0
\end{align*}
\end{example}

\begin{example}\label{ex-non-decreasing2}
Consider the program $\prog_\mathrm{diff42}$~\cite{HHLP13}.
\begin{center}
\begin{minipage}{40mm}
\begin{lstlisting}
$q$ := $y + 42$;
while ($q \geq 0$):
    $y$ := $2 \cdot y - q$;
    $q$ := $(y + q) / 2$;
\end{lstlisting}
\end{minipage}
\end{center}
$\prog_\mathrm{diff42}$ instantiates the affine template with the
ranking function $f(q, y) = q + 1$ and the non-decreasing supporting
invariant $q - y \geq 42$:
\begin{align*}
q' - y' = \frac{y + q}{2} - (2y - q)
  = \frac{3}{2}(q - y)
  \geq \frac{3}{2} \cdot 42
  \geq 42
\end{align*}
\end{example}

Non-decreasing inductive invariants are weaker in expressiveness,
however they still cover a wide range of practical cases.  An inductive
invariant is non-decreasing if
\begin{itemize}
\item the invariant involves only variables that are not
  modified by the loop transition, or
\item the invariant is an equality.
\end{itemize}

\begin{example}\label{ex-non-decreasing3}
The variable $\chi_1$ in the invariant consecution \eqref{eq-ic5}
cannot be generally restricted to any finite set of values analogously
to \autoref{def-non-decreasing} without losing solutions.  Let $\alpha
> 1$ be some fixed constant and consider the following linear lasso
program $\prog_\alpha$:
\begin{center}
\begin{minipage}{50mm}
\begin{lstlisting}
assume($y := \alpha$);
while ($q \geq 0$):
    $q$ := $q - y$;
    $y$ := $\frac{1}{\alpha}(y + \alpha - 1)$;
\end{lstlisting}
\end{minipage}
\end{center}
The program $\prog_\alpha$ terminates, because $y \geq 1$ is an
invariant of $\prog_\alpha$.  Furthermore, the invariant is inductive:
the stem $y = \alpha$ implies $y \geq 1$, and
\begin{align*}
y' = \frac{y + \alpha - 1}{\alpha}
   = \frac{y}{\alpha} + \frac{\alpha - 1}{\alpha}
   \geq \frac{1}{\alpha} + \frac{\alpha - 1}{\alpha}
   = 1.
\end{align*}
However, $y \geq 1$ is not a non-decreasing invariant:
\begin{align}
y' - y = \frac{y + \alpha - 1}{\alpha} - y
       = \frac{\alpha - 1}{\alpha} \cdot (1 - y).
\label{eq-ex-non-decreasing3}
\end{align}
We are stuck, because \eqref{eq-ex-non-decreasing3} cannot be inferred
to be non-negative: there is no lower bound on $-y$.  We have to
choose $\chi_1 = \frac{1}{\alpha}$ in \eqref{eq-ic6} because
\begin{align*}
y' - \frac{1}{\alpha} \cdot y
  = \frac{y + \alpha - 1}{\alpha} - \frac{1}{\alpha} \cdot y
  = \frac{\alpha - 1}{\alpha}
  \geq 0.
\end{align*}
In fact, every value $\chi_1 > \frac{1}{\alpha}$ will not work for the
same reason as in \eqref{eq-ex-non-decreasing3}.
\end{example}

\begin{corollary}\label{cor-linear}
If the linear ranking function template $\T$ has degree $\leq 0$ and
we consider only non-decreasing invariants, then the
constraint \eqref{eq-ii6} $\land$ \eqref{eq-ic6}
$\land$ \eqref{eq-ti6} is linear.
\end{corollary}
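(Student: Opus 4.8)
The plan is to obtain this corollary as an immediate specialization of \autoref{cor-nl-6}, relying only on the definitions of \emph{degree} and \emph{non-linear dimension}.

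First I would unwind the hypothesis $\deg_\T \leq 0$. By \autoref{def-degree}, the degree of a coloring counts its uncolored atoms, hence is a non-negative integer, and the degree of $\T$ is the minimum of these values over all suitable colorings. Consequently $\deg_\T \leq 0$ forces $\deg_\T = 0$, and we may fix a suitable coloring $\eta$ realizing this minimum, i.e.\ with $\deg_\T(\eta) = 0$.

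Next, since the hypothesis also restricts us to non-decreasing invariants, \autoref{cor-nl-6} applies verbatim and bounds the non-linear dimension of \eqref{eq-ii6} $\land$ \eqref{eq-ic6} $\land$ \eqref{eq-ti6} by $\#M \cdot \deg_\T(\eta)$. Substituting $\deg_\T(\eta) = 0$ collapses this bound to $0$. Finally I would translate ``non-linear dimension $0$'' back into linearity: by \autoref{def-nl-dim} a formula has non-linear dimension $0$ exactly when the empty set of variables already suffices to remove every non-linear operation, that is, when it contains no products of two variables; such a formula is a formula of linear arithmetic. This yields the claim.

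The proof carries no substantial obstacle, as every step is a direct appeal to an already-established result or definition. The only point deserving a sentence of care is the observation that the degree, being a cardinality, cannot be strictly negative, so $\deg_\T \leq 0$ is equivalent to $\deg_\T = 0$; once this is noted, the multiplicative bound $\#M \cdot \deg_\T(\eta)$ vanishes and linearity follows.
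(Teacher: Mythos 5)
Your proposal is correct and follows essentially the same route as the paper's own proof: both invoke \autoref{cor-nl-6} to bound the non-linear dimension by $\#M \cdot \deg_\T(\eta)$, set $\deg_\T(\eta) = 0$, and conclude that the purely existential constraint with no non-linear operations is linear. Your added observations---that non-negativity of the degree makes $\deg_\T \leq 0$ equivalent to $\deg_\T = 0$, and the explicit unwinding of \autoref{def-nl-dim} in the last step---are fine clarifications of points the paper leaves implicit.
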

\begin{proof}
According to \autoref{cor-nl-6}, the constraints have non-linear
dimension at most $\#M \cdot \deg_\T(\eta)$ for a suitable coloring
$\eta$ of $\T$.  Consequently, for $\deg_\T(\eta) = 0$, we have no
non-linear operations.  Since there is no universal quantification,
the generated constraint is linear by \autoref{def-degree}.
\end{proof}

The constraints generated by \ref{eq-rft-affine} are due Podelski and
Rybalchenko~\cite{PR04}.  In \cite{HHLP13} this template is extended
to incorporate non-decreasing inductive invariants, and the generated
template coincides with the one generated here, if we restrict
ourselves to non-decreasing invariants and conjunctive linear lasso
programs.

\section{Application to our Templates}
\label{sec-nl-rfts}

In this section we assess the degree of the ranking function templates
introduced in \autoref{ch-templates}.  See \autoref{fig-rft-colored}
on page \pageref{fig-rft-colored} for a visualization.

\begin{figure}[ht]
\begin{center}
{\setlength{\tabcolsep}{1.2em}
\begin{tikzpicture}[scale=1.5]
\def\radius{0.07}
\foreach \y in {-0.2, 0.5} {
  \draw (0, \y) -- (2, \y);
}
\draw (0.8, 1.5) -- (2, 1.5);
\draw (-0.2, 1) -- (2, 1);

\def\x{0}
\draw[dotted] (\x, -0.2) to node[left] {$[g_1]$} (\x, 0.5);
\draw[dotted] (\x, 0.5) .. controls (\x, 1) and ({\x-0.2}, 0.5)
  .. ({\x-0.2}, 1);
\draw[dotted] (\x, 0.5) .. controls (\x, 1) and ({\x+0.2}, 0.5)
  .. ({\x+0.2}, 1);
\draw[fill=mywhite] (\x, -0.2) circle (\radius);
\draw[fill=mywhite] (\x, 0.5) circle (\radius);
\draw[fill=myblue] ({\x-0.2}, 1.0) circle (\radius);
\draw[fill=mywhite] ({\x+0.2}, 1.0) circle (\radius);

\def\x{1}
\draw[dotted] (\x, -0.2) to node[left] {$[g_2]$} (\x, 0.5) -- (\x, 1);
\draw[dotted] (\x, 1) .. controls (\x, 1.5) and ({\x-0.2}, 1)
  .. ({\x-0.2}, 1.5);
\draw[dotted] (\x, 1) .. controls (\x, 1.5) and ({\x+0.2}, 1)
  .. ({\x+0.2}, 1.5);
\draw[fill=mywhite] (\x, -0.2) circle (\radius);
\draw[fill=mywhite] (\x, 0.5) circle (\radius);
\draw[fill=myblue] ({\x-0.2}, 1.5) circle (\radius);
\draw[fill=mywhite] ({\x+0.2}, 1.5) circle (\radius);

\def\x{2}
\draw[dotted] (\x, -0.7) -- (\x, -0.2) to node[left] {$[\delta]$}
  (\x, 0.5) -- (\x, 2.5);
\draw[fill=myred] \foreach \y in {-0.7, -0.2, 0.5, 1, 1.5, 2, 2.5} {
  (\x, \y) circle (\radius)
};
\end{tikzpicture} \\
\hyperref[eq-rft-pw]{$\T_{2\mathrm{-piece}}$}
 \\[3em]
\begin{tabular}{ccc}
\begin{tikzpicture}[scale=1.5]
\def\radius{0.07}
\draw[dotted] (0, 0.0) to node[left] {$[f]$} (0, 0.75) -- (0, 1.25);
\draw[fill=myred] (0, 0.0) circle (\radius);
\draw[fill=myred] (0, 0.75) circle (\radius);
\draw[fill=myred] (0, 1.25) circle (\radius);
\end{tikzpicture}
&
\begin{tikzpicture}[scale=1.5]
\def\radius{0.07};
\def\spacing{0.7};
\draw (\spacing, 0) -- (\spacing*4, 0);
\foreach \i in {1,...,4} {
  \def\y{0.5*\i}
  \draw (\spacing, \y) -- ({\spacing*(5 - \i)}, \y);
}

\foreach \i in {1,...,4} {
  \def\x{\spacing*\i}
  \draw[dotted] (\x, 0.0) -- (\x, 2.0)
    to node[left] {$[f_\i]$} (\x, 2.8);
  \draw[fill=myred] (\x, 2.8) circle (\radius);

  \foreach \j in {\i,...,5} {
    \def\y{2.5-0.5*\j}
    \ifnum\i=1
      \draw[fill=myred] (\x, \y) circle (\radius);
    \else
      \ifnum\i=\j
        \draw[fill=myblue] (\x, \y) circle (\radius);
      \else
        \draw[fill=mywhite] (\x, \y) circle (\radius);
      \fi
    \fi
  };
};
\end{tikzpicture}
&
\begin{tikzpicture}[scale=1.5]
\def\radius{0.07};
\def\spacing{0.7};
\foreach \i in {1,...,4} {
  \def\y{0.5*\i}
  \draw (\spacing, \y) -- ({\spacing*(5 - \i)}, \y);
}

\foreach \i in {1,...,4} {
  \def\x{\spacing*\i}
  \draw[dotted] (\x, 0.0) -- (\x, 2.0)
    to node[left] {$[f_\i]$} (\x, 2.8);
  \draw[fill=myred] (\x, 2.8) circle (\radius);

  \foreach \j in {\i,...,4} {
    \def\y{2.5-0.5*\j}
    \ifnum\i=1
      \draw[fill=myred] (\x, \y) circle (\radius);
    \else
      \ifnum\i=\j
        \draw[fill=myblue] (\x, \y) circle (\radius);
      \else
        \draw[fill=mywhite] (\x, \y) circle (\radius);
      \fi
    \fi
  };
  \draw[fill=myred] (\x, 0) circle (\radius);
};
\end{tikzpicture} \\[1em]
\ref{eq-rft-affine}
& \hyperref[eq-rft-multiphase]{$\T_{4\mathrm{-phase}}$}
& \hyperref[eq-rft-lex]{$\T_{4\mathrm{-lex}}$}
\end{tabular}}
\end{center}
\caption{
Visualization of suitable colorings for the $2$-piece, affine,
$4$-phase and $4$-lexicographic ranking function template.  Every
circle represents an occurrence of an atom; its color is determined by
$\eta$.  Two circles are connected with a solid line if they occur
together in one conjunct.  Every connected component of the coloring
graph is represented by a dotted line that connects all atoms where
these variable and function symbols occur.
We can now easily check the conditions
of \autoref{def-suitable-coloring}: every solid line connects exactly
one red circle (a), every dotted line connects at most one blue circle
(b).  We can extract the coloring graph by drawing a directed edge
between two connected components if there is a solid line connecting a
red circle with a blue one (c).
The degree of the templates is the number of white circles (uncolored
atoms).
}
\label{fig-rft-colored}
\end{figure}
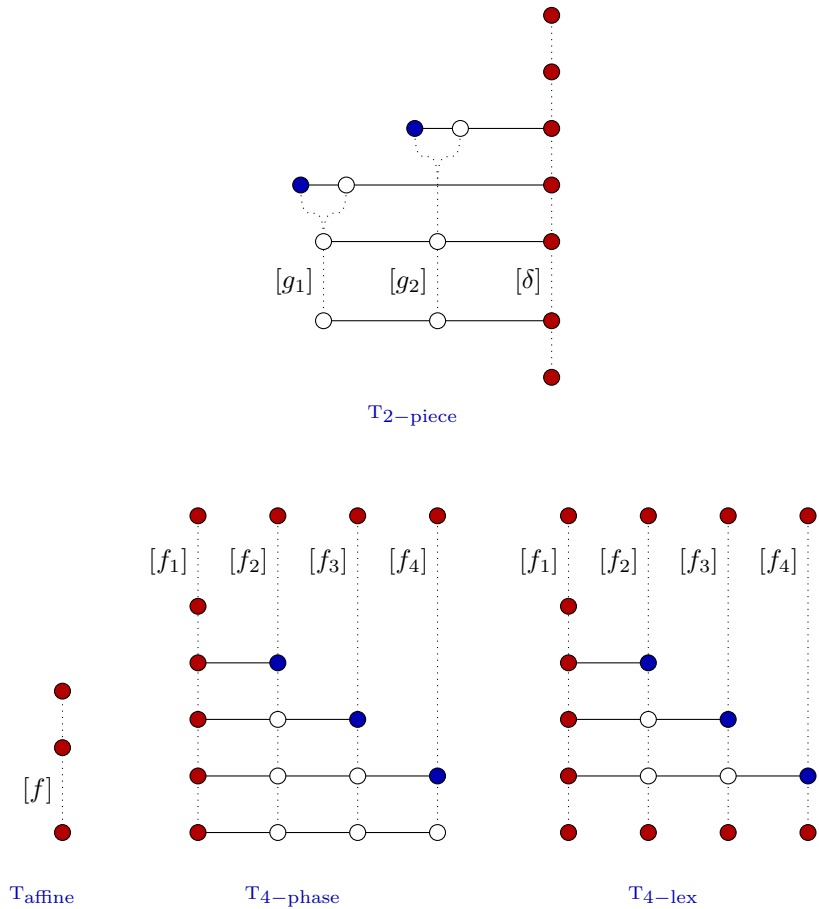

\begin{lemma}\label{lem-rft-multiphase-deg}
The ranking function template \ref{eq-rft-multiphase} has degree
$\frac{1}{2}k(k-1)$.
\end{lemma}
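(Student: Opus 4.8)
The plan is to exhibit one suitable coloring $\eta$ of \ref{eq-rft-multiphase} that colors the maximal possible number of atoms blue, and then invoke \autoref{lem-eta-minimal} to conclude that $\deg_\T(\eta)$ is already the degree of the template. First I would count the atoms in the CNF form of \ref{eq-rft-multiphase}. Its conjuncts are: the $k$ singleton conjuncts $\delta_i > 0$; the single disjunction $\bigvee_{i=1}^k f_i(x) > 0$ with $k$ atoms; and the $k$ conjuncts of the third line, where the $i$-th conjunct $f_i(x') < f_i(x) - \delta_i \lor \bigvee_{j=1}^{i-1} f_j(x) > 0$ contributes $i$ atoms. This yields $2k+1$ conjuncts and
\[
k + k + \sum_{i=1}^k i = 2k + \frac{k(k+1)}{2} = \frac{k^2 + 5k}{2}
\]
atoms in total.

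Next I would generalize the coloring of \autoref{ex-coloring-rft-3phase}. I color red every atom $\delta_i > 0$, the atom $f_1(x') < f_1(x) - \delta_1$, and every occurrence of $f_1(x) > 0$ (one in the disjunction and one in each third-line conjunct with $i \geq 2$); I color blue the leading atom $f_i(x') < f_i(x) - \delta_i$ of each third-line conjunct with $i = 2, \ldots, k$; all remaining atoms stay uncolored. By construction this colors exactly one atom red in every conjunct and produces $k-1$ blue atoms.

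The core of the argument is checking suitability (\autoref{def-suitable-coloring}). Condition (a) holds by construction. Recall from \autoref{ex-rft-dg} that $G_\T$ has exactly the $k$ connected components $[f_i] = \{f_i, \delta_i\}$. For (b), each blue atom $f_i(x') < f_i(x) - \delta_i$ uses symbols only from $[f_i]$, and distinct blue atoms use distinct components, so no two symbols drawn from different blue atoms are joined by a path in $G_\T$. For (c), an edge of the coloring graph can arise only from a red atom and a blue atom sharing a conjunct; the only such pairs occur in the third-line conjuncts with $i \geq 2$, where the red $f_1(x) > 0$ (in $[f_1]$) meets the blue $f_i(x') < f_i(x) - \delta_i$ (in $[f_i]$). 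This gives exactly the edges $[f_1] \to [f_i]$ for $i = 2, \ldots, k$, a star rooted at $[f_1]$, which is acyclic.

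Since $G_\T$ has $c = k$ components and $\eta$ colors $c - 1 = k-1$ atoms blue, \autoref{lem-eta-minimal} ensures $\deg_\T(\eta)$ is minimal, hence equal to the degree of the template. As condition (a) forces exactly $2k+1$ red atoms in any suitable coloring, the number of uncolored atoms of $\eta$ is
\[
\frac{k^2+5k}{2} - (2k+1) - (k-1) = \frac{k^2 - k}{2} = \frac{k(k-1)}{2}.
\]
I expect the main obstacle to be the careful verification of condition (c): one must confirm that \emph{every} red atom and blue atom sharing a conjunct has been accounted for, so that the coloring graph genuinely collapses to the star and admits no cycle.
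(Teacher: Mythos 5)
Your proof is correct and follows essentially the same route as the paper's own proof: you exhibit the identical coloring (red on every $\delta_i > 0$, on $f_1(x') < f_1(x) - \delta_1$, and on every occurrence of $f_1(x) > 0$; blue on the leading atoms $f_i(x') < f_i(x) - \delta_i$ for $i = 2, \ldots, k$), verify the three conditions of \autoref{def-suitable-coloring} with the same star-shaped coloring graph rooted at $[f_1]$, and invoke \autoref{lem-eta-minimal} with $k-1$ blue atoms among $k$ connected components. Your atom count $\frac{1}{2}k(k+5)$, the count of $2k+1$ red atoms, and the final computation of the degree all match the paper's calculation exactly.
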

\begin{proof}
Consider the following coloring $\eta$.
\begin{align*}
&\bigwedge_{i=1}^k {\color{myred} \delta_i > 0} \\
\land\; &\Big( {\color{myred} f_1(x) > 0}
  \;\lor\; \bigvee_{i=1}^k f_i(x) > 0 \Big) \\
\land\; &{\color{myred} f_1(x') < f_1(x) - \delta_1} \\
\land\; &\bigwedge_{i=2}^k \Big(
  {\color{myblue}f_i(x') < f_i(x) - \delta_i} \;\lor\;
  {\color{myred} f_1(x) > 0} \;\lor\;
  \bigvee_{j=2}^{i-1} f_j(x) > 0 \Big)
\end{align*}
We check the requirements of \autoref{def-suitable-coloring}:
\begin{enumerate}[a)]
\item Every conjunct contains exactly one red atom.
\item The sets $\{ f_i, \delta_i \}$ are the connected components
  of \ref{eq-rft-multiphase} according to \autoref{ex-rft-dg}.
\item The coloring graph $G_\eta$ has the edges $([f_1], [f_i])$ for
  $i = 2, \ldots, k$.  Hence it is acyclic.
\end{enumerate}
We conclude that $\eta$ is a suitable coloring
of \ref{eq-rft-multiphase}.
\begin{align*}
\deg_{\T_{k\mathrm{-phase}}}(\eta)
  = \frac{k(k+5)}{2} - (k - 1) - (2k + 1)
  = \frac{k(k-1)}{2}.
\end{align*}
The number of blue colored atoms is $k - 1$ and the number of
connected components is $k$, therefore, by \autoref{lem-eta-minimal},
this degree is minimal.
\end{proof}

\begin{example}\label{ex-3phase-3}
Consider the 2-phase template defined in \autoref{def-rft-multiphase}.
It has degree $\frac{1}{2} \cdot 2(2 - 1) = 1$.  When considering only
non-decreasing supporting invariants, \autoref{cor-nl-6} states that
the generated constraints have non-linear dimension at most $\#M$,
where $\#M$ is number of disjunctions in the normal form of the linear
lasso program's loop transition.  If we build constraints for the
lasso program $\prog_{2\mathrm{-phase}}$ from \autoref{ex-2phase-1},
we have only one non-linear variable according to \autoref{thm-nl-6}
when considering non-decreasing supporting invariants.
\end{example}

\begin{lemma}\label{lem-rft-pw-deg}
The ranking function template \ref{eq-rft-pw} has degree $2k^2 - 1$.
\end{lemma}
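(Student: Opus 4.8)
The plan is to follow the blueprint of \autoref{lem-rft-multiphase-deg}: I will exhibit one explicit suitable coloring $\eta$ of \ref{eq-rft-pw} whose degree equals $2k^2 - 1$, and then invoke \autoref{lem-eta-minimal} to conclude that this value is already the minimal degree. The first step is bookkeeping. In conjunctive normal form, \ref{eq-rft-pw} has four groups of conjuncts: the single atom $\delta > 0$; the $k^2$ conjuncts of line 2, each a disjunction of three atoms; the $k$ conjuncts $f_i(x) > 0$ of line 3; and the single disjunction of line 4 over the $k$ atoms $g_i(x) \geq 0$. This gives $k^2 + k + 2$ conjuncts and $1 + 3k^2 + k + k = 3k^2 + 2k + 1$ atom occurrences in total. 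By \autoref{def-suitable-coloring} (a) every suitable coloring has exactly one red atom per conjunct, so the red count is fixed at $k^2 + k + 2$.

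Next I construct $\eta$. The delicate choice is where to place the blue atoms; by \autoref{ex-rft-dg} the dependency graph of \ref{eq-rft-pw} has the $c = k + 1$ connected components $\{ f_1, \ldots, f_k, \delta \}$ and $\{ g_i \}$ for $1 \le i \le k$, so \autoref{lem-deg-components} caps the number of blue atoms at $c - 1 = k$; I aim to realize exactly $k$ of them. I color blue the diagonal atoms $g_i(x) < 0$ sitting in the conjunct $(i,i)$ of line 2, one for each $i$, which lie in the $k$ distinct singleton components $\{ g_i \}$. For the red atoms I take $\delta > 0$; each $f_i(x) > 0$; a single atom of line 4 (say $g_1(x) \geq 0$); and, crucially, in every line-2 conjunct $(i,j)$ the $f$-atom $f_j(x') < f_i(x) - \delta$. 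All remaining atom occurrences stay uncolored.

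The main obstacle is verifying acyclicity of the coloring graph (\autoref{def-suitable-coloring} (c)), and this is exactly what dictates the red choice just made. Because the blue atoms live only in the diagonal conjuncts, and in each such conjunct the red atom is the $f$-atom (belonging to the big component $[\delta]$), every edge of the coloring graph $G_\eta$ runs from $[\delta]$ to some $\{ g_i \}$; had I instead colored $g_i(x') < 0$ red in a diagonal conjunct, the blue and red atoms would share the component $\{ g_i \}$ and produce a self-loop. The resulting $G_\eta$ is thus a star directed out of $[\delta]$ and is acyclic. Condition (a) holds by construction, and condition (b) holds because the $k$ blue atoms lie in pairwise-distinct singleton components.

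Finally, I compute the degree as total minus red minus blue, namely $\deg_{\T_{k\mathrm{-piece}}}(\eta) = (3k^2 + 2k + 1) - (k^2 + k + 2) - k = 2k^2 - 1$. Since $\eta$ colors $c - 1 = k$ atoms blue in a dependency graph with $c = k + 1$ components, \autoref{lem-eta-minimal} guarantees this degree is minimal, so \ref{eq-rft-pw} has degree $2k^2 - 1$, as claimed.
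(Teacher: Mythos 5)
Your proof is correct and takes essentially the same route as the paper: it exhibits the identical suitable coloring (blue diagonal atoms $g_i(x) < 0$, red $f$-atoms in every line-2 conjunct, red $\delta > 0$, $f_i(x) > 0$, and one red atom in line 4), verifies conditions (a)--(c) of \autoref{def-suitable-coloring}, computes the degree as $(3k^2+2k+1)-(k^2+k+2)-k = 2k^2-1$, and invokes \autoref{lem-eta-minimal} for minimality. If anything, your description of the coloring graph as a star directed out of $[\delta]$ is cleaner than the paper's, which lists spurious extra edges $([g_i],[g_1])$ that do not arise from this coloring (harmlessly, since acyclicity holds either way).
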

\begin{proof}
Consider the following coloring $\eta$.
\begin{align*}
&{\color{myred}\delta > 0} \\
\land\; &\bigwedge_{i=1}^k \Big( {\color{myblue} g_i(x) < 0} \;\lor\;
  g_i(x') < 0 \;\lor\; {\color{myred} f_i(x') < f_i(x) - \delta} \Big) \\
\land\; &\bigwedge_{i=1}^k \bigwedge_{j \neq i}
  \Big( g_i(x) < 0 \;\lor\; g_j(x') < 0
  \;\lor\; {\color{myred} f_j(x') < f_i(x) - \delta} \Big) \\
\land\; &\bigwedge_{i=1}^k {\color{myred} f_i(x) > 0} \\
\land\; &\Big( {\color{myred} g_1(x) \geq 0}
  \;\lor\; \bigvee_{i=2}^k g_i(x) \geq 0 \Big)
\end{align*}
We check the requirements of \autoref{def-suitable-coloring}:
\begin{enumerate}[a)]
\item Every conjunct contains exactly one red atom.
\item The sets $\{ g_i \}$ and $\{ \delta, f_1, \ldots, f_k \}$ are the
  connected components of \ref{eq-rft-pw} according
  to \autoref{ex-rft-dg}.
\item The coloring graph $G_\eta$ has the edges $([\delta], [g_i])$
  for all $i$ and $([g_i], [g_1])$ for $i > 1$. Hence it is acyclic.
\end{enumerate}
We conclude that $\eta$ is a suitable coloring of \ref{eq-rft-pw}.
\begin{align*}
\deg_{\T_{k\mathrm{-piece}}}(\eta)
  = (3k^2 + 2k + 1) - k - (k^2 + k + 2)
  = 2k^2 - 1.
\end{align*}
The number of blue colored atoms is $k$ and the number of
connected components is $k + 1$, therefore, by \autoref{lem-eta-minimal},
this degree is minimal.
\end{proof}

\begin{lemma}\label{lem-rft-lex-deg}
The ranking function template \ref{eq-rft-lex} has degree
$\frac{1}{2}(k - 1)(k - 2)$.
\end{lemma}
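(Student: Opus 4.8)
The plan is to imitate the strategy already used for the multiphase and piecewise templates in \autoref{lem-rft-multiphase-deg} and \autoref{lem-rft-pw-deg}: exhibit one explicit suitable coloring $\eta$ whose number of blue atoms equals $c-1$, where $c$ is the number of connected components of the dependency graph, and then invoke \autoref{lem-eta-minimal} to conclude that $\deg_{\T_{k\mathrm{-lex}}}(\eta)$ is already the minimal degree. By \autoref{ex-rft-dg} the components of $G_{\T_{k\mathrm{-lex}}}$ are the pairs $\{f_i,\delta_i\}$, so $c=k$, and I must produce a suitable coloring with exactly $k-1$ blue atoms; the degree is then the count of the remaining uncolored atoms.

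First I would fix the CNF bookkeeping of \ref{eq-rft-lex}: the conjuncts are the $k$ singletons $\delta_i>0$, the $k$ singletons $f_i(x)>0$, the $k-1$ line-3 disjunctions (the $i$-th consisting of the $i$ atoms $f_i(x')\le f_i(x)$ and $f_j(x')<f_j(x)-\delta_j$ for $j<i$), and the single line-4 disjunction $\bigvee_{i=1}^k f_i(x')<f_i(x)-\delta_i$. This gives $3k$ conjuncts and $\tfrac{k(k+5)}{2}$ atom occurrences in total. Condition (a) of \autoref{def-suitable-coloring} forces every singleton conjunct (all of lines 1 and 2) to be red, pinning down $2k$ red atoms immediately.

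The coloring I would propose (for $k\ge 2$) colors lines 1 and 2 red, colors the diagonal atoms $f_i(x')\le f_i(x)$ blue for $2\le i\le k-1$ together with the last line-4 disjunct $f_k(x')<f_k(x)-\delta_k$ blue, and takes $f_1(x')<f_1(x)-\delta_1$ as the red atom both in line 4 and in every line-3 conjunct with $i\ge 2$ (the lone line-3 conjunct with $i=1$ has its single atom red), leaving everything else uncolored. This produces $k-1$ blue atoms residing in the distinct components $[f_2],\dots,[f_{k-1}],[f_k]$, so condition (b) holds. For condition (c) I would observe that in each conjunct carrying a blue atom the unique red atom lies in $[f_1]$, so every edge of the coloring graph has the form $[f_1]\to[f_i]$; the coloring graph is a star out of $[f_1]$ and hence acyclic. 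The case $k=1$ is the affine base case where everything is colored red.

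Finally I would count the uncolored atoms: line 3 contributes $i-2$ white atoms in conjunct $i$ for $2\le i\le k-1$, and line 4 contributes $k-2$, giving $\sum_{i=2}^{k-1}(i-2)+(k-2)=\tfrac{(k-1)(k-2)}{2}$; equivalently, subtract the $3k$ red and the $k-1$ blue atoms from the total $\tfrac{k(k+5)}{2}$. Since $c=k$ and exactly $c-1=k-1$ atoms are blue, \autoref{lem-eta-minimal} certifies minimality, so the degree of $\T_{k\mathrm{-lex}}$ is $\tfrac12(k-1)(k-2)$. I expect the only genuine pitfall to be the bookkeeping: ensuring each line-3 conjunct still receives its forced unique red atom after its diagonal entry is taken blue (which works because $f_1(x')<f_1(x)-\delta_1$ occurs as a separate occurrence in every conjunct $i\ge 2$), and checking the degenerate small cases $k=1,2$, where line 3 or the blue diagonal range is empty.
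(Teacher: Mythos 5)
Your proposal is correct and coincides with the paper's own proof: you construct exactly the same suitable coloring (lines 1--2 and the $i=1$ diagonal red, the diagonals $f_i(x')\le f_i(x)$ for $2\le i\le k-1$ and the line-4 atom $f_k(x')<f_k(x)-\delta_k$ blue, with $f_1(x')<f_1(x)-\delta_1$ as the red atom of every remaining conjunct), verify conditions (a)--(c) the same way, and invoke \autoref{lem-eta-minimal} with $k-1$ blue atoms against $k$ components to get minimality and the count $\tfrac{1}{2}(k-1)(k-2)$. Your extra care with the edge direction in the coloring graph and the degenerate cases $k=1,2$ is sound but not a departure from the paper's argument.
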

\begin{proof}
This proof is similar to the proof of \autoref{lem-rft-multiphase-deg}
since the lexicographic termination has the same dependency graph as
the multiphase template.  Consider the following coloring $\eta$.
\begin{align*}
&\bigwedge_{i=1}^{k} {\color{myred} \delta_i > 0} \\
\land\; &\bigwedge_{i=1}^k {\color{myred} f_i(x) > 0} \\
\land\; &{\color{myred} f_1(x') \leq f_1(x)} \\
\land\; &\bigwedge_{i=2}^{k-1} \Big( {\color{myblue} f_i(x') \leq f_i(x)}
  \;\lor\; {\color{myred} f_1(x') < f_1(x) - \delta_1}
  \;\lor\; \bigvee_{j=2}^{i-1} f_j(x') < f_j(x) - \delta_j \Big) \\
\land\; &\Big( {\color{myblue} f_k(x') < f_k(x) - \delta_k}
  \;\lor\; {\color{myred} f_1(x') < f_1(x) - \delta_1}
  \;\lor\; \bigvee_{i=2}^{k-1} f_i(x') < f_i(x) - \delta_i
  \Big)
\end{align*}
We check the requirements of \autoref{def-suitable-coloring}:
\begin{enumerate}[a)]
\item Every conjunct contains exactly one red atom.
\item $\{ f_i, \delta_i \}$ are the
  connected components of \ref{eq-rft-lex} according
  to \autoref{ex-rft-dg}.
\item The coloring graph $G_\eta$ has the edges $([f_i], [f_1])$ for
  all $i > 1$.  Hence it is acyclic.
\end{enumerate}
We conclude that $\eta$ is a suitable coloring of \ref{eq-rft-lex}.
\begin{align*}
\deg_{\T_{k\mathrm{-lex}}}(\eta)
  = \frac{k(k + 5)}{2} - (k - 1) - 3k
  = \frac{(k - 1)(k - 2)}{2}.
\end{align*}
The number of blue colored atoms is $k - 1$ and the number of
connected components is $k$, therefore, by \autoref{lem-eta-minimal},
this degree is minimal.
\end{proof}

\begin{example}\label{ex-rft-lex-2}
The program $\prog_\mathrm{gcd}$ from \autoref{ex-rft-lex}
instantiates the 2-lexicographic template.  The two invariants
$y_1 \geq 1$ and $y_2 \geq 1$ are non-decreasing.  The coloring $\eta$
from \autoref{lem-rft-lex-deg} has degree
\begin{align*}
\deg_{\T_{2\mathrm{-lex}}}(\eta) = \frac{1}{2} \cdot 1 \cdot 0 = 0.
\end{align*}
Consequently, by \autoref{cor-linear}, the generated constraints are
linear if we are considering non-decreasing invariants, and otherwise
\begin{align*}
\#L' = \#L \cdot \#I \cdot \#M = 1 \cdot 6 \cdot 2 = 12
\end{align*}
by \autoref{thm-nl-6}.
\end{example}

\section{Overview}
\label{sec-rft-overview}

Linear ranking function templates use affine-linear function variables
to synthesize a termination argument.  When constructing a ranking
function from the assignment to these function variables, the ordinal
ranking equivalents (\autoref{def-ordinal-ranking}) of these linear
functions turn out to be central components.  The image of the ranking
functions constructed from the ordinal ranking equivalents is an
ordinal, namely the \emph{ranking structure} of $\T$.

The following table gives an overview of the presented ranking
function templates: the affine template, the $k$-phase template, the
$k$-piece template and the $k$-lexicographic template.  We state the
number of conjuncts and atoms when written in CNF, the number of
connected components in their dependency graph, their ranking
structure and degree (as proven in \autoref{sec-nl-rfts}).
\begin{center}
\renewcommand{\arraystretch}{1.5} 
\begin{tabular}{lcccc}
& \ref{eq-rft-affine} & \ref{eq-rft-multiphase} & \ref{eq-rft-pw}
& \ref{eq-rft-lex} \\
\hline
Conjuncts & $3$ & $2k + 1$ & $k^2 + k + 2$ & $3k$ \\
Atoms & $3$ & $\frac{1}{2}k(k + 5)$ & $3k^2 + 2k + 1$ & $\frac{1}{2}k(k + 5)$ \\
Connected comp. & $1$ & $k$ & $k + 1$ & $k$ \\
Ranking structure & $\omega$ & $\omega \cdot k$ & $\omega$ & $\omega^k$ \\
Degree & $0$ & $\frac{1}{2}k(k - 1)$ & $2k^2 - 1$ &
$\frac{1}{2}(k - 1)(k - 2)$
\end{tabular}
\end{center}


\clearpage
\chapter{Solving the Constraints}
\label{ch-solving}

In \autoref{ch-constraints} we constructed constraints that are
satisfiable for a given linear lasso program only if there exists a
ranking function of a specialized form.  These constraints are of the
existential fragment of non-linear real arithmetic.  In this chapter
we want to discuss strategies available for solving these constraints.

Since Tarski published the first decision procedure for the first
order theory of the reals~\cite{Tarski51}, several other algorithms
have been proposed (an overview can be found in Grant Passmore's PhD
thesis~\cite{Passmore11}).  The Grigor’ev--Vorobjov--Theorem states
that, in theory, the existential fragment of non-linear real
arithmetic can be solved in single exponential time~\cite{GV88}.
However, this bound seems to be only of limited practical
relevance \cite{Hong91}.
\emph{Cylindrical algebraic decomposition} (CAD) is most successful
in practice, despite its doubly exponential worst case complexity
bound.  This is still an active area of research and recently
significant progress has been achieved in terms of running
time~\cite{JM12}.

For Farkas' Lemma based constraints, specialized solving algorithms
have been thought out, e.g. under-approximation using
heuristics~\cite{CSS03,SSM04} and bisection search combined with
linear constraint solving~\cite{BMS05linrank}.  However, we found it
is both feasible and practical to utilize an off-the-shelf SMT solver
to find a solution to small and medium-sized examples.  Nonetheless,
in this section we want to examine the CAD algorithm for our case in
detail and discuss some possible runtime mitigation.  The goal is to
motivate that while the generated constraints are indeed non-linear,
for most practical cases they are still not vastly more difficult to
solve than a linear constraint.

After a brief introduction to cylindrical algebraic decomposition
in \autoref{sec-cad-intro}, we exemplary solve the constraints
corresponding to one invariant consecution in \autoref{sec-cad}.  We
show that invariants that depend only on a constant number of loop
inequalities correspond to solutions of the non-linear system that can
be discovered in polynomial time.

\section{Introduction to CAD}
\label{sec-cad-intro}

Cylindrical algebraic decomposition was first presented by
Collins~\cite{Collins75}.  We give a brief introduction by example
based on Jirstrand's technical report~\cite{Jirstrand95}.

The goal is to find a partition of $\mathbb{R}^n$ such that the given
set of polynomials have constant sign on each component.  These
components are finitely represented by single points; satisfiability
of the non-linear constraints can be decided by checking these
representative points.

\begin{example}\label{ex-cad-intro}
Consider the following system of polynomial equations in the two
variables $x$ and $y$.
\begin{align}
\begin{aligned}
x^2 + y^2 - 2 &< 0 \\
x^3 - y^2 &= 0
\end{aligned}
\label{eq-cad-intro}
\end{align}
The two involved polynomials are $p_1(x, y) = x^2 + y^2 - 2$ and
$p_2(x, y) = x^3 - y^2$.
\end{example}

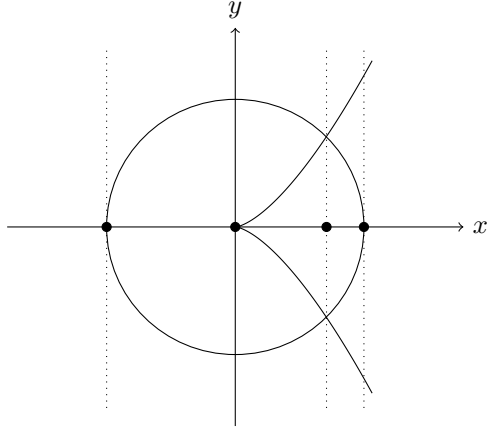
\begin{figure}[ht]
\begin{center}
\begin{tikzpicture}[scale=1.2]
\draw[->] (-2.5, 0) -- (2.5,0) node[right] {$x$};
\draw[->] (0, -2.2) -- (0,2.2) node[above] {$y$};

\draw[domain=0:1.5,samples=50] plot (\x,{sqrt(\x*\x*\x)});
\draw[domain=0:1.5,samples=50] plot (\x,{-sqrt(\x*\x*\x)});

\draw (0,0) circle (1.41);

\draw[dotted] (1,-2) -- (1, 2);
\draw[dotted] (1.41,-2) -- (1.41,2);
\draw[dotted] (-1.41,-2) -- (-1.41,2);
\draw[fill] (-1.41, 0) circle (0.05);
\draw[fill] (0, 0) circle (0.05);
\draw[fill] (1, 0) circle (0.05);
\draw[fill] (1.41, 0) circle (0.05);
\end{tikzpicture}
\end{center}
\caption{The zero sets of the polynomials in \eqref{eq-cad-intro} and
some projections on the $x$-axis (solid dots).
}\label{fig-cad-intro}
\end{figure}

\noindent
The algorithm works in three phases:
\begin{enumerate}
\item \emph{The projection phase:} all points of zero sets
corresponding to vertical tangents, singularities and intersections
are projected to the lower dimension eliminating one variable.
\item \emph{The base phase:} For mono-variant polynomials, all roots
can be enumerated and we thus get a sign-invariant decomposition of
$\mathbb{R}^1$.
\item \emph{The extension phase:} The technique of the base phase is
applied recursively to lift the sign invariant decomposition from
$\mathbb{R}^i$ to $\mathbb{R}^{i+1}$.
\end{enumerate}
In \autoref{ex-cad-intro}, projection to the $x$-axis yields the solid
dots as depicted in \autoref{fig-cad-intro}.  These are the two points
$(-\sqrt{2}, 0)$ and $(\sqrt{2}, 0)$ corresponding to vertical
tangents of $p_1$, as well as $(0, 0)$, the singularity of $p_2$, and
finally $(1, 0)$, the projection of the intersection points $(1, 1)$
and $(1, -1)$ of $p_1$ and $p_2$.
This gives us the decomposition on the $x$-axis defined by these four
points as well as the intervals inbetween them.  We can evaluate the
signs of the polynomials $p_1$ and $p_2$ on these regions:
\begin{center}
\small
\setlength{\tabcolsep}{0.4em}
\begin{tabular}{r|ccccccccc}
$x$ & $(-\infty, -\sqrt{2})$ & $-\sqrt{2}$ & $(-\sqrt{2}, 0)$ & $0$
  & $(0, 1)$ & $1$ & $(1, \sqrt{2})$ & $\sqrt{2}$
  & $(\sqrt{2}, \infty)$ \\
\hline
$\mathrm{sign}(p_1(x))$ & $+$ & $0$ & $-$ & $-$ & $-$ & $-$ & $-$
  & $0$ & $+$ \\
$\mathrm{sign}(p_2(x))$ & $-$ & $-$ & $-$ & $0$ & $+$ & $+$ & $+$
  & $+$ & $+$
\end{tabular}
\end{center}
Over each region, we can calculate recursively the decomposition of
$\mathbb{R}^2$ and check for points that satisfy the system of
inequalities \eqref{eq-cad-intro}; in our example, the point $(0, 0)$
is a possible solution.

For the precise definition of the projection phase, along with the
required notion of \emph{principal subresultant coefficients}
($\mathrm{psc}$) used in the proof of \autoref{lem-cad-form}, we
reference Jirstrand's report~\cite{Jirstrand95} as this would go far
beyond the scope of this work.

\section{Solving with CAD}
\label{sec-cad}

As an illustration we discuss the CAD of the invariant consecution.
In theory the arguments offered here also apply to the complete
constraints, although admittedly additional difficulties arise due to
the simultaneous presence of more than one variable that occurs in
non-linear operations.

We start with a single invariant consecution \eqref{eq-ic4}:
\begin{align}
\forall x, x'.\; \psi_\ell(x) \land \loopt_m(x, x')
  \rightarrow \psi_\ell(x')
\label{eq-cad1}
\end{align}
For clarity, we drop the indices $m$ of $\loopt$ and $\ell$ of
$\psi$.  Recall that we write
\begin{align*}
\loopt(x, x') &\equiv Ax \leq b \;\land\; A'x < b', \\
\psi(x) &\equiv \tr{s} x + t \geq 0.
\end{align*}
After applying \hyperref[thm-motzkin]{Motzkin's Theorem} in
transformation step 5, we get \eqref{eq-ic5} and fixing the value of
$\chi_2$ according to \autoref{thm-omit-quantifiers}, the
constraints \eqref{eq-ic6} corresponding to \eqref{eq-cad1} are
\begin{align}
\begin{aligned}
\exists \lambda, \chi, \mu \geq 0.\;
&\tr{\lambda} A + \tr{\mu} A' + \tr{\abovebelow{0}{s}}
  - \chi\tr{\abovebelow{s}{0}} = 0 \\
&\land\; \tr{\lambda} c + \tr{\mu} c' + (1 - \chi) t \leq 0 \\
&\land\; (\tr{\lambda} c - \chi t < 0 \;\lor\; \sum \mu > 0)
\end{aligned}\label{eq-cad2}
\end{align}
We are solving for the vectors $\lambda, \mu$, $s \in \mathbb{K}^n$,
and the variables $\chi \in \mathbb{K}$ and $t \in \mathbb{K}$.  The
matrices $A, A'$ and vectors $c, c'$ are constant.  Hence the only
non-linear terms are $\chi \tr{\abovebelow{s}{0}}$ and $\chi t$.  If
we find an assignment for $\chi$, the constraints become linear and
thus can be solved using an SMT solver for linear arithmetic, which
have polynomial runtime complexity~\cite{Schrijver99}.

For simplification, we focus on the first disjunct in \eqref{eq-cad2}
(classical case), since the other case is structurally very similar.
Let $A = (a_{i,j})$, $\lambda = \tr{(\lambda_1 \ldots \lambda_m)},$
and $s = \tr{(s_1 \ldots s_n)}$.  We write \eqref{eq-cad2} explicitly:
\begin{align}
\chi &\geq 0
\label{eq-cad-k0} \\
\lambda_i &\geq 0, &\text{for } 1 \leq i \leq m
\label{eq-cad-k1} \\
\sum_{i=1}^m a_{i,j} \lambda_i - \chi s_j &= 0, &\text{for } 1 \leq j \leq n
\label{eq-cad-k2} \\
\sum_{i=1}^m a_{i,n+j} \lambda_i + s_j &= 0, &\text{for } 1 \leq j \leq n
\label{eq-cad-k3} \\
\sum_{i=1}^m c_i \lambda_i - (1 - \chi) t &< 0
\label{eq-cad-k4}
\end{align}
We solve \eqref{eq-cad-k3} for $s_j$ and eliminate $s_j$
from \eqref{eq-cad-k2} yielding the following system of equations.
\begin{align}
\chi &\geq 0 \label{eq-cad-l0} \\
\lambda_i &\geq 0, &\text{for } 1 \leq i \leq m \label{eq-cad-l1} \\
\sum_{i=1}^m (a_{i,j} + a_{i,n+j} \chi) \lambda_i &= 0,
&\text{for } 1 \leq j \leq n
\label{eq-cad-l2} \\
\sum_{i=1}^m c_i \lambda_i - (1 - \chi) t &\leq 0
\label{eq-cad-l4}
\end{align}

We ignore the constraint \eqref{eq-cad-l4} because in the case where
$\chi \neq 1$, we can always assign $t$ such that this inequality
holds.  For the projection we choose the ordering
$\lambda_1, \ldots, \lambda_m, \chi$.  The set of polynomials for the
CAD is
\begin{align}
P = \{ \chi, \lambda_i \mid 0 \leq i \leq m \}
\cup \Big\{ \sum_{i=1}^m  (a_{i,n+j} \chi + a_{i,j}) \lambda_i
  \mid 0 \leq j \leq n \Big\}.
\label{eq-cad-poly}
\end{align}
In particular, the coefficient polynomials to $\lambda_i$
in \eqref{eq-cad-poly} are linear in $\chi$.

\begin{lemma}\label{lem-cad-form}
In every projection step $k$ in the CAD of \eqref{eq-cad-poly}, the
set of polynomials
\begin{align}
P_k = \{ \lambda_i \mid k \leq i \leq m \}
\;\cup\; \Big\{ \sum_{i=k}^m p_{i,j}(\chi) \lambda_i
  \mid j \in J_k \Big\}
\;\cup\; \{ q_\ell(\chi) \mid \ell \in L_k \}
\label{eq-cad-form}
\end{align}
for suitable index sets $J$ and $L$.  The polynomials $p_{i,j}$ and
$q_\ell$ only involve the variable $\chi$.
\end{lemma}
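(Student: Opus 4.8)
The plan is to argue by induction on the projection step $k$, using throughout that every polynomial in \eqref{eq-cad-poly} has degree at most one in each of the variables $\lambda_1, \ldots, \lambda_m$. For the base case $k = 1$ the set $P_1$ is precisely $P$ from \eqref{eq-cad-poly}: it consists of the monomials $\lambda_i$ and $\chi$ together with the linear forms $\sum_{i=1}^m (a_{i,n+j}\chi + a_{i,j})\lambda_i$, which already match \eqref{eq-cad-form} with $p_{i,j}(\chi) = a_{i,n+j}\chi + a_{i,j}$ and with $\chi$ among the $q_\ell$.

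For the inductive step I would assume $P_k$ has the form \eqref{eq-cad-form} and inspect the polynomials that the projection operator produces when eliminating $\lambda_k$. The key structural fact to verify first is that every polynomial in $P_k$ is \emph{linear} in $\lambda_k$: the variables $\lambda_i$ with $i > k$ and the pure-$\chi$ polynomials $q_\ell$ do not contain $\lambda_k$ at all, while $\lambda_k$ itself and each $\sum_{i=k}^m p_{i,j}(\chi)\lambda_i$ have degree exactly one. Since the projection operator is assembled from leading coefficients, reducta, discriminants, and principal subresultant coefficients of pairs, this linearity collapses all of them to easily controlled expressions.

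Concretely, I would check each contribution separately. The coefficient of $\lambda_k$ in $\lambda_k$ is $1$ and in $\sum_{i=k}^m p_{i,j}\lambda_i$ is $p_{k,j}(\chi)$; both are polynomials in $\chi$ alone and so join the $q_\ell$ family. No discriminant contributes, as no polynomial has degree $\geq 2$ in $\lambda_k$. For a pair of linear forms $p = p_k\lambda_k + \sum_{i>k}p_i\lambda_i$ and $q = q_k\lambda_k + \sum_{i>k}q_i\lambda_i$ the subresultant chain has a single nontrivial entry, whose principal coefficient is the resultant
\begin{align*}
\mathrm{Res}_{\lambda_k}(p, q) = \sum_{i=k+1}^m \big( p_k(\chi)\, q_i(\chi) - q_k(\chi)\, p_i(\chi) \big) \lambda_i,
\end{align*}
again a linear form in $\lambda_{k+1}, \ldots, \lambda_m$ whose coefficients are polynomials in $\chi$; if one argument is a pure-$\chi$ polynomial $q_\ell$, the resultant is a power of $q_\ell$ and stays pure in $\chi$. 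Dropping the leading $\lambda_k$ term from $\sum_{i=k}^m p_{i,j}\lambda_i$ yields the reductum $\sum_{i>k} p_{i,j}\lambda_i$, already of the required shape. Thus every polynomial of $P_{k+1}$ is either free of all remaining $\lambda_i$ (hence a $q_\ell$) or a linear form in $\lambda_{k+1}, \ldots, \lambda_m$ with $\chi$-polynomial coefficients, which is exactly \eqref{eq-cad-form} at level $k+1$.

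The step I expect to be the main obstacle is pinning down that linearity in the eliminated variable is genuinely inherited at each level: it is this property alone that turns the subresultant computation into a single $2 \times 2$ determinant and prevents products $\lambda_i \lambda_j$ from ever arising. I would stress that although the $\chi$-degrees of the coefficients $p_{i,j}$ grow under projection (each resultant multiplies previous coefficients together), the lemma asserts only that they remain polynomials in $\chi$, so this growth is harmless; what must be controlled is that no $\lambda$ enters a coefficient and that no $\lambda$ ever attains degree above one.
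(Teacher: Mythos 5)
Your proposal is correct and takes essentially the same approach as the paper: induction over the projection steps, exploiting linearity in the eliminated variable $\lambda_k$ to show that each component of the projection operator (reducta and leading coefficients, discriminant-type entries, pairwise principal subresultant coefficients) stays within the form \eqref{eq-cad-form}. The only difference is cosmetic — you write the pairwise elimination as the raw resultant $\sum_{i>k}\bigl(p_k(\chi)q_i(\chi) - q_k(\chi)p_i(\chi)\bigr)\lambda_i$ while the paper uses an lcm-normalized combination of the two linear forms; both eliminate $\lambda_k$ and preserve the required structure, and your explicit treatment of the pure-$\chi$ pairs and of the harmless growth of $\chi$-degrees is if anything slightly more careful than the paper's.
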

\begin{proof}
We proceed inductively.  Clearly \eqref{eq-cad-poly} satisfies this
criterion.  Consider the projection of $\lambda_k$.
\begin{itemize}
\item For every polynomial $p \in P_k$, we take $p(\lambda_k = 0)$.
  This yields $\lambda_{k+1}, \ldots, \lambda_m$, $\sum_{i=k+1}^m
  p_{i,j}(\chi) \lambda_i$ and preserves $q_\ell(\chi)$.
\item For every polynomial $p \in P_k$, we calculate
  $\mathrm{psc}_{\lambda_k}(p, \frac{\partial p}{\partial\lambda_k})$.
  Since $p$ is linear in $\lambda_k$, this yields the same results as
  the previous step.
\item For every pair of polynomials $p_{j_1}, p_{j_2} \in P_k$, we calculate
\begin{align*}
    &\mathrm{psc}_{\lambda_k}(p_1, p_2) \\
=\; &\mathrm{psc}_{\lambda_k}\big( \sum_{i=k}^m p_{i,j_1}(\chi) \lambda_i,
  \sum_{i=k}^m p_{i,j_2}(\chi) \lambda_i \big) \\
=\; &\sum_{i=k+1}^m \Big(
  \frac{\mathrm{lcm}(p_{k,j_1}, p_{k, j_2})}{p_{k,j_1}} p_{i,j_1}(\chi)
  - \frac{\mathrm{lcm}(p_{k,j_1}, p_{k, j_2})}{p_{k,j_2}} p_{i,j_2}(\chi)
  \Big) \lambda_i.
\end{align*}
  This is again of the form given in \eqref{eq-cad-form}. \qedhere
\end{itemize}
\end{proof}

According to \autoref{lem-cad-form}, after $m$ projection steps we are
left with a set of polynomials $P_m = \{ q_\ell(\chi) \mid \ell \in
L_m \}$ dependent only on the variable $\chi$.

\begin{lemma}\label{lem-cad-result}
The following holds for $P_m$.
\begin{enumerate}[I.]
\item The degree of any $q_\ell \in P_m$ is at most $2^m$.
\item $\#P_m \leq {n^2}^m$.
\item The number of distinct roots of all $q_\ell \in P_m$ is bounded
  by $2^m n^{2^m}$.
\end{enumerate}
\end{lemma}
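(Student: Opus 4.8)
The plan is to push the two structural facts established in \autoref{lem-cad-form} through the $m$ projection steps, controlling separately the $\chi$-degree of the surviving polynomials (part I) and their number (part II); part III is then just the product of these two bounds. Both I and II are proved by induction on the projection step $k$, and in each case the inductive step is read off directly from the subresultant formula appearing in the proof of \autoref{lem-cad-form}.

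For part I, I would prove by induction that every coefficient $p_{i,j}(\chi)$ and every pure term $q_\ell(\chi)$ occurring in $P_k$ has $\chi$-degree at most $2^k$. The base case is \eqref{eq-cad-poly}, whose coefficients $a_{i,n+j}\chi + a_{i,j}$ are affine, hence of degree $1 = 2^0$. For the step, the new coefficient of $\lambda_i$ produced when projecting $\lambda_k$ is, up to cancellation of a common factor, $p_{k,j_2}\,p_{i,j_1} - p_{k,j_1}\,p_{i,j_2}$; since dividing out a common factor can only lower the degree, this is a difference of products of degree at most $2 \cdot 2^k = 2^{k+1}$. The leading coefficients with respect to $\lambda_k$, which enter the projection set as new pure-$\chi$ polynomials, are among the $p_{k,j}$ and so already satisfy the degree bound at the step in which they are created. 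Iterating to $k = m$ gives $\deg q_\ell \le 2^m$.

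For part II, I would bound $N_k := \#P_k$ by the same induction. The projection operator contributes, per step, the polynomials obtained from each element (setting $\lambda_k = 0$, taking leading coefficients, and the self-subresultant, which is trivial here since the polynomials are linear in $\lambda_k$) together with one subresultant per unordered pair. The pairwise term dominates and yields the recurrence $N_{k+1} \le N_k^2$; starting from a set of size on the order of $n$ in \eqref{eq-cad-poly} and iterating $m$ times produces the stated bound $n^{2^m}$. Part III then follows at once: each of the at most $n^{2^m}$ polynomials in $P_m$ is univariate of degree at most $2^m$, so the union of their root sets has at most $2^m\,n^{2^m}$ elements.

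I expect the degree bookkeeping in part I to be the delicate point, specifically verifying that the lcm-division in the subresultant expression never raises the $\chi$-degree and that every leading coefficient split off during projection is accounted for within the doubling bound rather than growing independently. Once the recurrence $N_{k+1} \le N_k^2$ is fixed precisely from the chosen form of the projection operator, the counting in part II and the combination giving part III are routine.
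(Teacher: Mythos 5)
Your proposal follows the same route as the paper's proof: degree doubling per projection step (because the lcm's degree is at most the degree of the product), squaring of the polynomial count per step from the pairwise subresultants, and part III as the product of the two bounds. The paper's own argument is just a terser statement of exactly this induction, so your more detailed bookkeeping fills in rather than departs from it.
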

\begin{proof} ~
\begin{enumerate}[I.]
\item Initially, all polynomials are linear. In every projection step,
  the maximum degree of polynomials can at most double as the least
  common multiple's degree is less or equal to the degree of the
  product.
\item Since each polynomial is connected with every other polynomial,
  the number of distinct new polynomials is at most squared in every
  projection step.
\item Follows directly from I and II. \qedhere
\end{enumerate}
\end{proof}

\noindent
More importantly, the number of projection steps depend on the number
of $\lambda$-variables $m$.  \hyperref[thm-motzkin]{Motzkin's Theorem}
introduces one $\lambda$-variable for every inequality
in \eqref{eq-motzkin1}.  Therefore the runtime scales with the number
of statements relevant to prove the invariant consecution.  We assume
that in practice, only a small (maybe even constant) number of
inequalities is required to prove an invariant.  This greatly reduces
the bound in \autoref{lem-cad-result} III.

\begin{theorem}\label{thm-constraints-polynomial}
Invariants that depend only on a constant number of loop inequalities
can be discovered in polynomial time.
\end{theorem}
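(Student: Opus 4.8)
The plan is to combine the structural CAD analysis of \autoref{lem-cad-form} and \autoref{lem-cad-result} with the key observation following \eqref{eq-cad2} that fixing the single non-linear variable $\chi$ renders the entire constraint linear. The hypothesis that the invariant ``depends only on a constant number of loop inequalities'' is exactly the statement that the number $m$ of Motzkin coefficients $\lambda_i$ introduced for the loop transition is bounded by a constant; equivalently, all but constantly many $\lambda_i$ may be fixed to zero. I would state this reduction first, so that $m$ is treated as a constant throughout the argument. For clarity I would focus on the classical case (first disjunct of \eqref{eq-cad2}) and note at the end that the non-classical case is handled by the same reasoning on a structurally identical polynomial system.

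First I would run the projection phase of the CAD for the polynomial set \eqref{eq-cad-poly} using the ordering $\lambda_1, \ldots, \lambda_m, \chi$, eliminating the $\lambda$-variables one at a time. By \autoref{lem-cad-form}, after the $m$-th projection step only the univariate set $P_m = \{ q_\ell(\chi) \mid \ell \in L_m \}$ remains. By \autoref{lem-cad-result}, each $q_\ell$ has degree at most $2^m$ and there are at most $n^{2^m}$ of them, so their number of distinct real roots is at most $2^m n^{2^m}$. Since $m$ is constant, $2^m$ is a constant and this bound is polynomial in the number $n$ of program variables. These roots partition the $\chi$-axis into polynomially many sign-invariant cells.

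Next I would pick one representative value of $\chi$ from each cell, i.e. each root together with each open interval between consecutive roots and the two unbounded rays. For each such value, substituting it into \eqref{eq-cad2} eliminates the only non-linear products $\chi\tr{\abovebelow{s}{0}}$ and $\chi t$, leaving a purely linear feasibility problem in $\lambda, \mu, s, t$, which can be decided and solved in polynomial time~\cite{Schrijver99}. If some representative yields a feasible linear system, its solution furnishes an explicit invariant $\tr{s}x + t \geq 0$ satisfying the consecution; otherwise no invariant of the prescribed form exists. As there are polynomially many cells and each linear solve is polynomial, the whole procedure runs in polynomial time.

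The hard part will be justifying that testing one representative per cell suffices, rather than searching a continuum of $\chi$-values. This is precisely the correctness guarantee of cylindrical algebraic decomposition: the projection polynomials in $P_m$ are designed to capture every value of $\chi$ at which the feasibility of the linear system in the remaining variables can change, so feasibility is constant on each cell and is therefore correctly decided by any sample point within it. I would appeal to the properties of the principal subresultant coefficients established in~\cite{Jirstrand95}, already invoked in the proof of \autoref{lem-cad-form}, to argue that the sign-invariant decomposition of the $\chi$-axis is delineating for the full system, so that lifting back to the $\lambda$-variables in the extension phase never introduces a cell boundary missed by $P_m$.
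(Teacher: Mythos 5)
Your CAD machinery (projection down to a univariate polynomial set in $\chi$, one representative value per sign-invariant cell, then a polynomial-time linear solve for each candidate $\chi$) matches the paper's argument, but there is a genuine gap at the very first step: the hypothesis of the theorem is \emph{not} that the number $m$ of loop inequalities is constant, and the ``reduction'' you invoke to treat $m$ as a constant does not exist as stated. The hypothesis says that some solution uses only $e$ of the $m$ inequalities --- that is, all but $e$ of the Motzkin coefficients $\lambda_i$ can be taken to be zero for a constant $e$ --- while $m$ itself remains part of the input and may be large. Since you do not know \emph{which} $e$ inequalities the invariant depends on, you cannot simply discard the others; and if you instead run the CAD on the full system \eqref{eq-cad-poly} with all $m$ variables $\lambda_1, \ldots, \lambda_m$, the bound of \autoref{lem-cad-result} gives up to $2^m n^{2^m}$ candidate values of $\chi$, which is exponential in the input size, so the procedure you describe is not polynomial.

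The missing idea, which is the combinatorial heart of the paper's proof, is an enumeration over subsets: there are $\binom{m}{e} \leq m^e$ ways to choose which $e$ inequalities the invariant may depend on, and this count is polynomial in $m$ because $e$ is constant. For each choice one fixes the remaining $\lambda_i$ to zero, runs the CAD on the resulting system with only $e$ many $\lambda$-variables --- giving at most $2^e n^{2^e}$, i.e.\ polynomially many, candidate values of $\chi$ --- and then solves the linear constraints \eqref{eq-cad2} for each candidate. Everything after that point in your write-up (one representative per cell, the appeal to the delineability properties of the principal subresultant coefficients from~\cite{Jirstrand95}, the polynomial-time linear solves) is sound, and is in fact somewhat more careful than the paper about \emph{why} sample points suffice; it just has to sit inside this outer loop over the $m^e$ subsets in order to prove the theorem as stated rather than the weaker claim for loops with constantly many inequalities.
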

\begin{proof}
Let $e$ be the bound on the required loop inequalities.  There are
$\binom{m}{e} \leq m^e$ possibilities to select $e$ of the $m$
inequalities.  We search for a solution to \eqref{eq-cad1} by applying
CAD to \eqref{eq-cad-l1} and \eqref{eq-cad-l2}.  The result is
described by \autoref{lem-cad-form} and \autoref{lem-cad-result}
states the bound $2^e n^{2^e}$ for distinct values of $\chi$.  Given
possible assignments to $\chi$, we plug every one of these into the
constraints \eqref{eq-cad2} and solve using a solver for linear
arithmetic.  Satisfiability for linear arithmetic is decidable in
polynomial time~\cite{Schrijver99} and we only have polynomially many
values for $\chi$ to try since $e$ is constant.
\end{proof}

Although \autoref{thm-constraints-polynomial} gives a polynomial
algorithm for solving the constraints, it is not practical.  It would
be a great deal more efficient to follow the CAD algorithm in
constructing the solution, which has been omitted here for simplicity
of presentation.  Additionally, there is no good reason not to enlarge
$e$ to take more loop inequalities into consideration until a
predefined time limit runs out.


\clearpage
\chapter{Conclusion}
\label{ch-conclusion}

The scope of this work is a new method for synthesizing termination
arguments for linear lasso programs.  This method generalizes existing
methods and extends them in various ways.
In \autoref{sec-related-work} we elaborated on how our method relates
to existing research.

We introduced the notion of \emph{ranking function templates}
in \autoref{ch-templates} and discussed the affine, multiphase,
piecewise and lexicographic templates in detail.  For the
affine-linear functions used in the ranking function templates, we
introduced the notion of \emph{ordinal ranking equivalent} in order to
naturally build ranking functions using ordinal arithmetic from
assignments for the template's variables and functions symbols.

The \emph{multiphase ranking function} is a novel type of ranking
function and received some more detailed investigation.  We showed
that there are conjunctive linear lasso programs that do not have a
multiphase ranking function (\autoref{ex-multiphase-rotation}) and we
showed that the existence of a multiphase ranking function does not
entail information about the program's complexity
(\autoref{ex-multiphase-complexity}).

Notable formal results in this work are the undecidability proof for
termination of linear lasso programs (\autoref{thm-halting}) and the
theorem regarding the removal of quantifiers in our constraints
(\autoref{thm-omit-quantifiers}).

Other contributions include the soundness and completeness statements
for our method (\autoref{thm-soundness}
and \autoref{thm-completeness}), the discourse about the treatment of
mixed integer variable domains (\autoref{sec-integers}), the
assessment of the non-linear dimension before and after our
transformations to the constraints
(\autoref{thm-nl-0}, \autoref{thm-nl-6},
\autoref{cor-nl-6} and \autoref{cor-linear}) and the motivation why
solving the resulting constraints is not necessarily very difficult
(\autoref{sec-cad}).  An overview over the ranking function templates
we consider and their properties can be found
in \autoref{sec-rft-overview}, including their non-linear dimension
and their ordinal ranking structure.

\section{Future Work}

For future work, it would be interesting to see new ranking function
templates.  There certainly are more types of ranking functions that
can be formalized by ranking function templates.  One could
investigate the use of affine-linear, multiphase, piecewise and
lexicographic templates as a `construction kit'.  For example, they
could be combined to more general templates by replacing single
affine-linear functions in the lexicographic template by a piecewise
or multiphase ranking function.

Furthermore, it seems vital to implement our method and try it on real
world examples.  Only experimental evaluation will tell which ranking
function templates are both computationally feasible and practically
relevant.  Ideally, our method would be used in conjunction with a
tool that is able to combine termination arguments for lasso programs
to a termination argument for a complex program.

Moreover, the selection of a ranking function template is not part of
our method.  A heuristic could be devised that intelligently suggests
a template by looking at the program code.  This could reduce required
human interaction and/or speed up the termination argument synthesis.

Our method is not complete on integer lasso programs as discussed
in \autoref{sec-integers}.  Possibly there is a way of making the
polyhedra integral even though they contain free variables.
Otherwise, a different approach for integers needs to be developed.
As integer variables are extremely common in real life code, this
topic requests further attention.

The complexity of our method is centrally determined by the complexity
of non-linear algebraic constraint solving.  Any progress being made
in this field improves the applicability of our method.  Non-linear
constraint solving is an active area of research and recent
progress~\cite{JM12} suggests that algorithmic improvements are not
yet exhausted.

Another plot line unfinished is the decidability of the termination of
conjunctive linear lasso programs.  We conjectured
in \autoref{conj-lasso} that this is decidable, but a proof remains
due.

\subsection*{Acknowledgements}
I would like to thank my supervisor Matthias Heizmann for all his patience and support.
The dialogue with him and his continued encouragement was invaluable to this work.
Furthermore, I extend my gratitude to Fabian Reiter and Pascal Raiola for their very helpful corrections and suggestions.


\newpage
\section*{Errata}

This is an error corrected version of the original thesis, updated last on \today.
The most important changes are:

\begin{itemize}
\item Removed Lemma 3.13 because its statement was false. Thanks go to Amir Ben-Amram for pointing this out.
\item Added missing $\xi_\ell$ in \autoref{eq-ti6}.
\item Fixed step size in \autoref{ex-running-6}.
\item Corrected the coloring of \ref{eq-rft-lex} in \autoref{lem-rft-lex-deg}.
\item Fixed Typos
\end{itemize}

\clearpage
\phantomsection
\addcontentsline{toc}{chapter}{Bibliography}
\bibliographystyle{abbrv}
\bibliography{references}

\begin{thebibliography}{10}

\bibitem{AAGP11}
E.~Albert, P.~Arenas, S.~Genaim, and G.~Puebla.
\newblock Closed-form upper bounds in static cost analysis.
\newblock {\em J. Autom. Reasoning}, 46(2):161--203, 2011.

\bibitem{BG13}
A.~M. Ben-Amram and S.~Genaim.
\newblock On the linear ranking problem for integer linear-constraint loops.
\newblock In {\em POPL}, pages 51--62, New York, NY, USA, 2013. ACM.

\bibitem{BGM12}
A.~M. Ben-amram, S.~Genaim, and A.~N. Masud.
\newblock On the termination of integer loops.
\newblock In {\em VMCAI}, pages 72--87. Springer, 2012.

\bibitem{BMS05linrank}
A.~R. Bradley, Z.~Manna, and H.~B. Sipma.
\newblock Linear ranking with reachability.
\newblock In {\em CAV}, pages 491--504. Springer, 2005.

\bibitem{BMS05polyrank}
A.~R. Bradley, Z.~Manna, and H.~B. Sipma.
\newblock The polyranking principle.
\newblock In {\em ICALP}, pages 1349--1361. Springer, 2005.

\bibitem{Braverman06}
M.~Braverman.
\newblock Termination of integer linear programs.
\newblock In {\em CAV}, pages 372--385. Springer, 2006.

\bibitem{CHK09}
P.~J. Charles, J.~M. Howe, and A.~King.
\newblock Integer polyhedra for program analysis.
\newblock In {\em AAIM}, pages 85--99. Springer, 2009.

\bibitem{Collins75}
E.~G. Collins.
\newblock Quantifier elimination for real closed fields by cylindrical
  algebraic decomposition.
\newblock In {\em Automata Theory and Formal Languages}, volume~33, pages
  134--183. Springer, 1975.

\bibitem{CSS03}
M.~A. Colón, S.~Sankaranarayanan, and H.~B. Sipma.
\newblock Linear invariant generation using non-linear constraint solving.
\newblock In {\em CAV}, pages 420--432. Springer, 2003.

\bibitem{CFKP11}
B.~Cook, J.~Fisher, E.~Krepska, and N.~Piterman.
\newblock Proving stabilization of biological systems.
\newblock In {\em VMCAI}, pages 134--149, 2011.

\bibitem{CPR06}
B.~Cook, A.~Podelski, and A.~Rybalchenko.
\newblock Terminator: Beyond safety.
\newblock In {\em CAV}, pages 415--418, 2006.

\bibitem{EFT84}
H.-D. Ebbinghaus, J.~Flum, and W.~Thomas.
\newblock {\em Mathematical Logic}.
\newblock Springer, 1984.

\bibitem{GV88}
D.~Y. Grigor'ev and J.~N.~N.~Vorobjov.
\newblock Solving systems of polynomial inequalities in subexponential time.
\newblock {\em Journal of Symbolic Computation}, 5(1-2):37--64, 1988.

\bibitem{GZ10}
S.~Gulwani and F.~Zuleger.
\newblock The reachability-bound problem.
\newblock In {\em PLDI}, pages 292--304, 2010.

\bibitem{GHMRX08}
A.~Gupta, T.~A. Henzinger, R.~Majumdar, A.~Rybalchenko, and R.-G. Xu.
\newblock Proving non-termination.
\newblock In {\em POPL}, pages 147--158, 2008.

\bibitem{HLNR10}
W.~R. Harris, A.~Lal, A.~V. Nori, and S.~K. Rajamani.
\newblock Alternation for termination.
\newblock In {\em SAS}, pages 304--319, 2010.

\bibitem{Hartmann88}
M.~Hartmann.
\newblock Cutting planes and the complexity of the integer hull.
\newblock Technical report, Ithaca, NY, USA, 1988.

\bibitem{HHLP13}
M.~Heizmann, J.~Hoenicke, J.~Leike, and A.~Podelski.
\newblock Linear ranking for linear lasso programs.
\newblock In {\em ATVA}, 2013.

\bibitem{Hong91}
H.~Hong and C.~L.
\newblock Comparison of several decision algorithms for the existential theory
  of the reals.
\newblock Technical report, 1991.

\bibitem{Jirstrand95}
M.~Jirstrand.
\newblock Cylindrical algebraic decomposition -- an introduction.
\newblock Technical report, S-581 83 Linköping, Sweden, 1995.

\bibitem{JM12}
D.~Jovanović and L.~D. Moura.
\newblock Solving non-linear arithmetic.
\newblock In {\em IJCAR}, pages 339--354. Springer, 2012.

\bibitem{KSTTW08}
D.~Kroening, N.~Sharygina, S.~Tonetta, A.~Tsitovich, and C.~M. Wintersteiger.
\newblock Loop summarization using abstract transformers.
\newblock In {\em ATVA}, pages 111--125, 2008.

\bibitem{KSTW10}
D.~Kroening, N.~Sharygina, A.~Tsitovich, and C.~M. Wintersteiger.
\newblock Termination analysis with compositional transition invariants.
\newblock In {\em CAV}, pages 89--103, 2010.

\bibitem{Kunen80}
K.~Kunen.
\newblock {\em Set Theory}.
\newblock Elsevier, 1980.

\bibitem{Minsky61}
M.~Minsky.
\newblock Recursive unsolvability of post's problem of 'tag'.
\newblock {\em Annals of Mathematics}, 74(3):437--455, 1961.

\bibitem{Passmore11}
G.~O. Passmore.
\newblock {\em Combined Decision Procedures for Nonlinear Arithmetics, Real and
  Complex}.
\newblock PhD thesis, University of Edinburgh, 2011.

\bibitem{PR04}
A.~Podelski and A.~Rybalchenko.
\newblock A complete method for the synthesis of linear ranking functions.
\newblock In {\em VMCAI}, pages 239--251. Springer, 2004.

\bibitem{PW07}
A.~Podelski and S.~Wagner.
\newblock A sound and complete proof rule for region stability of hybrid
  systems.
\newblock In {\em HSCC}, pages 750--753, 2007.

\bibitem{Rybalchenko10}
A.~Rybalchenko.
\newblock Constraint solving for program verification theory and practice by
  example.
\newblock In {\em CAV}, pages 57--71. Springer, 2010.

\bibitem{SSM04}
S.~Sankaranarayanan, H.~B. Sipma, and Z.~Manna.
\newblock Constraint-based linear-relations analysis.
\newblock In {\em SAS}, pages 53--68. Springer, 2004.

\bibitem{Schrijver99}
A.~Schrijver.
\newblock {\em Theory of linear and integer programming}.
\newblock Wiley-Interscience series in discrete mathematics and optimization.
  Wiley, 1999.

\bibitem{Tarski51}
A.~Tarski.
\newblock A decision method for elementary algebra and geometry.
\newblock Technical report, RAND Corporation, 1951.

\bibitem{Tiwari04}
A.~Tiwari.
\newblock Termination of linear programs.
\newblock In {\em CAV}, pages 70--82. Springer, 2004.

\end{thebibliography}



\end{document}